\documentclass[a4paper,abstracton,11pt]{scrartcl}
\usepackage[utf8]{inputenc}
\usepackage[T1]{fontenc}
\usepackage{xspace}
\usepackage{xcolor}
\usepackage[american]{babel}
\usepackage{amsmath}
\usepackage{amsthm}
\usepackage{thmtools}
\usepackage{graphicx}
\usepackage{float}

\usepackage{mathtools}

\newcommand\Phase[2]{\subsection{Phase \RNum{#1} -- #2}}

\DeclarePairedDelimiter{\ceil}{\lceil}{\rceil}
\DeclarePairedDelimiter{\floor}{\lfloor}{\rfloor}

\makeatletter
\def\xdot{\@ifnextchar.{}{.}}%
\makeatother

\newcommand{\ie}{i.\,e.}

\newcommand{\eg}{e.\,g.}
\newcommand{\Eg}{E.\,g.}

\DeclareMathOperator{\PROBABILITY}{P\scriptstyle{r}}
\newcommand{\Probability}[1]{\ensuremath{ \PROBABILITY\left[#1\right] }}

\DeclareMathOperator{\EXPECTED}{E}
\newcommand{\Expected}[1]{\ensuremath{ \EXPECTED\left[#1\right] }}

\newtheorem{lemma}{Lemma}
\newtheorem{definition}{Definition}
\newtheorem{theorem}{Theorem}

\declaretheoremstyle[
notefont=\bfseries, notebraces={}{},
bodyfont=\normalfont\itshape,
headformat=Lemma\NOTE%
]{nopar}
\declaretheorem[style=nopar]{lemmax}

\newcommand{\RNum}[1]{\uppercase\expandafter{\romannumeral #1\relax}}

\newcommand{\whp}{with high probability\xspace}

\newcommand{\Whp}{With high probability\xspace}

\DeclareMathOperator{\BIGO}{O}
\DeclareMathOperator{\LITTLEO}{o}
\DeclareMathOperator{\BIGOMEGA}{\Omega}
\DeclareMathOperator{\LITTLEOMEGA}{\omega}
\DeclareMathOperator{\BIGTHETA}{\Theta}

\newcommand{\BigO}[1]{\ensuremath{\BIGO\left(#1\right)}}
\newcommand{\LittleO}[1]{\ensuremath{\LITTLEO\left(#1\right)}}
\newcommand{\BigOmega}[1]{\ensuremath{\BIGOMEGA\left(#1\right)}}
\newcommand{\LittleOmega}[1]{\ensuremath{\LITTLEOMEGA\left(#1\right)}}
\newcommand{\BigTheta}[1]{\ensuremath{\BIGTHETA\left(#1\right)}}

\newcommand{\refLemma}[1]{\autoref{#1}}

\newcommand{\seeProofNL}[1]{{%
\unskip\nobreak\hfil\penalty50
\hskip2em\hbox{}\nobreak\hfil\normalfont{#1}%
\parfillskip=0pt \finalhyphendemerits=0 \par}}
\newcommand{\seeProof}[1]{\seeProofNL{(see \autoref{#1})}}

\renewcommand\seeProof[1]{}

\newcommand{\AlgoPhase}[1]{\vspace{0.5ex}\textbf{Phase \RNum{#1}}\\}

\def\ifmonospace{\ifdim\fontdimen3\font=0pt }

\def\Cpp{%
\ifmonospace%
    C++%
\else%
    C\kern-.1667em\raise.30ex\hbox{\smaller{++}}%
\fi%
\spacefactor1000 }

\usepackage{hyperref}
\usepackage{tabularx}
\usepackage{comment}

\usepackage[boxed]{algorithm2e}

\SetAlCapFnt{\footnotesize}
\SetAlCapNameFnt{\footnotesize}
\usepackage{etoolbox}
\makeatletter
\patchcmd{\algocf@latexcaption}{#3}{#3\endgraf}{}{}
\makeatother
\setlength{\algomargin}{1em}
\SetAlCapSkip{1ex}
\SetAlFnt{\small}

\hypersetup{colorlinks=true, urlcolor=black, linkcolor=black, citecolor=black}

\definecolor{darkblue}{rgb}{0,0.2,0.4}

\addto\extrasamerican{%
}%

\title{On the Influence of Graph Density on Randomized Gossiping}

\author{Robert Elsässer \and Dominik Kaaser \and \multicolumn{1}{p{0.8\textwidth}}{
\centering \normalsize{\emph{University of Salzburg \\ Department of Computer Sciences \\ 5020 Salzburg, Austria \\ \{elsa, dominik\}@cosy.sbg.ac.at}}}
}

\addtokomafont{disposition}{\rmfamily}

\date{}

\begin{document}
\maketitle
\thispagestyle{empty}
\begin{abstract}

Information dissemination is a fundamental problem in parallel and distributed
computing. In its simplest variant, known as the broadcasting problem, a
single message has to be spread among all nodes of a graph. A prominent communication
protocol for this problem is based on the so-called random phone call model
(Karp et~al., FOCS 2000). In each step, every node opens a communication
channel to a randomly chosen neighbor, which can then be used for
bi-directional communication. 
In recent years, several efficient algorithms
have been developed to solve the broadcasting problem in this model.

Motivated by replicated databases and peer-to-peer networks, Berenbrink et al.,
ICALP 2010, considered the so-called gossiping problem in the random phone call
model. There, each node starts with its own message and all messages have to be
disseminated to all nodes in the network. They showed that any $O(\log n)$-time
algorithm in complete graphs requires $\Omega(\log n)$ message transmissions
per node to complete gossiping, with high probability, while it is known that
in the case of broadcasting the average number of message transmissions per
node is $O(\log \log n)$. Furthermore, they explored different possibilities
on how to reduce the communication overhead of randomized gossiping in complete
graphs.

It is known that the $O(n \log \log n)$ bound on the number of message
transmissions produced by randomized broadcasting in complete graphs cannot be
achieved in sparse graphs even if they have best expansion and connectivity
properties. In this paper, we analyze whether a similar influence of the graph
density also holds w.r.t.~the performance of gossiping. We study analytically
and empirically the communication overhead generated by gossiping algorithms 
w.r.t.~the random phone call model in
random graphs and also consider simple modifications of the random phone call
model in these graphs. Our results indicate that, unlike in broadcasting, there
seems to be no significant difference between the performance of randomized gossiping in
complete graphs and sparse random graphs. Furthermore, our simulations illustrate
that by tuning the parameters of our algorithms, we can significantly reduce
the communication overhead compared to the traditional push-pull approach in
the graphs we consider.

\end{abstract}

\section{Introduction}
\subsection{Motivation}

Information dissemination is a fundamental problem in parallel and distributed
computing. Given a network, the goal is to spread one or several messages 
efficiently among
all nodes of the network. This problem has extensively been analyzed in
different communication models and on various graph classes. When talking about
information dissemination, we distinguish between \emph{one-to-all}
communication called broadcasting and \emph{all-to-all} communication called
gossiping. Much of the work devoted to information dissemination refers to the
broadcasting problem. That is, a distinguished node of the network possesses a
piece of information, which has to be distributed to all nodes in the system.
In gossiping, every node has its own piece of information, and all these
messages must be distributed to all other nodes in the network. Efficient
algorithms for gossiping are applied, e.g., in routing, maintaining consistency
in replicated databases, multicasting, and leader election, see
\cite{BT89,FL94,HKPRU05}.

There are two main approaches to design efficient algorithms for broadcasting
or gossiping. One way is to exploit the structural properties of the networks
the protocols are deployed on to design efficient
\emph{deterministic} schemes \cite{HKPRU05}. While the resulting protocols are
usually (almost) optimal, they are often not fault tolerant (note that there
are also deterministic schemes, which have polylogarithmic running time on the
graphs we consider and are highly robust, see \cite{Hae13}). Another approach
is to design simple randomized algorithms, which are inherently fault tolerant
and scalable. Prominent examples of such algorithms are based on the so-called
\emph{random phone call model}, which has been introduced by Demers et
al.~\cite{DGHILSSST87} and analyzed in detail by Karp et al.~\cite{KSSV00}. The
algorithms in this model are synchronous, i.e., the nodes act in synchronous
steps. In each step every node opens a communication channel to a randomly
chosen neighbor. The channel can then be used for bi-directional communication
to exchange messages between the corresponding nodes. It is assumed
that the nodes may decide which messages to send (they are allowed to send none
of their messages in some step), and are able to combine several messages to
one single packet, which can then be sent through a channel. Clearly, one
question is how to count the message complexity if several pieces of
information are contained in such a packet; we will come back to this question
later.

Karp et al.~motivated their work with consistency issues in replicated
databases, in which frequent updates occur. These updates must be disseminated
to all nodes in the network to keep the database consistent. They
analyzed the running time and number of message transmissions produced by
so-called push and pull algorithms w.r.t.~one single message in complete
graphs. In order to determine the communication overhead, they counted the
number of transmissions of this message through the links in the network. 
They argued that since updates occur frequently nodes have to open
communication channels in each step anyway. Thus, the cost of opening
communication channels amortizes over the total number of message
transmissions.

Motivated by the application above, Berenbrink et al.~considered the gossiping
problem \cite{BCEG10}. They assume that sending a packet through an open
channel is counted once, no matter how many messages are contained in this
packet. However, nodes may decide not to open a channel in a step, while
opening a communication channel is also counted for the communication
complexity. The first assumption is certainly unrealistic in scenarios, in which
all original messages of the nodes have to be disseminated to all other nodes; although 
network coding might overcome the inpracticability of this assumption 
in certain applications (see e.g.~\cite{Hae12}). On the
other side, in the case of leader election, aggregate computation
(e.g.~computing the minimum or the average), or consensus the assumption above
might be feasible, since then the size of the exchanged messages can
asymptotically be bounded to the size of a single message.

The algorithms developed so far in the random phone call model use so-called
push and pull transmissions. As described above, the nodes open communication
channels to some (randomly) selected neighbors. If a message is sent from the
node which called on a neighbor and initiated the communication, then we talk
about a push transmission (w.r.t.~that message). If the message is transmitted
from the called node to the one that opened the channel, then we talk about a
pull transmission. 

Although the time complexity has extensively been analyzed on various networks
in the past, the message complexity was mainly studied in complete graphs. The
question is, whether the results known for complete graphs also hold in sparse
networks with very good expansion and connectivity properties. Such networks
naturally arise in certain real world applications such as peer-to-peer
systems \cite{BW01,Gnutella}. In the case of broadcasting, it is known that the
performance of push-pull algorithms in complete graphs cannot be achieved in
random graphs of small or moderate degree \cite{Els06}. This, however, seems
not to be the case w.r.t.~gossiping. As we show in this paper, concerning the
number of message transmissions the performance of the algorithms developed in
\cite{BCEG10} can be achieved in random graphs as well. Regarding the impact of
the graph density on the {\em running time} a similar study has been done by
Fountoulakis et al.~\cite{FHP10}. They showed that there is almost no
difference between the running time of the push algorithm in complete graphs
and random graphs of various degrees, as long as the expected degree is
$\omega(\log n)$.\footnote{In this paper, $\log{n}$ denotes the logarithm of
$n$ to base $2$.}

\subsection{Related Work}

A huge amount of work has been invested to analyze information dissemination in
general graphs as well as some special network classes. We only concentrate
here on randomized protocols which are based on the random phone call model.
This model has been introduced by Demers et al.~\cite{DGHILSSST87} along with a
randomized algorithm that solves the problem of mutual consistency in
replicated databases. 

Many papers analyze the running time of randomized broadcasting algorithms that
only use push transmissions. To mention some of them, Pittel \cite{Pit87}
proved that in a complete graph a rumor can be distributed in
$\log_2(n)+\ln(n)+\BigO{1}$ steps. Feige et~al.~\cite{FPRU90} presented optimal
upper bounds for the running time of this algorithm in various graph classes
including random graphs, bounded degree graphs, and the hypercube.
 
In their paper, Karp et al.~\cite{KSSV00} presented an approach that requires
only $\BigO{\log{n}}$ time and $\BigO{n\log\log{n}}$ message transmissions,
\whp, which is also shown to be asymptotically optimal. This major improvement
is a consequence of their observation that an algorithm that uses only pull
steps is inferior to the push approach as long as less than half of the nodes
are informed. After that, the pull approach becomes significantly better. This
fact is used to devise an algorithm that uses both, push and pull operations,
along with a termination mechanism.

The random phone call model as well as some variants of it have also been
analyzed in other graph classes. We mention here the work of Chierichetti et
al.~and Giakkoupis \cite{CLP10,Gia11} who related the running time of push-pull
protocols to the conductance of a graph; or the work of Giakkoupis and
Sauerwald \cite{GS12,Gia14} on the relationship between push-pull and vertex
expansion. To overcome bottlenecks in graphs with small conductance,
Censor-Hillel and Shachnai used the concept of weak conductance in order to
improve the running time of gossiping \cite{CS12}. Earlier results related
randomized information dissemination to random walks on graphs, see
e.g.~\cite{MS06,ES09}. Modifications of the random phone call model resulted in
an improved performance of randomized broadcasting w.r.t.~its communication
complexity in random graphs \cite{ES08} and its running time in the
preferential attachment model \cite{DFF11}. The basic idea of these
modifications is used in Section \ref{sect:memory}. 

Randomized gossiping in complete graphs has been extensively studied by
Berenbrink et al.~\cite{BCEG10}. In their paper, they provided a lower bound
argument that proves $\BigOmega{n\log{n}}$ message complexity for any
$\BigO{\log{n}}$ time algorithm. This separation result marks a cut between
broadcasting and gossiping in the random phone call model. Furthermore, the
authors gave two algorithms at the two opposite points of the time and message
complexity trade-off. Finally, they slightly modified the random phone call
model to circumvent these limitations and designed a randomized gossiping
protocol which requires $\BigO{\log{n}}$ time and $\BigO{n\log\log{n}}$ message
transmissions. 

Chen and Pandurangan \cite{CP12} used gossiping algorithms for computing
aggregate functions in complete graphs (see also \cite{KDG03}). They showed a
lower bound of $\BigOmega{n\log{n}}$ on the message complexity regardless of
the running time for any gossiping algorithm. However, for this lower bound
they assumed a model that is slightly weaker than the one used in this paper.
In the main part of their paper, they presented an algorithm that performs
gossiping in $\BigO{\log{n}}$ time using $\BigO{n\log\log{n}}$ messages by
building certain communication trees. Furthermore, they also designed gossip
protocols for general graphs. For all these algorithms, they assumed a
communication model which is more powerful than the random phone call model.

Another interesting application of randomized gossiping is in the context of
resilient information exchange \cite{AGGZ10}. Alistarh et al.~proposed an
algorithm with an optimal $\BigO{n}$ communication overhead, which can tolerate
oblivious faults. For adaptive faults they provided a gossiping algorithm with
a communication complexity of $\BigO{n \log^3 n}$. Their model, however, is
stronger than the random phone call model or some simple variants of it.

Random graphs first appeared in probabilistic proofs by Erd\H{o}s and R\'{e}nyi
\cite{ER59}. Much later, they were described in the works by Bender and
Canfield \cite{BC78}, Bollob\'{a}s \cite{Bol80} and Wormald
\cite{Wor81a,Wor81b}. Aiello et~al.\ generalized the classical random graph
model introducing a method to generate and model power law graphs \cite{WCL01}.
The properties of Erd\H{o}s-R\'enyi graphs have been surveyed by
Bollob\'as \cite{Bol01}. Various properties of random graphs, including random
regular graphs, were presented in \cite{Wor99}. In recent years, random graphs
were also analyzed in connection with the construction and maintenance of large
real world networks, see e.g.~\cite{KMG03}.

\subsection{Our Results} \label{sec:ourres}

In this paper, we extend the results of \cite{BCEG10} to random graphs with
degree $\BigOmega{\log^{2+\epsilon} n}$ where $\epsilon > 0$ can be an
arbitrary constant. In \cite{BCEG10} the authors first proved a lower bound,
which implies that any address-oblivious algorithm in the random phone call
model with running time $\BigO{\log n}$ produces a communication overhead of at
least $\BigOmega{n \log n}$ in complete graphs. On the other side, it is easy
to design an $\BigO{\log n}$-time algorithm, which generates $\BigO{n \log n}$
message transmissions. The first question is whether increasing the running
time can decrease the communication overhead. This has been answered positively
for complete graphs. That is, in \cite{BCEG10} an algorithm with running time
$\BigO{\log^2 n/\log \log n}$ and message complexity $\BigO{n\log n/\log \log
n}$ was presented. However, it is still not clear whether this result can be
achieved in sparser graphs as well. One might intuitively think that results
obtained for complete graphs should be extendable to sparse random graphs as
well, as long as the number of time steps is less than the smallest degree.
However, in the related model of randomized broadcasting there is a clear
separation between results achievable in complete graphs and in random graphs
of degree $n^{\LittleO{1/\log \log n}}$ (cf.~\cite{KSSV00,Els06}). In this
paper we show that in random graphs one can obtain the same improvement on the
number of message transmissions w.r.t.~the algorithms studied so far as in
complete graphs. In light of the fact that in the slightly different
communication model analyzed by Chen and Pandurangan in their lower bound
theorem \cite{CP12} such an improvement is not even possible in complete
graphs, our result extends the evidence for a non-trivial advantage (i.e., the
possibility to improve on the communication overhead by increasing the running
time) of the well-established random phone call model to sparse random graphs.
Furthermore, we will present a modification of this model -- as in
\cite{BCEG10} -- to derive an $\BigO{\log n}$-time algorithm, which produces
only $\BigO{n \log \log n}$ message transmissions, \whp\footnote{\emph{\Whp}
means a probability of at least $1-n^{-\Omega(1)}$.}, and analyze the
robustness of this algorithm.

In this paper, we show our first result w.r.t.~the configuration model (see
next section), while the second result is proved for Erd\H{o}s-R\'enyi graphs.
Nevertheless, both results can be shown for both random graph models, and the
proof techniques are the same. Here we only present one proof w.r.t.~each graph
model. 

In our analysis, we divide the execution time of our algorithms into several
phases as in the case of complete graphs. Although the algorithms and the
overall analysis are in the same spirit as in \cite{BCEG10}, we encountered
several differences concerning the details. At many places, results obtained
almost directly in the case of complete graphs required additional
probabilistic and combinatorial techniques in random graphs. Moreover we
observed that, although the overall results are the same for the two graph
classes, there are significant differences in the performance of the
corresponding algorithms in some of the phases mentioned before. This is due to
the different structures we have to deal with in these two cases. To obtain our
results, it was necessary to incorporate these structural differences into the
dynamical behavior of the gossiping algorithms. For the details as well as a
high level description of our algorithms see Sections \ref{sect:main-result}
and \ref{sect:memory}

\section{Model and Annotation} \label{sect:model}

We investigate the gossiping problem in the random phone call model in which
$n$ players are able to exchange messages in a communication network. In our
first model, we use a Erd\H{o}s-R\'enyi random graph $G = G(n, p) = (V, E)$ to
model the network where $V$ denotes the set of players and $E \subseteq V
\times V$ is the set of edges. In this model, we have a probability of $p$ that
for two arbitrary nodes $v_1, v_2 \in V$ the edge $(v_1, v_2)$ exists,
independently.
Let $d$ denote the expected degree of an arbitrary but fixed node $v$. In this
paper, we only consider undirected random graphs for which $d \geq
\log^{2+\epsilon} n$. In this model the node degree of every node is
concentrated around the expectation, \ie{}, $d_v = \deg(v) =
d\cdot\left(1\pm\LittleO{1}\right)$, \whp. 

We also investigate the so-called configuration model introduced in
\cite{Bol80}. We adapt the definition by Wormald \cite{Wor99} as follows.
Consider a set of $d\cdot n$ edge \textit{stubs} partitioned into $n$ cells
$v_1, v_2, \dots, v_n$ of $d$ stubs each. A perfect matching of the stubs is
called a \textit{pairing}. Each pairing corresponds to a graph in which the
cells are the vertices and the pairs define the edges. A pairing can be
selected uniformly at random in different ways. \Eg, the first stub in the pair
can be chosen using any arbitrary rule as long as the second stub is chosen
uniformly at random from the remaining unpaired stubs. Note that this process
can lead to multiple edges and loops. However, \whp the number of such edges is
a constant \cite{Wor99}. 
In
our analysis we apply the principle of deferred decisions \cite{MR95}. That is,
we assume that at the beginning all the nodes have $d$ stubs which are all
unconnected. If a node chooses a link for communication for the first time in a
step, then we connect the corresponding stub of the node with a free stub in
the graph, while leaving all the other stubs as they are.

We furthermore assume that each node has an estimation of $n$, which is
accurate within constant factors. In each step, every node $v$ is allowed to
open a channel to one of its neighbors denoted by $u$ chosen uniformly at
random (in \autoref{sect:memory} we consider a simple modification of this
model). This channel is called outgoing for $v$ and incoming for $u$. We assume
that all open channels are closed at the end of every step. Since every node
opens at most one channel per step, at most one outgoing channel exists per
node.

Each node has access to a global clock, and all actions are performed in
parallel in synchronous steps. At the beginning, each node $v$ stores its
original message $m_v(0) = m_v$. Whenever $v$ receives messages, either over
outgoing channels
or over incoming channels,
these messages are combined together.
That is, $v$ computes its message in step $t$ by successively combining all
known messages together, resulting in $m_v(t) = \bigcup_{i=0}^{t-1}
m_v^{\text{(in)}}(i)$, where $m_v^{\text{(in)}}(i)$ denotes the union of all
incoming (i.e., received) messages over all connections in a step $i$ (with
$m_v^{\text{(in)}}(0) = m_v$). This combined message is used for any
transmission in step~$t$. We will omit the step parameter $t$ and use $m_v$ to
denote the node's message if the current step is clear from the context.

\section{Traditional Model} \label{sect:main-result}
\begin{algorithm}
\SetAlgoVlined
\SetKwFunction{Push}{push}
\SetKwFunction{Pull}{pull}
\SetKwFunction{PushPull}{pushpull}
\SetKwFunction{Add}{add}
\SetKwFunction{Empty}{empty}
\SetKwFunction{Pop}{pop}
\SetKwFunction{Moves}{moves}
\SetKwFor{At}{at each}{do in parallel}{end}
\SetKwFor{ForEach}{for each}{do}{end}
\SetKwFor{With}{with probability}{do}{end}
\SetKwProg{AlgorithmPI}{Phase \RNum{1}}{}{end}
\SetKwProg{AlgorithmPII}{Phase \RNum{2}}{}{end}
\SetKwProg{AlgorithmPIII}{Phase \RNum{3}}{}{end}
\SetKw{False}{false}
\SetKw{True}{true}
\SetKw{KwLet}{let}

\SetKwFor{AtNode}{at each node}{do in parallel}{end}
\SetKwFor{ForStep}{for}{do}{end}
\SetKwFor{ForRound}{for round}{do}{end}

\AlgoPhase{1}
\ForStep{$t = 1$ \KwTo $12\log{n}/\log\log{n}$}{
	\At{node $v$}{
		\Push{$m_v$}\;
	}
}
\AlgoPhase{2}
\KwLet{$\ell$ denote a large constant}\;
\ForRound{$r = 1$ \KwTo $4\log{n}/\log\log{n}$}{
	\At{node $v$}{
		\With{$ \ell / \log{n}$}{
			\Push{$m_v$}\tcp*[r]{start a random walk}
		}
	}
	\For{step $t = 1$ \KwTo $6\ell \log{n}$}{
		\At{node $v$} {
			
			\ForEach{incoming message $m'$ with \Moves{$m'$} $\leq c_{\text{moves}}\cdot\log{n}$}{
				$q_v$\texttt{.}\Add{$m' \cup m_v$}\;
				$m_v \gets m_v \cup m'$\;
			}
			\If{$\lnot$\,\Empty{$q_v$}}{
				\Push{$q_v$\texttt{.}\Pop{}}\;
			}
		}
	}
	\ForEach{node $v$}{
		\If{$\lnot$\,\Empty{$q_v$}}{
			$v $ becomes active;
		}
	}
	\For{step $t = 1$ \KwTo $1/2 \cdot \log\log{n}$} {
		\At{node $v$}{
			\If{$v$ is active}{
				\Push{$m_v$}\;
			}
			\If{$v$ has incoming messages}{
				$v $ becomes active\;
			}
		}
	}
	All nodes become inactive\;
}
\AlgoPhase{3}
\For{$t = 1$ \KwTo $8 \log{n}/\log\log{n}$}{
	\At{node $v$}{
		\PushPull{$m_v$}\;
	}
}

\caption{The \texttt{fast-gossiping} algorithm. \texttt{Push} and \texttt{pull}
operations are preceded and followed by opening and closing channels,
respectively.}
\label{alg:fast-gossiping}
\end{algorithm}

In this section we present our algorithm to solve the gossiping problem. This
algorithm is an adapted version of \texttt{fast-gossiping} presented in
\cite{BCEG10}. It works in multiple phases, starting with a distribution
process, followed by a random walk phase and finally a broadcasting phase.
These phases are described below.  Each phase consists of several rounds which
may again consist of steps. The algorithm uses the following per-node
operations.

\begin{table}[H]
\centering
\begin{tabular}{l@{ -- }l}
\texttt{push\!($m$)} & send $m$ over the outgoing channel\\
\texttt{pull\!($m$)} & \textbf{send} $m$ over \textbf{incoming} channel(s) (cf. \cite{KSSV00})\\
\texttt{pushpull} & a combination of \texttt{push} and \texttt{pull}\\
\end{tabular}
\end{table}


In Phase \RNum{2} of \autoref{alg:fast-gossiping} we require each node to store
messages associated with incoming random walks in a queue $q_v$ which we assume
to support an \texttt{add} operation for adding a message at the end and a
\texttt{pop} operation to remove the first message. The current queue status
can be obtained via the \texttt{empty} operation which yields a Boolean value
indicating whether the queue is empty or not. We furthermore assume that each
incoming message $m$ in this phase has a counter \texttt{moves}$(m)$ attached that indicates how many
real \emph{moves} it had already made. This counter can be accessed using the
\texttt{moves} operation and is described in more detail in the random walks
section.

\noindent Our main result follows.

\begin{theorem} \label{thm:theorem1}
The gossiping problem can be solved in the random phone call model on a random
graph with expected node degree $\BigOmega{\log^{2+\epsilon}n}$ in
$\BigO{\log^2n/\log\log n}$ time using $\BigO{n \log n / \log\log {n}}$
transmissions, \whp.
\end{theorem}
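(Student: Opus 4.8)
The plan is to follow the three‑phase structure of \autoref{alg:fast-gossiping} and to establish, for each phase separately, both a transmission bound and a ``coverage'' statement describing how widely each of the $n$ original messages has been disseminated by the end of that phase. The running time is immediate: Phase \RNum{1} and Phase \RNum{3} each consist of $\BigO{\log n/\log\log n}$ steps, while Phase \RNum{2} runs $4\log n/\log\log n$ rounds, each costing $6\ell\log n + \tfrac12\log\log n = \BigO{\log n}$ steps, so the total is $\BigO{\log^2 n/\log\log n}$. For the other two quantities I would fix the following targets: after Phase \RNum{1} every message is known to at least $N_1 := 2^{\BigTheta{\log n/\log\log n}} = n^{\LittleO{1}}$ nodes; after Phase \RNum{2} every message is known to at least $N_3 := n^{1-\BigO{1/\log\log n}}$ nodes; and Phase \RNum{3} then completes gossiping. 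Throughout I would work in the configuration model under the principle of deferred decisions, so that the first use of a stub reveals a uniformly random partner, and I would condition on the degree concentration $d_v = d\,(1\pm\LittleO{1})$ that holds \whp.

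The two outer phases are the easier book-ends. In Phase \RNum{1} each node pushes in every step, so the transmission count is exactly $n\cdot 12\log n/\log\log n = \BigO{n\log n/\log\log n}$; to reach the coverage target I would show that while the informed set of a fixed message has size $n^{\LittleO{1}}$ its pushes land on distinct, previously uninformed vertices \whp, so the set essentially doubles each step and attains $N_1$ after $12\log n/\log\log n$ steps (this is one place where $d=\BigOmega{\log^{2+\epsilon}n}$ guarantees that fresh stubs are available and that a random neighbor is near‑uniform). In Phase \RNum{3} each node performs one pushpull per step, giving $\BigO{n\log n/\log\log n}$ transmissions; starting from the $N_3 = n^{1-\BigO{1/\log\log n}}$ nodes guaranteed by Phase \RNum{2}, the push part inflates the support of each message to a constant fraction of $V$ within $\log(n/N_3)=\BigO{\log n/\log\log n}$ steps, after which the self‑reinforcing pull shrinks the uninformed fraction doubly‑exponentially and clears it in a further $\BigO{\log\log n}$ steps. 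A union bound over the $n$ messages and $n$ nodes then certifies completion \whp.

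The heart of the argument is Phase \RNum{2}. For the transmission bound the key observation is that each round creates only $\BigTheta{n\ell/\log n}$ random walks in expectation, and since a token resides in at most one queue at a time, at most $\BigO{n/\log n}$ nodes have a non‑empty queue in any step of the $6\ell\log n$‑step movement loop; this caps the movement transmissions at $\BigO{n}$ per round and $\BigO{n\log n/\log\log n}$ over all $4\log n/\log\log n$ rounds. The subsequent $\tfrac12\log\log n$ broadcast steps begin from these $\BigO{n/\log n}$ active nodes and at most double their number per step, contributing only $\BigO{n\sqrt{\log n}}$ per round, which is negligible. For coverage I would prove a per‑round multiplicative growth lemma: if a fixed message is known to $X_r$ nodes at the start of round $r$, then \whp it is known to $\min\{N_3,\ \BigOmega{\sqrt{\log n}}\cdot X_r\}$ nodes at the start of round $r+1$. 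The mechanism is that the $\BigTheta{n\ell/\log n}$ walks, each of length $c_{\text{moves}}\log n$, pass through the $X_r$‑set and thereafter carry the message, depositing it along their remaining steps, while the broadcast sub‑phase then spreads it by a further factor $2^{\frac12\log\log n}=\sqrt{\log n}$; iterating this lemma across the $4\log n/\log\log n$ rounds drives $X_1\ge N_1$ up to $N_3$.

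The main obstacle is making the random‑walk analysis of Phase \RNum{2} rigorous on a sparse random graph rather than on the complete graph of \cite{BCEG10}. Unlike uniform sampling, a walk step follows a random incident edge, so I must argue that under deferred decisions each step reaches an almost uniformly random vertex, that distinct walks and successive steps of one walk collide only rarely, and that these approximations survive the $\BigTheta{\log n}$ steps of a single walk and the $\BigTheta{\log^2 n/\log\log n}$ total steps of the algorithm. This is precisely where the hypothesis $d=\BigOmega{\log^{2+\epsilon}n}$ enters: a fixed node opens and receives only $\BigO{\log^2 n/\log\log n}$ channels over the whole execution \whp, which is $\LittleO{d}$, so every stub it uses is still fresh and the uniform‑sampling idealization remains valid with the required failure probability $n^{-\BigOmega{1}}$. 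Once the per‑round growth lemma and these idealizations are in place, the phase‑by‑phase coverage targets chain together and a final union bound over all messages and nodes yields \autoref{thm:theorem1}.
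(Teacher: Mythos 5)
Your skeleton, phase targets, running-time count, and transmission accounting all match the paper's proof, and your Phase \RNum{1} and Phase \RNum{3} arguments are essentially the ones used there (the paper only needs $\log^{k}n$ informed nodes after Phase \RNum{1}, but your larger target $2^{\BigTheta{\log n/\log\log n}}$ comes from the same growth argument and is harmless). The gap is in the per-round growth lemma for Phase \RNum{2}, where you assert that the $\BigTheta{n/\log n}$ walks of length $c_{\text{moves}}\log n$ pass through the informed set and that the broadcast sub-phase then multiplies its size by $\sqrt{\log n}$. Two essential steps are missing. First, a walk does not get to move in every step: incoming walks are queued and released one per step, so a priori a walk could spend most of the $6\ell\log n$ steps waiting and make far fewer than $c_{\text{moves}}\log n$ moves, in which case it never mixes and the uniformity you rely on is unavailable. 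The paper devotes its longest proof (\autoref{lem:random-walks-mix}) to an induction on the number of walks present in each neighborhood precisely to show that every walk makes $\BigOmega{\log n}$ moves \whp; your proposal uses the queues only for the transmission count and does not address the delays on the coverage side.

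Second, counting \emph{visits} to $I_m(r)$ is not enough: you need $\BigTheta{|I_m(r)|}$ \emph{distinct} walks, each visiting only $\BigO{1}$ times and ending far from every informed region, so that their broadcast trees are disjoint. If the $\BigTheta{|I_m(r)|}$ visits were made by only $|I_m(r)|/\log n$ walks revisiting $\log n$ times each, the round would produce a net shrinkage rather than a $\sqrt{\log n}$ gain. Your justification --- every stub is still fresh, so each step is a near-uniform sample --- is the right argument only in the dense regime $d>\log^{\kappa}n$ (\autoref{claim:random-walks-do-not-hit-each-other}). For $\log^{2+\epsilon}n\le d\le\log^{\kappa}n$ the informed set is a union of clustered regions of diameter $\BigO{\log\log n}$, and a walk sitting inside or next to such a region is \emph{not} at a near-uniform vertex after one step: it re-enters the region with probability far exceeding $|I_m(r)|/n$. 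This is exactly why the paper introduces the pseudo-tree orientation, the backward-move bound (\autoref{claim:walk-back-constant-time}), the safety-belt argument (\autoref{claim:no-return-after-loglogn-steps}), and the safe-area count (\autoref{claim:random-walks-end-in-uninformed-area}). Without some substitute for this machinery your growth lemma is unproven in the sparse range, which is the interesting part of the theorem's hypothesis.
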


\Phase{1}{Distribution}
The first phase consists of $12\log{n} / \log\log{n}$ steps. In every step,
each node opens a channel, pushes its messages, and closes the communication
channel. Clearly, this phase meets the bounds for runtime and message
complexity.

Let $k \geq 6$ denote a constant. We prove our result with respect to the
configuration model described in \autoref{sect:model}. After the first phase,
we have at least $\log^k n$ informed nodes w.r.t.~each message, \whp. We
analyze our algorithm throughout this section with respect to one single message $m$ and at the end use
a union bound to show that the result holds \whp for all initial messages.

\begin{definition}
Let $I_m(t)$ be the set of vertices that are informed of message $m$ in a step
$t$, \ie{}, vertices in $I_m(t)$ have received $m$ in a step prior to $t$.
Accordingly, $|I_m(t)|$ is the number of informed nodes in step $t$. Let
$H_m(t)$ be the set of uninformed vertices, \ie{}, $H_m(t) = V \setminus
I_m(t)$.
\end{definition}

We now bound the probability that during a communication step an arbitrary but
fixed node opens a connection to a previously informed vertex, \ie{}, the
communication is redundant and thus the message is wasted. Let $v$ denote this
vertex with corresponding message $m_v$.

At the beginning, we consider each connection in the communication network as
unknown, successively pairing new edges whenever a node opens a new connection
(see principle of deferred decisions in \autoref{sect:model}). Note, however,
that this is only a tool for the analysis of our algorithm and does not alter
the underlying graph model. We observe that each node has $d_v$ communication
stubs with $\log^{2+\epsilon}{n} \leq d_v < n$. We consider a stub
\emph{wasted} if it was already chosen for communication in a previous step.
Since throughout the entire first phase each node opens at most
$12\log{n}/\log\log{n}$ channels, there still will be $\BigTheta{d_v}$
\emph{free} stubs available \whp. Observe that the number of stubs that are additionally
paired due to incoming channels can be neglected using a simple balls-into-bins argument \cite{RS98}.
If a node chooses a free stub, it is paired with
another free stub chosen uniformly at random from the graph $G$.

\begin{lemma} \label{claim:distributionA-4}
After the distribution phase, every message is contained in at least
$\log^k{n}$ nodes, \whp, where $k \geq 6$ is a constant.
\seeProof{appendix:omitted-proofs-1}
\end{lemma}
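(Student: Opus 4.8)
The plan is to treat the distribution phase as an ordinary push process for a single fixed message $m$ and to show that its informed set grows geometrically until it reaches $\log^k n$. Fix $m$ and write $i_t := |I_m(t)|$, so that $i_0 = 1$. By the principle of deferred decisions, together with the fact established just above that every node retains $\BigTheta{d_v}$ free stubs throughout the phase, I model step $t$ as follows: each of the $i_t$ informed nodes opens a fresh stub, which is matched to a uniformly random free stub of the graph; since the free stubs are distributed $\BigTheta{d}$ per node, the matched endpoint lies in an (almost) uniformly random node. Call a push \emph{wasted} if its endpoint is an already-informed node or coincides with the endpoint of another push in the same step; otherwise it informs a new, distinct node. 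I track $i_t$ only up to the first step in which it reaches $\log^k n$; since the lemma asks merely for a lower bound and $i_t$ is non-decreasing, this suffices.

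Because $i_t \le \log^k n = \LittleO{\sqrt n}$ throughout this range and there are $\BigTheta{nd}$ free stubs in total, each push is wasted with probability $\BigO{i_t/n}$, so the expected number of wasted pushes in a step is $\BigO{i_t^2/n} = \LittleO{1}$. Hence, in expectation, $i_{t+1} \ge (2 - \LittleO{1})\,i_t$, and I aim for the robust per-step bound $i_{t+1} \ge \tfrac32 i_t$, \whp. For this I split on the current size. When $i_t \ge \log^2 n$, the number of newly informed nodes is a function of the push endpoints with bounded differences and mean $(1-\LittleO{1})\,i_t$, so a Chernoff-type estimate gives $\Probability{i_{t+1} < \tfrac32 i_t} \le e^{-\BigOmega{i_t}} = n^{-\LittleOmega{1}}$. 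When $i_t$ is small, the event $i_{t+1} < \tfrac32 i_t$ forces at least $\lfloor i_t/2\rfloor + 1$ of the $i_t$ pushes to be wasted simultaneously; bounding this by the corresponding binomial tail gives a failure probability of at most $\bigl(\BigO{i_t/n}\bigr)^{\lfloor i_t/2\rfloor+1}$, which is $\BigO{1/n^2}$, the bottleneck occurring at constant $i_t$.

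Combining the two regimes, each step multiplies $i_t$ by at least $3/2$ with failure probability $\BigO{1/n^2}$, so after $s = \BigO{\log\log n}$ steps we reach $i_s \ge (3/2)^s \ge \log^k n$. Since $(\log\log n)^2 = \LittleO{\log n}$, this number of steps is comfortably below the $12\log n/\log\log n$ steps of the phase, so the budget is never binding. A union bound over these $\BigO{\log\log n}$ steps yields a per-message failure probability of $n^{-2+\LittleO{1}}$, and a further union bound over all $n$ source messages gives total failure probability $n^{-1+\LittleO{1}} = n^{-\BigOmega{1}}$, which establishes the claim \whp for all messages simultaneously.

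The main obstacle is the concentration in the small-count regime: plain Chernoff bounds are vacuous when $i_t = \BigO{1}$, yet the per-message failure probability must be driven below $1/n$ to survive the union bound over the $n$ distinct source messages. This is precisely what forces the explicit binomial tail estimate on the number of wasted pushes rather than a mere first-moment bound, and it is here that the density assumption $d \ge \log^{2+\epsilon} n$ — which guarantees $\log^k n \ll \sqrt n$ and thereby keeps collisions rare throughout — enters the argument.
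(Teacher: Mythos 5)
Your overall strategy --- show geometric growth of $|I_m(t)|$ step by step and union-bound over messages --- matches the paper's, but the quantitative core of your argument has a genuine gap. You model each push as going out through a \emph{fresh} stub to an (almost) uniformly random node, so that the only failure mode is landing in $I_m(t)$ or colliding with another push in the same step, giving a per-push waste probability of $\BigO{i_t/n}$. That is not what the protocol does: a node selects a neighbor uniformly at random from \emph{all} of its $d_v$ stubs, including those already paired in earlier steps, and re-selecting a previously used stub (e.g., the originator re-contacting one of the neighbors it has already informed) wastes the push. This event has probability $\BigTheta{S/d_v}$, where $S$ is the number of stubs already used, and with $d$ as small as $\log^{2+\epsilon}n$ this is only polylogarithmically small --- which is precisely why \autoref{claim:distributionA-1} bounds the per-push success probability by $1-\BigO{1/\log n}$ rather than $1-\BigO{i_t/n}$. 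Consequently, in the constant-size regime the event that $\lfloor i_t/2\rfloor+1$ pushes are wasted has probability $\log^{-\BigO{1}}n$, nowhere near your claimed $\BigO{1/n^2}$, and the union bound over the $n$ messages collapses. (Even granting your model, at $i_t=1$ your own formula gives only $\BigO{1/n}$, not $\BigO{1/n^2}$, which already breaks the union bound.) Your closing remark that the density assumption enters only via $\log^k n\ll\sqrt n$ is a symptom of the same omission: it actually enters to make $\log n/d=\BigO{\log^{-1-\epsilon}n}$ small.

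The paper circumvents this by \emph{not} requiring every step to succeed \whp. It first shows (\autoref{claim:distributionA-2A}) that a large constant $C$ of informed nodes is reached within the first $4\log n/\log\log n$ steps \whp; then that each subsequent step achieves growth by a factor of $1.5$ with probability only $1-\log^{-1-\BigOmega{1}}n$ (\autoref{claim:distributionA-2}); and finally amplifies over the remaining $8\log n/\log\log n$ steps to conclude that at least half of them succeed \whp (\autoref{claim:distributionA-3}), which still yields $1.5^{4\log n/\log\log n}\gg\log^k n$. This is exactly why Phase~\RNum{1} runs for $\BigTheta{\log n/\log\log n}$ steps rather than the $\BigO{\log\log n}$ steps your argument consumes; the budget you dismiss as ``never binding'' is what makes the amplification, and hence the final union bound over messages, possible. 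To repair your proof you would need to replace the per-step \whp claim in the small regime with an argument of this ``most steps succeed'' type.
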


To show \refLemma{claim:distributionA-4} which corresponds to Phase I of
\autoref{alg:fast-gossiping} we first state and show
Lemmas \ref{claim:distributionA-1}, \ref{claim:distributionA-2A},
\ref{claim:distributionA-2}, and \ref{claim:distributionA-3}.

\begin{lemma} \label{claim:distributionA-1}
The probability that an arbitrary but fixed node $v$ opens a connection to a
previously uninformed vertex w.r.t.~message $m$ is at least $1 -
\BigO{\log^{-1}{n}}$. \seeProof{appendix:omitted-proofs-1}
\end{lemma}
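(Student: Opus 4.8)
The plan is to bound the complementary \emph{bad} event---that $v$ opens its channel to a vertex already holding $m$---by $\BigO{\log^{-1}n}$, from which the claim follows immediately. Working in the configuration model under deferred decisions, when $v$ opens a channel it selects one of its $d_v$ stubs, and I would split the bad event according to whether this stub has already been paired in an earlier step or is still free. In the first case $v$ merely re-traverses an edge it has used before, reconnecting to a vertex it already pushed to; the probability of this is at most the number of stubs of $v$ consumed during Phase~\RNum{1} divided by $d_v$. Since each node opens at most $12\log n/\log\log n$ outgoing channels and, by the balls-into-bins bound already invoked, absorbs only $\BigO{\log n/\log\log n}$ incoming channels, this ratio is $\BigO{(\log n/\log\log n)/\log^{2+\epsilon}n} = \LittleO{\log^{-1}n}$.

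In the second case the chosen free stub is matched, uniformly and independently of the past, with one of the remaining free stubs of $G$, and I would estimate the chance that this partner belongs to an informed vertex by a direct stub count. The total number of stubs is $nd$, and since at most $\BigO{n\log n/\log\log n} = \LittleO{nd}$ of them are consumed throughout the phase, there remain $(1-\LittleO{1})nd = \BigTheta{nd}$ free stubs. The informed vertices contribute at most $|I_m(t)|\cdot d$ stubs, so the partner lands on an informed vertex with probability at most $(1+\LittleO{1})\,|I_m(t)|/n$. The only thing left is a uniform upper bound on $|I_m(t)|$ throughout Phase~\RNum{1}: since a push informs at most one new vertex per informed vertex per step, the informed set at most doubles each step, giving $|I_m(t)| \le 2^{12\log n/\log\log n} = n^{\LittleO{1}}$. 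Hence this probability is $n^{\LittleO{1}-1} = \LittleO{\log^{-1}n}$.

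Summing the two contributions bounds the bad event by $\LittleO{\log^{-1}n} \subseteq \BigO{\log^{-1}n}$, which proves the stated lower bound of $1-\BigO{\log^{-1}n}$; note that the estimate is in fact comfortably loose, the slack being deliberate so that the same bound can be reused uniformly across all steps and rounds in Lemmas~\ref{claim:distributionA-2A}--\ref{claim:distributionA-3}. I expect the only genuinely delicate point to be the justification that, under the principle of deferred decisions, the partner stub is indeed uniform over the currently free stubs \emph{independently} of everything revealed in earlier steps, so that the elementary ratio estimates above are legitimate; this rests on the free-stub count staying $\BigTheta{nd}$ \whp, which in turn is exactly the balls-into-bins argument on incoming channels referenced before the lemma.
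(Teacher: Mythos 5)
Your proposal is correct and follows essentially the same route as the paper's proof: both decompose the bad event into choosing an already-paired (wasted) stub versus a free stub whose uniformly matched partner lies in $I_m(t)$, bound the first via the balls-into-bins count of consumed stubs over $d_v \geq \log^{2+\epsilon}n$, and the second via a stub-counting ratio. The only cosmetic difference is that you bound $|I_m(t)|$ by the deterministic doubling estimate $2^{12\log n/\log\log n}=n^{\LittleO{1}}$, whereas the paper simply uses the Phase~\RNum{1} working bound $|I_m(t)|\leq\log^k n$; both suffice.
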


\begin{proof}
The first phase runs for $12\log{n}/\log\log{n}$ steps with the goal to reach
at least $\log^kn$ informed vertices. We apply the principle of deferred
decision as described in \autoref{sect:model} to bound the number of
\emph{uncovered} (wasted) stubs that have already been connected. The total
number of uncovered stubs at a node can be bounded by $S=O(\log{n})$ \whp
applying a simple balls-into-bins argument \cite{RS98}. Then,
\begin{equation*}
\Probability{v \text{ chooses a \emph{wasted} stub}} \leq \frac{\BigO{\log{n}}}{ d_v} \enspace .
\end{equation*}

If in step $t$ a \emph{free} stub is chosen, the probability that the
corresponding communication partner $u$ has already been informed (or will be
informed in exactly the same step) can be bounded by
\begin{equation*}
\Probability{u \text{ is informed}} \leq \frac{d\cdot|I_m(t)|}{\left(d-S\right)n} \enspace .
\end{equation*}
Therefore, the probability $p'$ that $v$ opens a connection to an uninformed
communication partner and thus spreads the message to an uninformed node is 
\begin{equation*}
p' \geq \Probability{v \text{ chooses a \emph{free} stub to } u} \cdot \Probability{u \text{ is uninformed}}
\end{equation*}
which yields for sufficiently large $n$
\begin{equation*}
p' \geq \left(1 - \frac{1}{\log{n}}\right)\cdot \left( 1 - \frac{d\cdot\log^k{n}}{\left(d-S\right)n} \right) \geq 1 - \BigO{\frac{1}{\log n}} \enspace .\qedhere
\end{equation*}
\end{proof}

\begin{lemma} \label{claim:distributionA-2A}
Let $C$ denote a large constant. After the first $T = 4\log{n}/\log\log{n}$
steps, at least $C$ of nodes are informed of message $m_v$
\whp. 
\end{lemma}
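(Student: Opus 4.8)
The plan is to read this as an \emph{ignition} estimate. The informed set $I_m$ for the tracked message (here $m=m_v$) starts as the single node $v$, and although under push operations it in fact grows geometrically, the statement we want is weaker and more robust: the generous budget of $T=4\log n/\log\log n$ steps is precisely what converts a per-step success probability of $1-\BigO{1/\log n}$ into the $1-n^{-\Omega(1)}$ guarantee needed for the later union bound over all $n$ messages.

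First I would call a step \emph{good} if at least one node already in $I_m$ opens its channel to a node outside $I_m$ at the start of that step, so that $|I_m(\cdot)|$ strictly increases in a good step. As long as fewer than $C\le\log^k n$ nodes are informed, \refLemma{claim:distributionA-1} applies to any single informed node and shows that its push reaches an uninformed vertex with probability at least $1-\BigO{1/\log n}$; keeping just one such node shows that a step is good with conditional probability at least $1-q$ for some $q=\BigO{1/\log n}$, uniformly over the history $\mathcal F_{t-1}$. The bound is conditional because the estimates behind \refLemma{claim:distributionA-1}, namely $|I_m(t)|\le\log^k n$ and $S=\BigO{\log n}$ wasted stubs, hold throughout the distribution phase \whp.

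Next I would note that if fewer than $C$ nodes are informed after $T$ steps then, starting from one node and gaining at least one node in each good step, at most $C-2$ of the $T$ steps were good, so at least $T-C+2$ were bad. Applying $\Probability{\text{step }t\text{ bad}\mid\mathcal F_{t-1}}\le q$ step by step (tower property) bounds the probability that any fixed family of $j$ steps is entirely bad by $q^{j}$, and summing over the $\binom{T}{C-2}$ choices of the good steps gives
\begin{equation*}
\Probability{|I_m(T)| < C} \;\le\; \binom{T}{C-2}\, q^{\,T-C+2} \;\le\; T^{C-2}\left(\frac{c}{\log n}\right)^{T-C+2}.
\end{equation*}
The decisive factor is $\left(1/\log n\right)^{T-C+2}=2^{-(T-C+2)\log\log n}$; since $T\log\log n=4\log n$ this equals $n^{-4+\LittleO{1}}$, while the remaining factors $T^{C-2}$ and $c^{T}$ are only poly-logarithmic and sub-polynomial, respectively. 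Hence the failure probability is $n^{-\Omega(1)}$, which is the claim for $m_v$.

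The step I expect to be the main obstacle is making the conditional per-step bound fully rigorous inside the configuration model: one must check that $\Probability{\text{step bad}\mid\mathcal F_{t-1}}\le q$ holds for \emph{every} reachable history with $1\le|I_m|<C$ — in particular for the degenerate single-node state, whose only failure mode is re-using an already paired (wasted) stub — and that conditioning on the pairings exposed so far under the principle of deferred decisions does not raise the chance of hitting an informed vertex beyond the $\BigO{1/\log n}$ supplied by \refLemma{claim:distributionA-1}. Once this uniform conditional estimate is in place, the counting argument above is routine.
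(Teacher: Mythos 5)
Your proof is correct and follows essentially the same route as the paper's: call a step good if the informed set grows, bound the per-step failure probability, and then use a binomial-type count over the $T=4\log n/\log\log n$ steps to conclude that at least $C$ of them succeed with probability $1-n^{-\Omega(1)}$. The only difference is that you use the weaker per-step failure bound $\BigO{1/\log n}$ from \refLemma{claim:distributionA-1} where the paper uses $C/d\le C/\log^{2+\epsilon}n$; both yield $n^{-\Omega(1)}$ after exponentiation, and your explicit conditioning on the history via the tower property is in fact slightly more careful than the paper's implicit independence assumption.
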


\begin{proof}
During these first $4\log{n}/\log\log{n}$ steps we aim to reach at least $C$
informed nodes \whp. Therefore, we have a probability of at most $C/d_v$ that an informed node $v$
opens a connection to another informed node and thus causes redundant
communication. Furthermore, the probability that in an
arbitrary but fixed step $t$ every communication attempt fails and every node
$v \in I_m(t)$ performs only redundant communication can also be upper bounded
by 
$C/d$.

We define an indicator random variable $X_i$ as
\begin{equation*}
	X_{i} = \begin{cases}
		1 & \text{if } |I_m(i+1)| \geq |I_m(i)| + 1 \\
		0 & \text{otherwise}
	\end{cases}
\end{equation*}
which we sum up to obtain the number of informed nodes $X
= \sum_{i=1}^{T}X_i$. We then bound the probability that more than $C$ steps
fail, \ie{}, the number of successful transmissions $X$ is smaller than $C$, as
\begin{align*}
\Probability{X \leq C}
& \leq \sum_{i=0}^{C}\binom{T}{i} \cdot \left( 1 - \frac{1}{d} \right)^{i} \cdot \left( \frac{C}{d} \right)^{T-i}\\
& < \sum_{i=0}^{C}\left({\frac{4\log{n}\cdot e}{\log\log{n}\cdot i}}\right)^{i} \cdot \left( \frac{C}{\log^{2+\epsilon}n} \right)^{\frac{4\log{n}}{\log\log{n}}-i}\\
& \ll \frac{1}{n^2}
\end{align*}
where in the second inequality we used that $\binom{T}{i} \leq \left(\frac{T\cdot e}{i}\right)^i$.
\end{proof}

\begin{lemma} \label{claim:distributionA-2}
Let $t \in \left[4\log{n}/\log\log{n},
12\log{n}/\log\log{n}\right]$ denote an arbitrary but fixed step.
Then $|I_m(t+1)| \geq 1.5 \cdot |I_m(t)|$
with probability at least $1 - \log^{-1-\BigOmega{1}}n$. 
\seeProof{appendix:omitted-proofs-1}
\end{lemma}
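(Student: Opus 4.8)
The plan is to treat this single step as a balls-into-bins experiment in which each of the $|I_m(t)|$ currently informed nodes throws one ball (its push), and to show that with the required probability at least half of these pushes reach \emph{fresh} uninformed nodes. Write $I = |I_m(t)|$. By \refLemma{claim:distributionA-2A} we may assume $I \geq C$ for the large constant $C$, and since the distribution phase targets $\log^k n$ informed nodes we also have $I \leq \log^k n$, so the hypothesis of \refLemma{claim:distributionA-1} is in force throughout the step.

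First I would set up the per-push success event. Call a push \emph{good} if it pairs a free stub with a free stub of a node that is uninformed at the start of the step \emph{and} has not already been hit by an earlier push in the same step. Revealing the $I$ pushes one at a time under the principle of deferred decisions, I would argue that, conditioned on the outcomes of all earlier pushes, the next push is good with probability at least $1 - \BigO{\log^{-1}n}$. The probability of choosing a \emph{wasted} stub is exactly the one from \refLemma{claim:distributionA-1} and is unaffected by earlier pushes, since the number of a node's own paired stubs stays $S = \BigO{\log n} = \LittleO{d}$. The probability that a free stub lands on a node that is already informed, or was informed earlier in this very step, is at most $d\cdot\BigO{\log^k n}/((d-S)n) = \BigO{\log^k n / n}$, because at most $\BigO{\log^k n}$ nodes are informed at any point during the step; this term is dominated by $\BigO{\log^{-1}n}$. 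Hence every push fails to be good with conditional probability at most $p = \BigO{\log^{-1}n}$, uniformly over the history.

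Next I would convert this into a tail bound. Since each good push informs a distinct new node, the number of newly informed nodes equals the number of good pushes, so $|I_m(t+1)| = 2I - B$ where $B$ is the number of non-good pushes; the claim $|I_m(t+1)| \geq 1.5\,I$ is therefore equivalent to $B \leq I/2$. Because the conditional failure probability of every push is at most $p$ regardless of the past, $B$ is stochastically dominated by a $\mathrm{Bin}(I,p)$ variable, and a union bound over the set of failing pushes gives
\begin{equation*}
\Probability{B \geq I/2} \leq \binom{I}{\lceil I/2\rceil}\,p^{\lceil I/2\rceil} \leq (2e\,p)^{I/2} = \left(\BigO{\log^{-1}n}\right)^{I/2}.
\end{equation*}
This is largest at the smallest admissible value $I = C$, where it equals $\BigO{\log^{-C/2}n} = \log^{-1-\BigOmega{1}}n$ once $C$ is chosen larger than $2$; for every larger $I$ the bound only improves. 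This yields the claimed probability and completes the argument.

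The step I expect to be the main obstacle is the conditional-independence argument in the second paragraph: one has to verify that revealing the pushes sequentially keeps the per-push good-probability bounded below by $1 - \BigO{\log^{-1}n}$ even as informed nodes and paired stubs accumulate within the step, and that the collision term (a push landing on a node informed earlier in the same step) is genuinely negligible rather than merely small. Everything else is a routine binomial tail estimate, whose only subtlety is tracking that the exponent $I/2 \geq C/2$ is precisely the source of the $\BigOmega{1}$ in the exponent of the failure probability.
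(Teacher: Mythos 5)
Your proposal is correct and follows essentially the same route as the paper: both reduce the claim to showing that at most $|I_m(t)|/2$ of the pushes from informed nodes fail, bound the per-push failure probability by $\BigO{\log^{-1}n}$ via \refLemma{claim:distributionA-1}, and obtain a tail bound of the form $\left(\BigO{\log^{-1}n}\right)^{|I_m(t)|/2}$, which is then evaluated at the lower bound $|I_m(t)|\geq C$ from \refLemma{claim:distributionA-2A}. The only cosmetic differences are that you use an explicit union-bound binomial tail (and carefully fold same-step collisions into the failure event) where the paper invokes Equation 12 of \cite{HR90} on a sum of indicators counting only connections into $I_m(t)$; both yield the same exponent and the same conclusion.
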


\begin{proof}
According to \autoref{claim:distributionA-2A} we have $C \leq |I_m(t)| \leq \log^{k}{n}$ where $C$ denotes a large constant. In each step, every node opens a
connection to a randomly chosen communication partner. Let $c$ denote a constant. According to 
\refLemma{claim:distributionA-1}, this attempt to inform a new node fails with a
probability smaller than $c/\log{n}$. We now define the
indicator random variable $X_{i}$ for $v_i \in I_m(t)$ as follows.
\begin{equation*}
	X_{i} = \begin{cases}
		1 & \text{if $v_i$ opens a connection to $u \in I_m(t)$} \\
		0 & \text{otherwise.}
	\end{cases}
\end{equation*}
The aggregate random variable $X = \sum_{i=1}^{|I_m(t)|}X_{i}$ with expected value
$\Expected{X} \leq c\cdot |I_m(t)|/\log{n}$ represents the total number of failed
communication attempts. Clearly, we get $|I_m(t+1)| = 2|I_m(t)| - X$. Therefore, we
upper bound $X$, using Equation 12 from \cite{HR90} as follows:
\begin{align*}
\Probability{X \geq \frac{1}{2}|I_m(t)|} & \leq \left( \frac{2c}{\log{n}} \cdot 2\left(1-\frac{c}{\log{n}}\right) \right)^{|I_m(t)|/2} \\
& \leq \left( \frac{4c}{\log{n}} \right)^{|I_m(t)|/2}
\end{align*}
We can now apply the lower bound for the number of informed nodes, $|I_m(t)| \geq C$, and obtain for large $n$
\begin{equation*}
\Probability{|I_m(t+1)| \geq 1.5\cdot|I_m(t)|} \geq 1 - \log^{-C/2+1}{n} \enspace . \qedhere
\end{equation*}
\end{proof}

\begin{lemma} \label{claim:distributionA-3}
 At least $4\log{n}/\log\log{n}$ attempts out of the
$8\log{n}/\log\log{n}$ last steps in Phase~\RNum{1} succeed such that $|I_m(t+1)|
\geq 1.5 \cdot |I_m(t)|$, \whp. That is, half of the steps lead to an exponential growth. 
\end{lemma}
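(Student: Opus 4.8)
The plan is to treat the statement as a concentration bound on the number of ``successful'' steps, where a step $t$ is called successful if $|I_m(t+1)| \geq 1.5\,|I_m(t)|$. The engine is \autoref{claim:distributionA-2}, which guarantees that, as long as $C \leq |I_m(t)| \leq \log^k n$, a single step is successful with probability at least $1 - \log^{-C/2+1}n$. Since $k \geq 6$ and $C$ is a large constant, I would first record that the per-step failure probability is bounded by $q := \log^{-C/2+1}n \leq \log^{-2}n$. \autoref{claim:distributionA-2A} supplies the entry condition $|I_m(t)| \geq C$ at the start of these last $8\log n/\log\log n$ steps, and monotonicity of $|I_m(\cdot)|$ keeps this lower bound valid thereafter.

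Next I would bound the probability of the complementary event, namely that at least $4\log n/\log\log n$ of the $N := 8\log n/\log\log n$ last steps fail. Writing $F$ for the number of failures, a union bound over the choice of which $N/2$ steps fail gives $\Probability{F \geq N/2} \leq \binom{N}{N/2}\, q^{N/2} \leq 2^{N} q^{N/2}$. Substituting $N = 8\log n/\log\log n$ and $q \leq \log^{-2}n$ yields $2^{N} = n^{8/\log\log n}$ and $q^{N/2} \leq \log^{-8\log n/\log\log n}n = n^{-8}$, so the whole expression is at most $n^{-8 + 8/\log\log n} \leq n^{-7}$ for large $n$. Hence with high probability at least half of the last steps are successful.

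The step that needs care -- and which I expect to be the main obstacle -- is the dependence between steps together with the range restriction in \autoref{claim:distributionA-2}. The per-step bound only applies while $|I_m(t)|$ lies in $[C, \log^k n]$, and the event at step $t$ depends on the entire history through $|I_m(t)|$. I would handle this by conditioning on the history and using that the conditional failure probability is bounded by $q$ at every step for which the process is still in range; the product $q^{N/2}$ over any fixed set of $N/2$ steps then follows by iterated conditioning rather than by independence. To dispose of the upper range constraint, I would observe that once $|I_m(t)|$ reaches $\log^k n$ the goal of Phase~\RNum{1} is already met, so it suffices to argue on the event that the process stays below $\log^k n$; on that event \autoref{claim:distributionA-2} applies at every one of the $N$ steps, and the bound above shows that fewer than half the steps failing is a $1-n^{-\Omega(1)}$ event. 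A union bound over all $n$ initial messages then preserves the high-probability guarantee.
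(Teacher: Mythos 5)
Your proposal is correct and follows essentially the same route as the paper: the paper also defines per-step failure indicators, invokes \autoref{claim:distributionA-2} for the per-step failure probability $\log^{-1-\BigOmega{1}}n$, and bounds the probability of $4\log n/\log\log n$ failures among the last $8\log n/\log\log n$ steps by a binomial tail estimate (Equation~12 of \cite{HR90}, which gives $\left(4q(1-q)\right)^{N/2}$ --- the same quantity your $\binom{N}{N/2}q^{N/2} \leq (4q)^{N/2}$ union bound produces). Your explicit treatment of the inter-step dependence via iterated conditioning and of the range restriction $|I_m(t)|\leq\log^k n$ is more careful than the paper, which leaves both points implicit (the latter is only addressed later, in the proof of \autoref{claim:distributionA-4}).
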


\begin{proof}
As of \refLemma{claim:distributionA-2}, the growth in each step can be
lower bounded by $|I_m(t+1)|/|I_m(t)| \geq 1.5$ with probability at
least $1 - \log^{-1-\BigOmega{1}}{n}$.
We now define the indicator random variable $X_i$ as
\begin{equation*}
	X_{i} = \begin{cases}
		1 & \text{if } |I_m(i+1)| < 1.5 \cdot |I_m(i)| \\
		0 & \text{otherwise.}
	\end{cases}
\end{equation*}
We sum up these indicator random variables and obtain the random variable $X
 = \sum_{i=1}^{8\log{n}/\log\log{n}}X_i$ which represents the
number of steps that fail to inform a sufficiently large set of new nodes.
Again, we use Equation 12 from \cite{HR90} to bound $X$ as follows.
\begin{align*}
\Probability{X \geq \frac{4\log{n}}{\log\log{n}}} &\leq
 \left( \frac{4}{\log^{1+\BigOmega{1}}{n}} \left( 1 - \frac{1}{\log^{1+\BigOmega{1}}n}\right) \right)^{\frac{4\log{n}}{\log\log{n}}}\\
& \ll n^{-3} \qedhere
\end{align*}
\end{proof}

We now combine these results to give a proof for
\refLemma{claim:distributionA-4} which concludes the first phase.

\begin{proof}[Proof of \refLemma{claim:distributionA-4}]
Since each message $m$ starts in its original node, $|I_m(0)| =
1$. We conclude from \refLemma{claim:distributionA-3} that \whp
in at least $4\log{n}/\log\log{n}$ steps the number of nodes informed of $m$ increases by a factor of at least $1.5$ as long as $|I_m(t)| \leq \log^k n$.
Thus, we have \whp
\begin{equation*}
\left|I_m\left(\frac{12\log{n}}{\log\log{n}}\right)\right| \geq \min\left\{\log^kn, 1.5^{\frac{4\log{n}}{\log\log{n}}}\right\}
= \log^{k}{n} \enspace .
\end{equation*}
We apply a union bound over all messages and the lemma follows.
\end{proof}

\Phase{2}{Random Walks}
After the first phase, each message is contained \whp in at least $\log^k{n}$
nodes, where $k \geq 6$ is a constant. We aim to reach
$n\cdot2^{-\log{n}/\log\log{n}}$ informed nodes for each message in the
second phase and therefore assume for any message $m$ and any step $t$ in Phase~\RNum{2} 
that $\log^k{n} \leq |I_m(t)| \leq n\cdot2^{-\log{n}/\log\log{n}}$.

At the beginning of Phase~\RNum{2} a number of nodes start so-called random
walks. If a random walk arrives at a node in a certain step then this node adds
its messages to the messages contained in the random walk and performs a push
operation, \ie{}, the random walk moves to a neighbor chosen uniformly at
random. This is done for $\BigO{\log{n}}$ steps.
To ensure that no random walk is lost, each node collects all incoming
messages (which correspond to random walks) and stores them in a queue to send
them out one by one in the following steps. The aim is to first collect and
then distribute messages corresponding to these walks. After the random walk
steps all nodes containing a random walk become \emph{active}. A
broadcasting procedure of $1/2\cdot\log\log{n}$ steps is used to increase the
number of informed nodes by a factor of $\BigTheta{\sqrt{\log{n}}}$. The entire
second phase runs in $4\log n/\log \log n$ rounds which correspond to the
outer \texttt{for}-loop in Phase~\RNum{2} of \autoref{alg:fast-gossiping}. Each round consists
of $\BigO{\log n}$ steps. Thus, the running time of this phase is in
$\BigO{\log^2n/\log\log n}$.

Note that although random walks carry some messages, we assume in our analysis
that the nodes visited by the random walks do not receive these messages from
the random walks. That is, the nodes are not necessarily informed
\emph{after} they were visited by a random walk and thus are not accounted to
$I_m$.

In the following, we consider an arbitrary but fixed round $r$ that belongs to the second phase with $1 \leq r
\leq 4\log{n}/\log\log{n}$. Whenever we use the expression $I_m(r)$, we mean the set of informed
nodes at the beginning of the corresponding round $r$, even though the informed
set may be larger in some step of this round.

At the beginning of each round, every node flips a coin. With a probability of
$\ell / \log{n}$, where $\ell$ denotes a large constant, the node
starts a random walk. We first need to bound the total number of
random walks which are initiated. As their number does not depend on the
underlying graph, we can use the result of \cite{BCEG10} for the number of
random walks 
and obtain
$\BigTheta{n/\log{n}}$ random walks \whp. Therefore, the bounds on the message
complexity of $\BigO{n\log{n}/\log\log{n}}$ are met during the random walks phase.
In the following we only consider random walks that carry an arbitrary but fixed
message $m$.

We observe that these random walks are not independent from each other, since a
random walk $w$ incoming at node $v$ is enqueued into a queue $q_v$. Therefore,
$w$ may be delayed before it is sent out again by $v$ and this delay is based
on the number of (other) random walks that are currently incident at node $v$.
If $v$ eventually sends out the random walk $w$, we say $w$ makes a
\emph{move}. It is now an important observation that the actions of the random walks in a specific step
are \textbf{not} independent from each other. Their \emph{moves},
however, are.

Now a question that arises naturally is whether the number of moves made by an
arbitrary but fixed random walk $w$ is large enough to \emph{mix}. This
question is covered in \autoref{lem:random-walks-mix}, where we will argue that
the number of moves taken by every random walk is $\BigOmega{\log{n}}$ and
therefore larger than the mixing time of the network. In the following lemmas,
especially in \autoref{claim:number-of-random-walks-in-I}, we will also require
that the random walks are not correlated, which clearly is not true if we
consider the steps made by the algorithm. However, the moves of the random
walks are independent from each other. That is, after mixing time moves, the node that hosts random walk
$w$ after its $i$-th move is independent from the nodes that host any other of
the random walks after their $i$-th moves. We furthermore require, \eg{}, in
\autoref{claim:random-walks-end-in-uninformed-area}, that after some mixing
steps the random walks are distributed (almost) uniformly at random over the
entire graph. This is enforced as we stop every random walk once it has reached
$c_{\text{moves}}\cdot\log{n}$ moves for some constant $c_{\text{moves}}$. Note, that we implicitly attach a counter to each
random walk which is transmitted alongside the actual message. In the first
inner \texttt{for}-loop in Phase~II of \autoref{alg:fast-gossiping} we then
refuse to enqueue random walks that have already made enough moves.

Note that starting with \autoref{claim:number-of-random-walks-in-I}, when we talk about random walks in a certain step $i$
we always mean each random walk after its $i$-th
move. This does not necessarily have to be one single step of the algorithm,
and the corresponding random walks are scattered over multiple steps.
Since, however, the moves of the random walks are independent from each other,
the actual step can be reinterpreted in favor of the random walk's movements.
What remains to be shown is that every random walk makes indeed
$\BigOmega{\log{n}}$ moves. This is argued in the following lemma.

\begin{lemma} \label{lem:random-walks-mix}
The random walks started in Phase~II of \autoref{alg:fast-gossiping} make
$\BigOmega{\log{n}}$ moves, \whp.
\end{lemma}

\begin{proof}

\newcommand\inlinefrac[2]{#1/#2}
\let\nicefrac\frac
At the beginning we fix one single random walk $r$, and let $P$ be the sequence
of the first $\log n/4$ nodes visited by this random walk, whenever $r$ is
allowed to make a move. Note that some nodes in $P$ may be the same (e.g., the
random walk moves to a neighbor $v$ of some node $u$, and when the random walk
is allowed to make a move again, then it jumps back to $u$). Clearly, the
number of random walks moving around in the network is $\BigO{\inlinefrac{n}{\log n}}$, \whp.
For simplicity, let us assume that there
are exactly $\inlinefrac{n}{\log n}$ random walks (a few words on the general case are given at the end of this proof). We now consider the number of
vertices in the neighborhood $N(v)$ of a vertex $v$ which host a random walk
at some time step $i$. We show by induction that for each time step
$1 \leq i \leq \inlinefrac{\log n}{4}$ and any node $v$ with probability $1-\inlinefrac{2i}{n^3}$ it holds that

\begin{enumerate}

\item The number of vertices hosting at least one random walk is at most
$\nicefrac{d}{\log n}\left(1+\nicefrac{2i}{\log n}\right)$. \\ This set is
denoted by $N_1(v)$.

\item The number of vertices hosting at least two random walks is at most
$\nicefrac{d}{\log^2 n}\left(1+\nicefrac{2i}{\log n}\right)$. \\ This set is
denoted by $N_2(v)$.

\item The number of vertices hosting $3$ or more random walks is at most
$\nicefrac{di}{\log^3 n}$. \\ This set is denoted by $N_3(v)$.

\end{enumerate}

For the proof we condition on the event that there are at most $4$ circles
involving $v$, the nodes of the first neighborhood of $v$, and the nodes of the
second neighborhood of $v$. Note that this event holds with very high
probability for a large range of $d$ (i.e., $d \leq n^{\alpha}$ for some
$\alpha$ constant but small, see e.g., \cite{DFS09}, \cite{BES14}, or for
random regular graphs a similar proof done by Cooper, Frieze, and Radzik
\cite{CFR09}. These edges can be treated separately at the end and are neglected
for now. Moreover, in the configuration model it is possible to have multiple
edges or loops. However, for this range of degrees there can be at most
constant many, which are treated as the circle edges mentioned above at the end.
We use $c_{\text{circle-edges}}$ to denote the constant for the number of multiple edges and circle edges.
For the
case $d > n^{\alpha}$ different techniques have to be applied, however, a
similar proof as in the complete graph case can be conducted. For now, we assume
that $d \geq \log^5 n$. For random regular graphs of degree $d \in
[\log^{2+\epsilon}n, \log^5 n]$ the proof ideas are essentially the same, 
however, at several places an elaborate case analysis becomes necessary.

Now to the induction. In the first time step, the hypothesis obviously holds,
\ie{}, each node starts a random walk with probability $1/\log n$,
independently. Assume now that the induction hypothesis holds for some time
step $i$, and we are going to show that it also holds for step $i+1$. Note that
the assumption holds in the neighborhood of each node, and thus, also in the
neighborhoods of the nodes of $N(v)$. We start by showing claim 3. In each
step, every node of $N_1(v)$ will release a random walk. There are $d$ vertices
in $N(v)$, and $di/\log^3 n$ nodes with at most $3$ random walks.  A node of
$N(v) \setminus N_2(v)$ becomes an $N_3(v)$ node with probability at most
\begin{equation}
\binom{ \frac{d}{\log n}\left(1+\frac{2i}{\log n}\right) }{ 3 } \cdot \frac{1}{d^3} \enspace . \label{eq3}
\end{equation}
(Note that a more exact calculation involves the sum $\sum_{i=2}^{|N_1(w)|} \binom{|N_1(w)|}{3}\left(\frac{1}{d}\right)^3\left(1-\frac{1}{d}\right)^{\left(|N_1(w)|-i\right)}$ where $w$ is a neighbor of $v$. This sum can be approximated efficiently by using bounds on the tail of the binomial distribution. We work here with the simpler expression in Eq.~(\ref{eq3}).)
Therefore the expected value of these nodes is at most 
\[
\Expected{Z} = \binom{ \frac{d}{\log n}\left(1+\frac{2i}{\log n}\right) }{ 3 } \cdot \frac{1}{d^3} \cdot d \enspace .
\]
Since we only consider the nodes which are not involved in any cycles and do
not have multiple edges each node $w'$ in the second neighborhood of $v$ sends
a random walk to the corresponding neighbor in $N(v)$ independently of the
other nodes in the second neighborhood. Thus we can apply Chernoff bounds and obtain that the number
of the nodes in $N(v)$ which receive a random walk is $\Expected{Z}\left(1+o(1)\right)$.

An $N_2(v)$ node becomes an $N_3(v)$ node with probability
\[ \binom{ \frac{d}{\log n}\left(1+\frac{2i}{\log n}\right) }{ 2 } \cdot \frac{1}{d^2} \enspace .\]
Again, since the neighborhoods of the different nodes are disjoint (up to at
most $4$ edges, which can be treated separately and therefore are neglected in
the future), we may apply Chernoff bounds, and obtain an upper bound for $N_3(v)$ as follows.
\begin{align*}
& \binom{ \frac{d}{\log n}\left(1+\frac{2i}{\log n}\right) }{ 3 } \cdot \frac{1}{d^3} \cdot d \cdot \left(1+o(1)\right) + \\ 
& \binom{ \frac{d}{\log n}\left(1+\frac{2i}{\log n}\right) }{ 2 } \cdot \frac{1}{d^2} \cdot \frac{d}{\log^2n} \cdot \left(1+\frac{2i}{\log n}\right) \cdot \left(1+o(1)\right) + \\
& |N_3(v)| + 
 c_{\text{circle-edges}}
\end{align*}
Recall that we initially neglected circle edges and multiple edges. In the
worst case, the nodes incident at these edges send a random walk to $N_3(v)$ (as well as to $N_2(v)$ and $N_1(v)$) in every step
and therefore the last expression $c_{\text{circle-edges}}$ denotes a constant
for these additional incoming messages.
Noting that furthermore $i < \inlinefrac{\log n}{4}$ we obtain that $N_3(v)
\leq \inlinefrac{d(i+1)}{\log^3 n}$ in the next step, \whp.

Concerning the $N_2(v)$ nodes, a node being in $N(v) \setminus N_2(v)$ becomes an $N_2(v)$ node with probability
\[ \binom{ \frac{d}{\log n} \left(1+\frac{2i}{\log n}\right) }{ 2 } \cdot \frac{1}{d^2} \enspace .\]
Similarly, an $N_2(v)$ node will still remain in $N_2(v)$ with
probability $\nicefrac{1}{\log n}\left(1+\nicefrac{2i}{\log n}\right)$. Applying again Chernoff bounds for both cases separately, we obtain
the result. Additionally, we add the $N_3(v)$ nodes to the $N_2(v)$ nodes, and a similar calculation as above shows that the
given bound is not exceeded.

Now we concentrate on nodes in $N_1(v)$. A node being in $N(v) \setminus N_2(v)$ becomes (or remains) an $N_1(v)$ node with probability 
\[\frac{1}{\log n} \left(1+\frac{2i}{\log n}\right) \enspace .\]
Note that there can be at most $d$ nodes in this set. Applying Chernoff bounds as above and adding
the $N_2(v)$ nodes to this set, we obtain the upper bound.

Now, we know that every neighborhood $N(v)$ has at most $\nicefrac{d}{\log n}\left(1+\nicefrac{2i}{\log n}\right)$
nodes which possess at least one random walk in step $i$, \whp. This implies that in each time
step, the number of random walks sent to $v$ is a random variable which has a
binomial distribution with mean \[ \frac{1}{\log n}\left(1+\nicefrac{2i}{\log n}\right) \leq \nicefrac{3}{2\log{n}} \enspace . \]
That is, if we denote by $X_v$ this random variable then $X_v$ can be modeled by the sum of $\inlinefrac{3d}{2\log{n}}$ Bernoulli random variables with success probability $\inlinefrac{1}{d}$.
Thus, within $\inlinefrac{\log n}{4}$
steps, $v$ collects in total at most $X$ random walks, where
\begin{equation*}
 \Probability{X > \frac{3}{4} \cdot c \cdot \log n / \log\log n} 
 \leq \left( \frac{e^{\frac{c\log n}{\log\log n}-1}}{\left(\frac{c \log n}{\log\log n}\right)^{ \frac{c \log n}{\log\log n}}} \right)^{\frac{3}{4}} 
 \leq \frac{1}{n^5} \enspace ,
\end{equation*}
if the constant $c$ is large enough. This also implies that
at any node, there will be no more than $\inlinefrac{c\log n}{\log\log n}$ many random walks for some proper $c$,
and hence, if a random walk arrives, it is enough to consider the last $\inlinefrac{c\log n}{\log\log n}$ steps. That is, when a random walk arrives to a node, the number
of random walks can be represented by the sum of $\inlinefrac{c \log n}{\log\log n}$ independent
random variables $X_v$ (as described above) with binomial distribution having mean $\BigO{\inlinefrac{1}{\log n}}$ each. The
probability above is an upper bound, and this bound is independent of the distribution
of the random walks among the vertices (conditioned on the event that the
induction hypotheses 1, 2, and 3 hold, which is true with very high
probability).

Consider now some time steps $t_1, t_2, \dots, t_{i}, \dots$ which denote
movements of the random walk $r$ from one vertex to another one. Whenever $r$ makes
a move at some time $t_i$, it has to wait for at most \begin{equation}
\sum_{j=t_i - \nicefrac{c\log n}{\log \log n} + 1}^{t_i} X_j
\label{eq1}
\end{equation}
steps, where $X_j$ is a random variable having binomial
distribution with mean $O\left(\inlinefrac{1}{\log n}\right)$. One additional step after this waiting time $r$ will make a move.
If a random walk leaves a node twice, at time $t_i$ and $t_j$ respectively (with $t_i < t_j$), then we consider the sum in Eq. (\ref{eq1}) from $\max\left\{t_j-\inlinefrac{c\log{n}}{\log\log{n}}+1, t_i + 1\right\}$.
Observe that $t_i$ is a random variable that depends on $t_{i-1}$ and the random variable for above waiting time.
In order to have
\begin{equation}
\sum_{i=1}^t \left( \sum_{j=t_i-\frac{c\log n}{\log \log n}+1}^{t_i} X_j(t_i) + 1  \right) = \frac{\log n}{4}
\label{eq2}
\end{equation}
with some probability at least
$\inlinefrac{1}{n^2}$, $t$ must be $\BigOmega{\log n}$ where $X_j(t)$ is a random variable with the same properties as $X_j$ described above. This implies that
within $\inlinefrac{\log n}{4}$ steps, $r$ makes $\BigOmega{\log n}$ moves, \whp.
This holds, since \begin{equation*}
\Probability{\sum_{i=1}^t  \sum_{j=t_i-\frac{c\log n}{\log \log n}+1}^{t_i} \sum_{k=1}^{\frac{3d}{2}} X_{ijk} \geq \frac{\log{n}}{4}} = \frac{1}{n^{\omega(1)}}
\end{equation*}
for $t = \BigO{\log n}$. The sum $\sum_{k=1}^{\frac{3d}{2}} X_{ijk}$ represents the random variable $X_j(t_i)$ (see above, where $X_{ijk}$ is a Bernoulli random variable with success probability $\inlinefrac{1}{d}$) and the second sum represents the inner sum from Eq. (\ref{eq2}).
Observe that above sum represents an upper bound on the sum of the random walks that random walk $r$ meets when moving from one node to another according to the sequence $P$ defined at the beginning. That is, the sum gives the time $r$ has to wait at the nodes without making a move.

Note that in the proof we showed that if at the beginning there are
$\inlinefrac{n}{\log n}$ randomly chosen nodes starting a random walk, then each
random walk makes at least $\BigOmega{\log{n}}$ moves \whp. If we start
$\ell\inlinefrac{n}{\log{n}}$ random walks, then the proof can be adapted
accordingly so that $\BigOmega{\log{n}}$ moves are also performed by each random
walk, \whp. (The calculations become a bit more complex, however.) Noting that the eigenvalues
of the transition matrix of these graphs are inverse polynomial in $d$, the random walks
are well mixed. \qedhere 

\end{proof}

\begin{lemma} \label{claim:number-of-random-walks-in-I}
During the $\BigTheta{\log{n}}$ steps that follow the coin flip, $I_m(r)$ is visited by random walks at least
$\BigOmega{|I_m(r)|}$ times, \whp.
\end{lemma}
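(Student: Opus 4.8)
The plan is to count visit incidences by re-indexing the walks in \emph{move time} and running a snapshot argument, one per move index. Recall the facts already established for Phase~\RNum{2}: there are \BigTheta{n/\log n} random walks, each makes \BigOmega{\log n} moves by \autoref{lem:random-walks-mix}, and the moves of distinct walks are independent. The graph is (almost) $d$-regular, so the random walk has a nearly uniform stationary distribution, and since the relevant eigenvalue of the transition matrix is inverse polynomial in $d \geq \log^{2+\epsilon}n$ its mixing time is \LittleO{\log n} (indeed \BigO{\log n/\log\log n}). Consequently each walk spends all but an \LittleO{\log n} prefix of its \BigOmega{\log n} moves in the mixed regime, and any such post-mixing move lands in $I_m(r)$ with probability $\left(1\pm\LittleO{1}\right)|I_m(r)|/n$, independently across walks.

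First I would fix a move index $i$ past the mixing time and let $S_i$ count the walks whose $i$-th move lies in $I_m(r)$. By cross-walk independence $S_i$ is a sum of independent indicators with
\[
\Expected{S_i} = \left(1\pm\LittleO{1}\right)\cdot\frac{|I_m(r)|}{n}\cdot\BigTheta{\frac{n}{\log n}} = \BigTheta{\frac{|I_m(r)|}{\log n}} \enspace .
\]
Because $|I_m(r)| \geq \log^k n$ with $k \geq 6$, this mean is \BigOmega{\log^{k-1}n}, hence at least \BigOmega{\log^5 n}, so a Chernoff bound gives $S_i \geq \frac{1}{2}\Expected{S_i}$ with probability $1-n^{-\LittleOmega{1}}$.

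Next I would take a union bound over the \BigOmega{\log n} post-mixing move indices to conclude that \emph{all} of them satisfy $S_i = \BigOmega{|I_m(r)|/\log n}$ simultaneously, \whp. Summing the snapshot counts then yields
\[
\sum_i S_i \;\geq\; \BigOmega{\log n}\cdot\BigOmega{\frac{|I_m(r)|}{\log n}} \;=\; \BigOmega{|I_m(r)|} \enspace ,
\]
which lower-bounds the total number of visits (we simply discard pre-mixing visits, which only help). Using a union bound across move indices, rather than an independence assumption, is exactly what lets us ignore the correlation between the positions of a single walk at its different moves.

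The main obstacle is justifying the two ingredients of each snapshot: that after the mixing time a walk is almost uniformly distributed \emph{and} that the walks are independent at a fixed move index. Both are inherited from the discussion preceding \autoref{lem:random-walks-mix}: the spectral gap bounds the total variation distance from uniform to an inverse polynomial once the move index exceeds the mixing time, and the independence of moves across walks is precisely the structural property emphasized there. All estimates are taken conditioned on the relevant high-probability events (the \BigTheta{n/\log n} bound on the number of walks, the mixing guarantee, and the induction hypotheses of \autoref{lem:random-walks-mix}), and a final union bound over all messages carries the statement to every $m$, \whp.
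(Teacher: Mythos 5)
Your proposal follows essentially the same route as the paper's proof: a per-snapshot count of walks located in $I_m(r)$ with expectation $\BigTheta{|I_m(r)|/\log n}$, Chernoff concentration via the independence of moves across distinct walks, a union bound over the $\BigTheta{\log n}$ snapshots, and summation to obtain $\BigOmega{|I_m(r)|}$ total visits. The only difference is cosmetic -- you make explicit the move-time reindexing and the mixing-time justification for near-uniformity that the paper states in the discussion preceding \autoref{lem:random-walks-mix} and then uses implicitly.
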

\begin{proof}
Let $m$ denote an arbitrary but fixed message and $I_m(r)$ the corresponding
set of vertices that are informed of $m$ at the beginning of a round $r$.
Depending on the coin flip each node starts a random walk with probability
$\ell/\log{n}$ and therefore we have a total number of random walks in
$\BigTheta{n/\log{n}}$.  Let $X$ denote the random variable for the number of
random walks that currently reside in $I_m(r)$ in an arbitrary but fixed step
of round $r$. In expectation we have $\Expected{X} = |I_m(r)| \cdot \ell /
\log n$ such random walks. We use Chernoff bounds on $X$ and obtain that
\begin{equation*}
\Probability{|X-\Expected{X}| > \frac{\Expected{X}}{\log{n}} } \leq n^{-\BigOmega{\frac{|I_m(r)|}{\log^3n}}} \enspace .
\end{equation*}
Therefore, we conclude that this number of random walks is concentrated around
the expected value \whp and thus is in $\BigTheta{|I_m(r)|/\log n}$. Since these
random walk moves are not correlated and choose their next hop uniformly at random
we conclude that in any such step the number of random walks that reside in
$I_m(r)$ is in $\BigTheta{|I_m(r)|/\log n}$ \whp. Using union bounds over all
$\BigTheta{\log n}$ steps following the coin flip we conclude that there are
$\BigTheta{|I_m(r)|}$ random walk visits in the set of informed vertices
$I_m(r)$ in these $\BigTheta{\log n}$ steps, \whp.
\end{proof}

Note that a rigorous analysis of the behavior of similar parallel random walks
on regular graphs has been already considered by Becchetti
et~al.~\cite{BCNPS15}.

These $\BigTheta{|I_m(r)|}$ random walks do not necessarily need to be distinct.
It may happen that a single random walk visits the set $I_m(r)$
multiple times, in the worst case up to $\Theta(\log n)$ times. We therefore have to give bounds
the number of random walks that visit $I_m(r)$ only a constant number of
times.

We now distinguish two cases. Let $\kappa$ denote a constant.
In the following, we consider only \textit{sparse} random graphs with expected node
degree $d$ for which
$\log^{2+\epsilon}n \leq d \leq \log^\kappa n$.
We observe that if $d \leq
\log^\kappa n$ the informed set consists of several
connected components, which we call regions, that have a diameter of at most $\BigO{\log\log{n}}$ each
and a distance between each other of at least $\BigOmega{\log\log{n}}$ (see
\refLemma{claim:random-walks-end-in-uninformed-area}).

Let $v$ denote an arbitrary but fixed vertex and let $T(v)$ denote the subgraph
induced by nodes that can be reached from $v$ using paths of length at most
$\BigO{\log\log n}$. It has been shown in Lemma 4.7 from \cite{BES14} that
$T(v)$ is a \textit{pseudo-tree} \whp, \ie{}, a tree with at most a constant
number of additional edges. Therefore, we can assign an orientation to all
edges of $T(v)$ in a natural way, pointing from the root node $v$ towards the
leafs. Thus, any edge in $T(v)$ is directed from $v_1$ to $v_2$ if $v_1$ is in
at most the same level as $v_2$. We consider edges violating the tree property
with both nodes on the same level as oriented in both ways. Whenever a random
walk takes a step that is directed towards the root of the tree, we speak of a
\emph{backward move}.

\begin{lemma} \label{claim:new-lemma-rw}
Assume $d \leq \log^\kappa n$. An arbitrary but fixed random walk leaves
the set of informed vertices $I_m(r)$ to a distance in $\BigOmega{\log\log{n}}$ and
does not return to $I_m(r)$ with a probability of at least $1 -
\log^{-2}{n}$.
\end{lemma}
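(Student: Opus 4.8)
The plan is to exploit the pseudo-tree structure of $T(v)$ together with the strong outward drift of a random walk on a graph of degree $d \geq \log^{2+\epsilon}n$. I would root $T(v)$ at the node $v$ currently hosting the walk and track the \emph{level} $L_t$, i.e., the distance from $v$ after the walk's $t$-th move. Since $T(v)$ is a pseudo-tree \whp by Lemma~4.7 of \cite{BES14}, every node other than the root has exactly one parent edge and $d-1$ child edges, apart from at most $c_{\text{circle-edges}}$ exceptional edges originating from cycles and multiple edges. Consequently, conditioned on the history, the probability that a single move is \textbf{not} strictly forward (a backward move towards the root, or a sideways move along an exceptional edge) is at most $(1+c_{\text{circle-edges}})/d = \BigO{1/d}$, uniformly over the walk's position. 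This lets me stochastically dominate the level process from below by a simple biased walk on the integers that advances with probability $1-\BigO{1/d}$ and retreats with probability $\BigO{1/d}$.

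I would split the argument into an \emph{escape} phase and a \emph{non-return} phase. For the escape phase, consider the first $\BigTheta{\log\log n}$ moves. By a union bound over these moves, the probability that any of them fails to be strictly forward is at most $\BigO{\log\log n / d} \leq \BigO{\log\log n / \log^{2+\epsilon}n} = \LittleO{\log^{-2}n}$, using $\log\log n = \LittleO{\log^{\epsilon}n}$. Hence \whp the walk makes $\BigTheta{\log\log n}$ consecutive forward moves and reaches level $\BigOmega{\log\log n}$. Because each informed region has diameter $\BigO{\log\log n}$ while distinct regions are separated by distance $\BigOmega{\log\log n}$ (the region decomposition preceding \refLemma{claim:random-walks-end-in-uninformed-area}), choosing the escape length to be a suitable constant multiple of these bounds guarantees that after this phase the walk has left the region containing $v$ and sits at distance $\BigOmega{\log\log n}$ from $I_m(r)$.

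For the non-return phase I would invoke a gambler's-ruin estimate on the dominating biased walk. Starting from level $D = \BigOmega{\log\log n}$, the probability that the level process ever drops back by $D$ — which is necessary to re-enter the starting region — is at most $\left(\BigO{1/d}\right)^{D}$. With $d \geq \log^{2+\epsilon}n$ and $D = \BigOmega{\log\log n}$ this equals $\exp\!\left(-\BigOmega{(\log\log n)^2}\right)$, which is far smaller than $\log^{-2}n$. Since this is an infinite-horizon bound it automatically covers all $\BigO{\log n}$ moves of the walk, so no further union bound over time is needed. Combining the two phases by a union bound over their two failure events yields the claimed success probability of at least $1-\log^{-2}n$.

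The main obstacle is the rigorous treatment of the pseudo-tree's exceptional edges together with the coupling to the biased walk: one must verify that the constant number of cycle and multiple edges — which may temporarily introduce additional backward or sideways transitions — can be absorbed into the $\BigO{1/d}$ per-step bound without destroying the stochastic domination, and that the level in $T(v)$ is a faithful proxy for distance to $I_m(r)$ (the region separation from the paragraph before \refLemma{claim:random-walks-end-in-uninformed-area} is exactly what makes this translation valid). A secondary subtlety is that the relevant process is formed by the walk's \emph{moves} rather than the raw algorithmic steps; but this reinterpretation was already justified in \autoref{lem:random-walks-mix}, so I may work directly with moves throughout.
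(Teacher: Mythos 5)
Your proposal is correct and follows essentially the same route as the paper: the paper likewise splits the argument into an escape phase over the first $O(\log\log n)$ moves (\autoref{claim:walk-back-constant-time}, which tolerates a constant number of backward moves via a binomial tail bound rather than demanding all moves be strictly forward) and a non-return phase across a safety belt of width $\tfrac{1}{2}\log\log n$ (\autoref{claim:no-return-after-loglogn-steps}, which bounds the total number of backward moves among $O(\log n)$ tries instead of invoking a gambler's-ruin estimate), both resting on the same $O(1/d)$ per-move backward probability derived from the pseudo-tree structure of $T(v)$. The quantitative differences are immaterial, since in both versions each phase fails with probability $o(\log^{-2}n)$, so the combined bound of $1-\log^{-2}n$ follows exactly as in the paper.
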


To show \refLemma{claim:new-lemma-rw} we first introduce and show Lemmas
\ref{claim:walk-back-constant-time} and
\ref{claim:no-return-after-loglogn-steps}.

\begin{lemma} \label{claim:walk-back-constant-time}
Assume $d \leq \log^\kappa n$. Any random walk originating in a node of $T(v)$ takes in the first $2\log\log{n}$ steps only a constant number of
backward moves with probability at least $1-\log^{-3}n$. \seeProof{appendix:omitted-proofs-1}
\end{lemma}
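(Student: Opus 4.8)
The plan is to exploit the (pseudo-)tree structure of $T(v)$ recalled before the statement. In a genuine $d$-ary tree a node at level $\ell\geq 1$ has exactly one neighbor one level closer to the root (its parent) and its remaining $d-1$ neighbors one level farther (its children); the parent edge is the only backward edge, so a uniformly random step is backward with probability exactly $1/d$. Since Lemma~4.7 of \cite{BES14} guarantees that $T(v)$ is a pseudo-tree, only a constant number of edges (the cycle edges and the same-level edges, which by our orientation convention are \emph{not} counted as backward) deviate from this ideal. Hence at most a constant number of nodes possess more than one backward edge, and even those possess only $\BigO{1}$ backward edges. The upshot I would record first is: \emph{at every step, conditioned on the walk's current location, the probability of a backward move is at most $c'/d$ for a constant $c'$} (at the root this probability is $0$, which is consistent with the bound).

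To make this conditional bound available at each of the $2\log\log n$ steps I would apply the pseudo-tree property on a neighborhood of radius $3\log\log n$ instead of $\log\log n$; since the walk starts inside $T(v)$ and takes only $2\log\log n$ steps, it never leaves this enlarged pseudo-tree, and the per-step bound $c'/d$ therefore holds throughout, irrespective of the history.

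Next I would turn the conditional per-step bound into a tail bound on the total number $B$ of backward moves. Because $\Probability{B_i=1\mid \text{history}}\leq c'/d$ for each step $i$, the sum $B$ is stochastically dominated by a binomial variable with parameters $2\log\log n$ and $c'/d$, whence for any constant $c$
\begin{equation*}
\Probability{B \geq c} \leq \binom{2\log\log n}{c}\left(\frac{c'}{d}\right)^{c} \leq \left(\frac{2c'\log\log n}{d}\right)^{c}.
\end{equation*}
With $d \geq \log^{2+\epsilon} n$ the base $\BigO{\log\log n}/\log^{2+\epsilon} n$ is $o(\log^{-2} n)$, so a suitable constant (already $c=2$) gives $\Probability{B \geq c} \leq \log^{-3} n$ for large $n$, which is the assertion.

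The main obstacle will be the rigorous justification of the uniform $\BigO{1/d}$ conditional backward probability. I must verify that the constant budget of extra pseudo-tree edges (the $c_{\text{circle-edges}}$ edges already isolated in the proof of \autoref{lem:random-walks-mix}) can inflate the backward degree of only $\BigO{1}$ nodes and each only by $\BigO{1}$, and that passing through such an exceptional node does not invalidate the stochastic-domination step. A minor accompanying point is to confirm that the enlarged-radius pseudo-tree structure still holds \whp and to fix the orientation of the same-level edges so that lateral steps are never miscounted as backward moves.
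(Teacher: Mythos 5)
Your proposal is correct and follows essentially the same route as the paper: the pseudo-tree property of $T(v)$ gives a per-step backward probability of $\BigO{1/d}$, and a binomial tail bound over the $2\log\log n$ steps yields $\Probability{X\geq c}\leq\log^{-3}n$. If anything, you are more careful than the paper's own proof, which simply ``assumes $X$ has a binomial distribution'' and does not address the walk possibly leaving $T(v)$, whereas you justify these points via stochastic domination and an enlarged-radius pseudo-tree.
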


\begin{proof}
We consider an arbitrary but fixed random walk $w$ that is informed with $m_v$, \ie{}, it carries $m_v$,
and focus on the first $\log\log n$ steps after $w$ was informed for the
first time. Let $X_i$ denote the random variable for the orientation of the
edge taken by $w$ in the $i$-th step, defined as
\begin{equation*}
X_i = \begin{cases}
		1 & \text{if $w$ takes a back edge in step $i$} \\
		0 & \text{otherwise.}
	\end{cases}
\end{equation*}
From the pseudo-tree-property of $T(v)$ we can conclude that the probability
of $w$ using a back edge is at most $\BigO{1/d}$, since every node has one edge to its
parent and additionally at most a constant number of edges that are
directed backwards.

Let $c\geq 3$ denote a constant. We define the random variable $X = \sum_{i=1}^{\log\log n} X_i$ for the
number of back edges taken by $w$ in $2\log\log{n}$ steps with 
expected value $\Expected{X} \leq \BigO{\log\log{n}/d}$. 
Since we can assume that $X$ has a binomial distribution we can directly derive the probability that more than a constant number of $c$
steps taken by $w$ are backward steps using $ \binom{n}{k} \leq \left(\frac{n\cdot e}{k}\right)^{k} $
as follows.
\begin{align*}
& \Probability{X \geq c} \\
&= \sum_{i=c}^{2 \log\log n} \binom{2 \log\log n}{i} \cdot \left(\frac{\BigO{1}}{d}\right)^{i} \cdot \left(1 - \LittleO{1}\right)^{2\log\log n - i} \\
&< \sum_{i=c}^{2 \log\log n} \left(\frac{2 \log\log n \cdot e}{i}\right)^{i} \cdot \left(\frac{\BigO{1}}{d}\right)^{i} \\
&< \sum_{i=c}^{2 \log\log n} \left(\frac{\BigO{\log\log n} }{i \cdot \sqrt{d}}\right)^{i} \cdot \left(\frac{1}{\sqrt{d}}\right)^{i}\\
&\leq \BigO{\log\log n} \cdot \left(\frac{\BigO{\log\log n} }{\sqrt{d}}\right)^{c} \cdot \left(\frac{1}{\sqrt{d}}\right)^{c}\\
&< \log^{-c}n \leq \log^{-3}n \qedhere
\end{align*}
%
\end{proof}

In the following we consider random walks that are more than $\log\log n$
steps away from the set of informed nodes.

\begin{lemma} \label{claim:no-return-after-loglogn-steps}
Assume $d \leq \log^\kappa n$. The probability that a random walk does not
return to an informed region $T(v)$ in $\BigO{\log n}$ steps once it has a
distance to the region that is greater than $\log\log n$ steps is at least $1 -
\log^{-3}n$. \seeProof{appendix:omitted-proofs-1}
\end{lemma}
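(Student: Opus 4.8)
The plan is to build on the backward-move estimate of \refLemma{claim:walk-back-constant-time} together with the pseudo-tree structure of $T(v)$, which guarantees that inside the radius-$\BigO{\log\log n}$ ball around the informed region the walk behaves like a birth-and-death chain on the distance to the root with a strong outward drift. Concretely, since in the pseudo-tree every node has exactly one parent edge and at most a constant number of excess (cycle or multiple) edges among its roughly $d$ incident edges, a single move decreases the distance to the informed region with probability $q = \BigO{1/d}$ and increases it with probability $1-q$. I would fix the constant hidden in the definition of $T(v)$ large enough that the pseudo-tree property holds out to distance $2\log\log n$, so that the entire range $[0,\log\log n]$ of distances lies safely inside the tree-structured region and the distance process moves by at most one per step.

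The key step is a decomposition of any returning trajectory. Suppose the walk, started at distance more than $\log\log n$, reaches the informed region at some time $T = \BigO{\log n}$. Since the distance changes by at most one per step, the walk must pass through distance exactly $\log\log n$; let $s<T$ be the \emph{last} such time. After $s$ the distance never again equals $\log\log n$, so the walk must move inward at $s+1$ and thereafter stay strictly below $\log\log n$ until it hits the region. Hence the whole final descent takes place inside the tree-structured ball, where the distance process is exactly the outward-biased chain above. This last-visit device is what lets me avoid a separate global or mixing analysis: the only part of the trajectory that matters is confined to the region where I have clean control of the transition probabilities, and excursions out into the bulk can only lengthen, never shorten, the mandatory descent.

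It then remains to bound the probability of a single descent and to union over its possible starting times. By a standard gambler's-ruin estimate for the biased chain on $\{0,1,\dots,\log\log n\}$, the probability of reaching $0$ before returning to $\log\log n$, started one step inside, is at most $\left(\frac{q}{1-q}\right)^{\log\log n-1} = \left(\BigO{1/d}\right)^{\log\log n}$. With $d \geq \log^{2+\epsilon}n$ this is $2^{-\BigOmega{(\log\log n)^2}}$. Because the descent can begin at any of the $\BigO{\log n}$ steps, a union bound gives a return probability of at most $\BigO{\log n}\cdot 2^{-\BigOmega{(\log\log n)^2}}$, which is far smaller than $\log^{-3}n$ since $\BigOmega{(\log\log n)^2}$ dominates the $\BigO{\log\log n}$ contributions coming from both the union factor $\BigO{\log n}$ and the target $\log^{-3}n$. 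This yields the claimed probability $1-\log^{-3}n$.

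The part I expect to require the most care is justifying the uniform inward-move bound $q=\BigO{1/d}$ throughout the relevant neighborhood and absorbing the $\BigO{1}$ excess pseudo-tree edges (and, in the configuration model, the constantly many loops and multiple edges) without inflating $q$ beyond $\BigO{1/d}$; these are precisely the edges treated separately in \refLemma{claim:walk-back-constant-time} and in \autoref{lem:random-walks-mix}, and the same bookkeeping carries over. A secondary subtlety is ensuring the last-visit decomposition stays valid once the walk wanders beyond radius $2\log\log n$ into the bulk, which is harmless because such excursions only increase the distance to the informed region.
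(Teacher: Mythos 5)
Your proposal is correct, but it reaches the bound by a different combinatorial device than the paper. The paper's proof ignores the order of the walk's moves entirely: it observes that to cross the safety belt the walk must accumulate at least $\tfrac12\log\log n$ backward moves in total among the $a\log n$ tries, each backward move having probability $\BigO{1/d}\leq 4/\log^2 n$ by the pseudo-tree property, and then bounds $\Probability{X\geq \tfrac12\log\log n}$ directly by a binomial tail estimate using $\binom{n}{k}\leq(ne/k)^k$, arriving at roughly $(1/\log n)^{\log\log n/2}\cdot\BigO{\log n}\leq\log^{-3}n$. You instead keep track of the trajectory: a last-exit decomposition at distance $\log\log n$ followed by a gambler's-ruin estimate for the outward-biased distance chain, and a union bound over the $\BigO{\log n}$ possible starting times of the final descent. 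Both routes rest on the same two structural facts (backward probability $\BigO{1/d}$ inside the pseudo-tree, and distance changing by at most one per move) and both yield bounds of order $2^{-\BigOmega{(\log\log n)^2}}\cdot\log n$, far below the stated $\log^{-3}n$. The paper's counting argument is more elementary and sidesteps any Markov-chain formalism; yours is the more standard hitting-probability argument and handles cleanly, via the last-exit device, the worry about excursions outside the tree-structured ball (which the paper deals with only implicitly). The one point where your write-up is slightly too strong is the claim that the distance process \emph{is} the biased birth-and-death chain: since the transition probabilities depend on the actual node (number of excess edges, loops in the configuration model), you only get stochastic domination by that chain, which is all the gambler's-ruin upper bound needs but should be stated as such.
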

\begin{proof}
Let $w$ denote an arbitrary but fixed random walk and let $a$ denote a constant. We use a Markov chain to model and analyze the behavior of the random walk $w$ with
respect to its distance to $I_m$. Let $X$ denote a random variable for the
number of steps $w$ takes backward. Because of the pseudo-tree property the
probability that the random walk moves backward can be bounded by $p' = \BigO{1/d}$ for any node with
distance $\BigO{\log\log n}$ to the root. Thus, the probability that $w$
takes $\tau$ backward steps in a total of $a \log n$ tries can be
bounded by 
\begin{align*}
\Probability{X = \tau} & \leq \binom{a \log{n}}{\tau} \left(\frac{4}{\log^2n}\right)^{\tau} \left(1 - \frac{4}{\log^2n}\right)^{a \log{n} - \tau} \\
& < \binom{a \log{n}}{\tau} \left(\frac{4}{\log^2n}\right)^{\tau} \\
\intertext{which gives using $\binom{n}{k} \leq \left(\frac{n\cdot e}{k}\right)^k$}
\Probability{X = \tau} & < \left( \frac{e a \log n}{\tau} \right)^\tau \left(\frac{4}{\log^2n}\right)^{\tau}  = \left( \frac{4 e a}{\tau \log{n}} \right)^\tau \enspace .\\
\intertext{We now consider only those random walks that have a distance larger than
$\log\log n$ to the root node of the local informed tree. Note that there
remains a \textit{safety belt} around the informed set, since the broadcasting
procedure performed by each random walk at the end of the round (see last \texttt{For}-loop in Phase II of \autoref{alg:fast-gossiping}) builds up a tree with height at most
$1/2\cdot\log\log n$. We investigate  $\tau = 1/2\cdot\log\log n$, the distance to cross this \textit{safety belt}, 
 and observe}
\Probability{X = \tau} & \leq \left( \frac{1}{\log{n}} \right)^{\log\log{n}/2} \\
\intertext{and therefore}
\Probability{X \geq \tau} &< \sum_{\tau = \log\log n/2}^{a \log n} \left(\frac{4 e a}{\tau \log n} \right)^\tau \\
& < \left( a \log{n} - \log\log n/2 \right) \left( \frac{1}{\log{n}} \right)^{\log\log{n}/2} \\
& \leq \log^{-3}n \enspace .\qedhere
\end{align*}
\end{proof}

We are now ready to prove \refLemma{claim:new-lemma-rw}.

\begin{proof}[Proof of \refLemma{claim:new-lemma-rw}]
From \refLemma{claim:walk-back-constant-time} and
\refLemma{claim:no-return-after-loglogn-steps} we conclude that 
with probability at least 
\begin{equation*}
1 - \left(1 - \frac{1}{\log^{3}n} \right)\left(1 - \frac{1}{\log^{3}n} \right)  \geq 1 - \log^{-2}n
\end{equation*}
an arbitrary but fixed random walk $w$ leaves the set of informed vertices to
some distance in $\BigO{\log\log n}$ and does not return. Together, this yields \refLemma{claim:new-lemma-rw}.
\end{proof}

We will show in \refLemma{claim:random-walks-end-in-uninformed-area} that the
distance between the informed regions is at least $\BigOmega{\log\log n}$.
Thus we can show \refLemma{claim:theta-f-n-distinct-random-walks} using the following definition.

\begin{definition}[Safe Area]
\label{def:safe-area}
A \emph{safe area} is a set of nodes that are uninformed and
have distance at least $\log\log n$ to any informed node.
\end{definition}

\begin{lemma} \label{claim:theta-f-n-distinct-random-walks}
Assume $d \leq \log^\kappa n$. The number of random walks that visit
$I_m(r)$ at most a constant number of times is 
$\BigTheta{|I_m(r)|}$ \whp.
\end{lemma}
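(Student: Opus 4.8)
The plan is to prove the two inclusions separately, with the upper bound essentially free and the lower bound carrying the real content. For the upper bound I would note that the number of \emph{distinct} random walks ever visiting $I_m(r)$ is at most the total number of visits, which by (the proof of) \autoref{claim:number-of-random-walks-in-I} is $\BigTheta{|I_m(r)|}$, \whp. In particular the walks visiting at most a constant number of times number at most $\BigO{|I_m(r)|}$, so only the matching lower bound remains.

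The guiding idea for the lower bound is that almost every visit to $I_m(r)$ is the \emph{first and only} visit of the walk that makes it, so the $\BigOmega{|I_m(r)|}$ visits guaranteed by \autoref{claim:number-of-random-walks-in-I} are spread over $\BigOmega{|I_m(r)|}$ distinct walks, nearly all of which return only $\BigO{1}$ times. Concretely, since by \autoref{claim:random-walks-end-in-uninformed-area} distinct informed regions lie at distance $\BigOmega{\log\log n}$ from one another, whenever a walk leaves its current region it enters a safe area in the sense of \autoref{def:safe-area}; \autoref{claim:new-lemma-rw} then guarantees that after such a departure the walk fails to return to $I_m(r)$ with probability at least $1 - \log^{-2} n$. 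Modeling the successive departures of a single walk through the independent \emph{moves} established before \autoref{lem:random-walks-mix}, the number $N_w$ of visits of a fixed walk $w$ is stochastically dominated by a geometric variable with return probability $\leq \log^{-2} n$; fixing a constant $c_0 \geq 1$, the probability that $w$ visits more than $c_0$ times is thus at most $\log^{-2 c_0} n$.

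Let $V = \BigTheta{|I_m(r)|}$ denote the total number of visits from \autoref{claim:number-of-random-walks-in-I}, let $D$ be the number of distinct visiting walks, and let $R = \sum_w (N_w - 1) = V - D$ be the total number of \emph{repeat} visits. Charging one return to each departure and using the per-departure bound $\leq \log^{-2} n$ gives $\Expected{R} \leq \log^{-2} n \cdot \BigO{|I_m(r)|} = \LittleO{|I_m(r)|}$. Conditioning on the induction event of \autoref{lem:random-walks-mix}, under which the moves of different walks are independent, $R$ is a sum of essentially independent bounded indicators, so a Chernoff bound yields $R = \LittleO{|I_m(r)|}$ \whp; the bound is in fact strong, since $|I_m(r)| \geq \log^k n$ with $k \geq 6$ makes $\Expected{R}$ polylogarithmically large whenever it is non-negligible. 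Hence $D = V - R = \BigTheta{|I_m(r)|}$, and writing $B$ for the set of walks visiting more than $c_0$ times we have $c_0 |B| \leq R = \LittleO{|I_m(r)|}$, so $|B| = \LittleO{|I_m(r)|}$. Removing $B$ from the $\BigTheta{|I_m(r)|}$ distinct visiting walks leaves $\BigTheta{|I_m(r)|}$ walks visiting at most $c_0$ times, which is exactly the claim.

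The main obstacle is the concentration step: within a single algorithmic step the walks are genuinely correlated because they share the queues $q_v$, so the Chernoff argument must be carried out over the \emph{moves} of the walks rather than over the algorithm's steps, precisely as prepared in the discussion preceding \autoref{lem:random-walks-mix}. A secondary subtlety is justifying the geometric domination of the per-walk return count, which needs each departure from a region to land genuinely in a safe area rather than in a neighboring region; this is where the $\BigOmega{\log\log n}$ separation of \autoref{claim:random-walks-end-in-uninformed-area} together with the safety belt built by the final broadcasting sub-phase are indispensable, since otherwise a walk could shuttle between nearby regions and accumulate many visits.
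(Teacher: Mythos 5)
Your proposal is correct and follows essentially the same route as the paper: both arguments combine the $\Theta(|I_m(r)|)$ total visit count from \autoref{claim:number-of-random-walks-in-I} with the $\log^{-2}n$ per-walk return probability obtained from the safe-area/safety-belt structure (\autoref{claim:walk-back-constant-time}, \autoref{claim:no-return-after-loglogn-steps}, \autoref{claim:random-walks-end-in-uninformed-area}), and close with a Chernoff bound over the walks. The only difference is bookkeeping: you bound the total number of repeat visits $R=V-D$ and deduce $|B|\leq R/c_0$, whereas the paper bounds the number of walks returning more than $c$ times via the inequality $W\leq c|Q|+\log n\cdot|P|$; the two accountings are interchangeable.
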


\begin{proof}
Let $c$ denote a constant. We examine steps $s \in [\log n, 2\log n]$ after the coin flip. In \refLemma{claim:number-of-random-walks-in-I} we showed that the number of
random walks visits in the informed set $I_m(r)$ is in $\BigTheta{|I_m(r)|}$ \whp. Let $W$
denote this number.
Let furthermore $Q$ be the set of random walks that visit $I_m(r)$ at most a
constant number of $c$ times and let $P$ be the set containing all the other
random walks. The inequality
\begin{equation*}
W \leq c\cdot |Q| + \log{n}\cdot|P|
\end{equation*}
holds since the random walks in $Q$ hit $I_m(r)$ at most $c$ times, and the
random walks in $P$ at most $\log{n}$ times, respectively.
%
The probability
that a random walk does not leave the set of informed vertices to a distance of
$\log\log n$ can be bounded by $\log^{-3} n$ according to
\refLemma{claim:walk-back-constant-time}. Furthermore, we
need to show that with probability $\log^{-2} n $ the random walk hits
any other informed region at most a constant number of times. This follows from
the idea of a \textit{safety belt} as described in the proof of
\refLemma{claim:no-return-after-loglogn-steps}, where we observed that the
probability that a random walk returns through this region of distance
$1/2\cdot\log\log n$ to any informed node can be bounded by $\log^{-3} n$.
A simple union bound over all $\BigTheta{\log n}$
steps gives a probability of $\log^{-2} n$ that a random walk hits an informed node.

It is crucial that in above analysis we regard only informed nodes that arose
from random walks broadcasting in a \textit{safe area} according to Definition
\ref{def:safe-area},
thus giving us above setup of informed balls, safety belts and long distances
between informed regions. We show these properties in \refLemma{claim:random-walks-end-in-uninformed-area}.
Therefore, we can bound the probability that an individual
random walk visits $I_m(r)$ more often than a constant number of $c$ times by $\log^{-2}n$

We now bound the number of random walks in $P$, \ie{}, the number of
random walks that return more often than a certain constant number of $c$
times. Let the indicator random variable $X_i$ be defined for a random walk $w_i$ as
\begin{equation*} 
X_i = \begin{cases}
		1 & \text{if $w_i$ returns more often than $c$ times} \\
		0 & \text{otherwise.}
	\end{cases}
\end{equation*}
The random variable $X =|P|= \sum_{i=1}^{W} X_i$ describes the number of random
walks that return more often than $c$ times. The expected value of $X$
can be bounded by $\Expected{X} \leq W\cdot\log^{-2}n$.
Since all random walks are independent we apply Chernoff bounds and obtain for
sufficiently large $n$
\begin{align*}
\Probability{X \geq \left( 1 + \frac{1}{\log n} \right)\Expected{X}} &\leq e^{-\frac{\BigTheta{|I_m(r)|}}{3\log^{4}n} } 
\leq n^{-\LittleOmega{1}} \enspace .
\end{align*}
Therefore, \whp
\begin{align*}
W  &\leq c \cdot |Q| + \log {n} \cdot |P|\\
& \leq c \cdot |Q| + \log n \cdot \left(1 + \frac{1}{\log{n}}\right) \cdot \frac{W}{\log^{2}n}
\intertext{and thus}
c\cdot|Q| &\geq W \cdot\left(1 - \left(\frac{1}{\log{n}} + \frac{1}{\log^{2}{n}}\right)\right)
\end{align*}
which gives $|Q| = \BigTheta{W}$. Since $W = \BigTheta{|I_m(r)|}$ we
finally obtain that $|Q| = \BigTheta{|I_m(r)|}$.
\end{proof}

\begin{lemma} \label{claim:random-walks-end-in-uninformed-area}
Assume $d \leq \log^{\kappa} n$. 
The number of
random walks that terminate in a \emph{safe area} 
is in $\BigTheta{|I_m(r)|}$.
\seeProof{appendix:omitted-proofs-1}
\end{lemma}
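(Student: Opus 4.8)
The plan is to combine the count of message-carrying walks from \refLemma{claim:theta-f-n-distinct-random-walks} with a volume bound on the neighborhood of the informed set and the rapid mixing of the walks. By \refLemma{claim:theta-f-n-distinct-random-walks} there are $W = \BigTheta{|I_m(r)|}$ distinct random walks that visit $I_m(r)$ only a constant number of times and hence carry $m$; since the number of $m$-carrying walks is $\BigO{|I_m(r)|}$ to begin with, it suffices to prove the lower bound, namely that a constant fraction of these $W$ walks terminate in a \emph{safe area}, i.e.\ at distance at least $\log\log n$ from $I_m(r)$. First I would bound the \emph{danger zone} $D$, the set of all nodes within distance $\log\log n$ of $I_m(r)$. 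Since $d \leq \log^\kappa n$ and, by the pseudo-tree property (Lemma~4.7 of \cite{BES14}), a ball of radius $\log\log n$ contains at most $d^{\BigO{\log\log n}} = 2^{\BigO{(\log\log n)^2}}$ nodes, and since we operate in the regime $|I_m(r)| \leq n\cdot 2^{-\log n/\log\log n}$, I get $|D| \leq |I_m(r)|\cdot 2^{\BigO{(\log\log n)^2}} = n\cdot 2^{-\BigTheta{\log n/\log\log n}} = \LittleO{n}$. Thus the safe area $V \setminus D$ comprises a $(1-\LittleO{1})$ fraction of the vertices.

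Next I would invoke mixing. The transition matrix of the graph has second eigenvalue inverse polynomial in $d$, so the spectral gap is $1-\LittleO{1}$ and the walks mix in $\BigO{\log n/\log\log n}$ moves; since every walk is run for $c_{\text{moves}}\log n$ moves, its terminal distribution is within $n^{-\BigOmega{1}}$ of uniform in total variation. Hence a single walk terminates inside $D$ with probability at most $|D|/n + n^{-\BigOmega{1}} = \LittleO{1}$, that is, it lands in a safe area with probability $1-\LittleO{1}$. The role of \refLemma{claim:new-lemma-rw} here is to guarantee that each of the $W$ message-carrying walks actually exits its starting region into the uninformed bulk, to distance $\BigOmega{\log\log n}$, and does not return, so that the uniform approximation for the terminal position is not skewed by the walk lingering inside $I_m(r)$.

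Finally I would apply concentration. Although the positions of the walks within a single algorithmic step are correlated, their \emph{moves} are independent (as established before \refLemma{lem:random-walks-mix}), so after mixing the indicator that the $i$-th message-carrying walk terminates in a safe area can be treated as independent across walks. A Chernoff bound over the $W = \BigTheta{|I_m(r)|}$ walks then yields that $\BigTheta{|I_m(r)|}$ of them terminate in safe areas, \whp. Since a walk terminating in a safe area triggers, via the subsequent broadcasting of $1/2\cdot\log\log n$ steps, a fresh informed ball of radius at most $1/2\cdot\log\log n$ that is separated from $I_m(r)$ by $\BigOmega{\log\log n}$, this simultaneously establishes the separation of the informed regions asserted just before the lemma and maintains the inductive structure for the next round.

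The main obstacle will be the conditioning in the mixing step: the $W$ walks are precisely those that passed through $I_m(r)$, so their trajectories are not unconditioned walks, and one must argue that conditioning on having visited $I_m(r)$ and, via \refLemma{claim:new-lemma-rw}, on having escaped it, does not destroy the near-uniformity of the terminal position. I would handle this by decomposing each such walk at its last exit from $I_m(r)$ — which by \refLemma{claim:new-lemma-rw} occurs with $\BigOmega{\log n}$ moves still remaining — and applying the mixing bound to the residual walk started at the exit vertex, so that near-uniformity, and hence the $1-\LittleO{1}$ safe-area probability, holds independently of the pre-exit history. A secondary subtlety is that the $\LittleO{n}$ danger-zone bound, and therefore the very notion of a safe area relative to $I_m(r)$, relies on $I_m(r)$ consisting of well-separated regions of diameter $\BigO{\log\log n}$; this must be carried as an inductive invariant, closed precisely by the separation conclusion drawn in the previous paragraph.
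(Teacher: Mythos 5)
Your proposal follows essentially the same route as the paper's proof: bound the volume of the ``danger zone'' by $|I_m(r)|\cdot d^{\log\log n} = \LittleO{n}$, use the $\BigO{\log n}$ mixing moves to argue each walk lands near-uniformly and hence in a safe area with probability $1-\LittleO{1}$, and finish with a Chernoff bound over the $\BigTheta{|I_m(r)|}$ walks (the paper handles the dependence by sequentially uncovering walks and discarding those landing too close to previously uncovered ones, which is the same device you invoke for separation between the resulting broadcast balls). Your additional care about the conditioning on having visited and escaped $I_m(r)$ goes beyond what the paper writes out, but does not change the argument.
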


\begin{proof}
Let $W$ denote the number of random walks. Each random walk performs at the end 
$O(\log n)$ mixing steps. Thus, the random walks are distributed
(almost) uniformly at random over the entire graph. For the analysis, we now proceed as
follows. We uncover one random walk after another, thereby omitting random
walks that stopped too close to another previously uncovered random walk. For
each of these steps, the probability $p_{\text{unsafe}}$ that a random walk ends up in an
unsafe area can be bounded as follows.
\begin{equation*}
p_{\text{unsafe}} \leq \frac{|I_m(r)| d^{ \log \log n}}{n} \leq \frac{\log^{\kappa \log \log n}n}{2^{\log n/\log \log n}}
\end{equation*}
We define for every random walk $w_i$ an indicator random variable $X_i$ as
\begin{equation*} 
X_i = \begin{cases}
		1 & \text{if $w_i$ ends in an unsafe area} \\
		0 & \text{otherwise}
	\end{cases}
\end{equation*}
and bound the random variable $X = \sum_{i=1}^WX_i$ representing the number of
random walks that end within an unsafe area. The expected number of these walks is $\Expected{X} = p_{\text{unsafe}} W$.
Since all random walks are independent,
applying Chernoff bounds yields for large $n$
\begin{equation*}
\Probability{X \geq \left(1+\frac{1}{\log n}\right)\Expected{X}} \leq e^{-\frac{\Expected{X}}{3\log^2 n}} \leq n^{-\LittleOmega{1}}\enspace .
\end{equation*}
Therefore, there are $\BigTheta{|I_m(r)|}$ random walks in safe areas \whp.
\end{proof}

\begin{lemma} \label{claim:random-walks-do-not-hit-each-other}
Assume $d > \log^\kappa n$. A random walk visits the set $I_m(r)$ at
most a constant number of times with probability at least $1 -
\log^{-2} n$. Furthermore, the  number of random walks that visit the
set $I_m(r)$ a constant number of times is
$\BigTheta{|I_m(r)|}$ \whp.
\end{lemma}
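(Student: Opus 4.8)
The plan is to treat this dense regime as the easy counterpart of the sparse analysis: once $d > \log^{\kappa}n$ the random walk mixes so quickly, and the informed set $I_m(r)$ is so thin (recall $|I_m(r)| \le n\cdot 2^{-\log n/\log\log n}$), that a walk which leaves an informed node essentially never returns, and no \emph{safe area} or \emph{safety belt} construction as in \refLemma{claim:new-lemma-rw} is needed. I would prove the two assertions in turn: first the per-walk bound (a walk visits $I_m(r)$ only a constant number of times with probability $1-\log^{-2}n$), and then the counting statement, which will follow from the per-walk bound by exactly the argument of \refLemma{claim:theta-f-n-distinct-random-walks}.

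For the per-walk bound I would work in the configuration model under deferred decisions. When the fixed walk $w$ makes a move out of its current node it follows one of that node's stubs; unless the stub was already paired, it is matched to a uniformly random free stub, so the probability the move lands in $I_m(r)$ is at most the fraction of free stubs belonging to $I_m(r)$, namely $(1+\LittleO{1})\,|I_m(r)|/n$. Since $w$ explores only $\BigTheta{\log n}$ vertices in total (each walk is capped at $c_{\text{moves}}\log n$ moves and makes $\BigOmega{\log n}$ of them by \refLemma{lem:random-walks-mix}), the already-paired stubs are a negligible fraction and the per-move hitting probability is bounded by $q \le 2\,|I_m(r)|/n \le 2^{\,1-\log n/\log\log n} = \LittleO{\log^{-2}n}$ for every node $w$ can reach. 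Conditioning on the history, the number of visits to $I_m(r)$ is then stochastically dominated by $1+\mathrm{Bin}\!\left(\BigO{\log n},\,q\right)$, the leading $1$ accounting for the walk's informed starting node, and a binomial tail bound via $\binom{n}{k}\le (ne/k)^k$ gives $\Probability{w\text{ visits }I_m(r)\ge c+1\text{ times}} \le \bigl(\BigO{\log n}\,q\bigr)^{c} \le \log^{-2}n$ for any constant $c$, which is in fact far stronger than required.

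The counting statement then mirrors \refLemma{claim:theta-f-n-distinct-random-walks} almost verbatim. By \refLemma{claim:number-of-random-walks-in-I}, which makes no sparseness assumption, the total number $W$ of random-walk visits to $I_m(r)$ is $\BigTheta{|I_m(r)|}$ \whp. Splitting the walks into the set $Q$ that hit $I_m(r)$ at most $c$ times and the set $P$ that hit it more often gives $W \le c\,|Q| + \log n\,|P|$. By the per-walk bound each walk lies in $P$ with probability at most $\log^{-2}n$, so $\Expected{|P|} \le W\log^{-2}n$, and since the \emph{moves} of distinct walks are independent a Chernoff bound yields $|P| \le (1+\LittleO{1})\,W\log^{-2}n$ \whp. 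Substituting gives $c\,|Q| \ge W\left(1-\LittleO{1}\right)$, whence $|Q| = \BigTheta{W} = \BigTheta{|I_m(r)|}$.

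I expect the only real work to be in justifying the uniform per-move bound $q$. Two points need care: the informed set $I_m(r)$ was itself built by the gossip process using the graph's edges, so one must argue -- again through deferred decisions, revealing the walk's edges only as it traverses them -- that the stub $w$ follows is still uniform among the free stubs and essentially independent of how $I_m(r)$ was formed; and the bound must hold simultaneously for every vertex $w$ might occupy, which calls for a union bound together with a Chernoff/balls-into-bins estimate on the number of informed neighbours of a node, separating the regimes $d\,|I_m(r)|/n = \BigO{1}$ and $d\,|I_m(r)|/n = \LittleOmega{1}$. In both regimes the fraction of informed neighbours is $\BigO{|I_m(r)|/n}$, so $q$ stays super-polynomially small; the remaining subtlety is merely to confirm that the move-independence invoked in the counting Chernoff bound survives this conditioning, exactly as emphasized in the discussion preceding \refLemma{claim:number-of-random-walks-in-I}.
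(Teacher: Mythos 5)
Your proposal follows essentially the same route as the paper: a per-move bound on the probability of landing in $I_m(r)$ obtained via deferred decisions in the configuration model, a union/binomial bound over the \BigO{\log n} moves for the first assertion, and then the $W \leq c\,|Q| + \log n\,|P|$ decomposition with negative correlation and Chernoff bounds for the counting assertion (the paper's own dense-case proof applies Chernoff directly to the number of ``failing'' walks, which is the same argument in a slightly different dress). One intermediate claim is overstated, though: your uniform per-move hitting probability $q \leq 2|I_m(r)|/n = 2^{1-\log n/\log\log n}$ ignores the event that the walk follows an already-paired stub, which leads to a fixed node that may well be informed; since each node pairs up to $\BigO{\log^2 n/\log\log n}$ of its $d > \log^\kappa n$ stubs over the run, this event has probability $\BigO{\log^{2-\kappa} n}$ per move --- only polylogarithmically small, and it dominates the $|I_m(r)|/n$ term by a huge margin. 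The paper treats ``opens a used stub'' as a separate failure event with exactly this probability, which is why its per-step guarantee is $1-\log^{-3} n$ rather than superpolylogarithmically close to $1$; your final conclusion survives (the union bound over $\BigO{\log n}$ moves still yields $1-\log^{-2} n$ once $\kappa \geq 5$), but the claim that the bound is ``far stronger than required'' rests on the incorrect value of $q$.
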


\begin{proof}
Since our algorithm runs for at most $\BigO{\log^2 n / \log\log n}$ time, each
node has during the second phase at least $\BigOmega{d}$ free stubs available.
Therefore we bound the probability that in step $t$ an arbitrary but fixed
random walk $w$ located at node $v$ opens an already used stub or connects to 
a node $u$ in the informed set $I_m(t)$ as follows.
\begin{align*}
\Probability{v \text{ opens a used stub}} & \leq \BigO{\log^{-\kappa + 2}}\\
\Probability{u \in I_m(t)} & \leq |I_m(t)| \cdot d / \left(n \left( d - \log^2{n} \right)\right)
\end{align*}
Therefore, we obtain a probability $p'$ that an unused stub
is chosen and the corresponding communication partner was not previously
informed of $p' > 1 - \log^{-3} n$. 
We apply union bounds over all random walk steps and observe that with
probability $p' > 1 - \log^{-2} n$ a random walk does not hit any
other informed node.

To show the second part of \refLemma{claim:random-walks-do-not-hit-each-other}
we analyze the random walks phase from the following point of view. We know
that with probability at most $\log^{-2} n$ a random walk hits the informed set.
Therefore, we consider the experiments of starting one random walk after
another. Each of these trials fails with at most above probability. However,
the trials are negatively correlated, since we omit those random walks that
interfere with the informed set and thus also with another random walk. Note
that we only regard those random walks as valid that do not interfere with the
informed set at least once and only choose communication stubs that have not been
previously selected.

Since the correlation is negative we can apply Chernoff bounds on the the
number of random walks that fail. Let $X_i$ denote an indicator random variable
for the $i$-th random walk, defined as 
\begin{equation*} 
X_i = \begin{cases}
        1 & \text{if the $i$-th random walk fails} \\
        0 & \text{otherwise.}
    \end{cases}
\end{equation*}
Let furthermore $X$ be the number of random walks that fail, defined as $X =
\sum_{i=1}^{W}X_i$. From \refLemma{claim:number-of-random-walks-in-I} we obtain
that the total number of random walks visits in $I_m(t)$ is in
$\Theta(|I_m(t)|)$. The expected value of $X$ can be bounded by
$\Expected{X} \leq W\cdot \log^{-2} n = \LittleO{|I_m(t)|}$. We show that $X$ is
concentrated around its expected value as follows.
\begin{equation*}
\Probability{X \geq \left(1+\frac{1}{\log n}\right)\Expected{X}} \leq e^{-\frac{\Expected{X}}{3\log^2 n}} \leq n^{-\LittleOmega{1}}
\end{equation*}
Therefore we have only $\LittleO{|I_m(t)|}$ random walks that exhibit undesired
behavior \whp and thus the lemma holds.
\end{proof}

\begin{lemma} \label{claim:broadcasting-works}
The broadcasting procedure during the last $1/2\log \log n$ steps of a round
$r$ in Phase~\RNum{2} informs $\BigTheta{|I_m(r)| \cdot \sqrt{\log n}}$ nodes,
\whp.
\end{lemma}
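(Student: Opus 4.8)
The plan is to read the last inner \texttt{for}-loop of Phase~\RNum{2} of \autoref{alg:fast-gossiping} as a collection of $\BigTheta{|I_m(r)|}$ almost independent push broadcasts, one rooted at each random walk that terminated while carrying $m$, and to show that each such broadcast doubles its informed set in every one of its $1/2\cdot\log\log n$ steps and therefore reaches $2^{1/2\log\log n}=\sqrt{\log n}$ nodes. First I would pin down the seeds. For $d\le\log^\kappa n$, \refLemma{claim:theta-f-n-distinct-random-walks} gives $\BigTheta{|I_m(r)|}$ random walks that hit $I_m(r)$ (and hence carry $m$) only a constant number of times, and \refLemma{claim:random-walks-end-in-uninformed-area} shows that a constant fraction of them terminate in a \emph{safe area}, \ie{} at distance at least $\log\log n$ from every informed node; the uncovering argument in that proof moreover keeps only walks that are $\BigOmega{\log\log n}$ apart from one another. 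The case $d>\log^\kappa n$ is handled analogously through \refLemma{claim:random-walks-do-not-hit-each-other}. These terminal nodes are exactly the nodes that become active, and I would treat each as the root of its local pseudo-tree (Lemma~4.7 of \cite{BES14}).

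Next I would analyze one seed in isolation. Since every active node pushes in each step and every freshly hit node becomes active, the informed set of a seed at most doubles per step, which already gives the upper bound $\sqrt{\log n}$. For the matching lower bound I would argue that a push almost surely reaches a fresh, uninformed neighbor. In the pseudo-tree a node that has been informed for $j\le1/2\log\log n$ steps has informed at most one new child per step, so together with its parent and the $\BigO{1}$ additional cycle or multi-edges it has only $\BigO{\log\log n}$ informed neighbors; hence a uniform push lands on an already-informed neighbor with probability $\BigO{\log\log n/d}=\LittleO{1/\log\log n}$. Two of the at most $\sqrt{\log n}$ simultaneous pushes can collide on a common uninformed neighbor only if their sources are at tree-distance two, which happens with probability $\BigO{1/d^2}$ per pair, so the expected waste summed over all $1/2\log\log n$ steps is $\BigO{(\log\log n)^2/\log^{2+\epsilon}n}=\LittleO{1}$. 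Consequently one seed informs $(1-\LittleO{1})\,2^{1/2\log\log n}=\BigTheta{\sqrt{\log n}}$ nodes in expectation while never exceeding $\sqrt{\log n}$.

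Finally I would combine the seeds. Because the seeds are $\BigOmega{\log\log n}$ apart while each broadcast only reaches radius $1/2\log\log n$, the broadcast balls are pairwise disjoint and disjoint from $I_m(r)$; thus the newly informed nodes are genuinely new and distinct, and the per-seed counts $Y_i\in[1,\sqrt{\log n}]$ are independent with $\sum_i\Expected{Y_i}=\BigTheta{|I_m(r)|\cdot\sqrt{\log n}}$. A Chernoff bound applied to $\sum_i Y_i/\sqrt{\log n}$, a sum of $\BigTheta{|I_m(r)|}$ independent variables in $[0,1]$ with $|I_m(r)|\ge\log^k n$, then concentrates the total at $\BigTheta{|I_m(r)|\cdot\sqrt{\log n}}$ \whp, which is the claim. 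I expect the single-seed doubling step to be the main obstacle: one must rule out, uniformly over all $1/2\log\log n$ steps and for both degree regimes, that redundant pushes onto already-informed nodes or collisions among the up-to-$\sqrt{\log n}$ concurrent pushes erode the doubling. This is exactly where the pseudo-tree structure (bounding the number of informed neighbors by $\BigO{\log\log n}$) and the large degree $d\ge\log^{2+\epsilon}n$ (driving the per-push waste probability to $\LittleO{1/\log\log n}$) are indispensable; the disjointness and concentration steps are comparatively routine.
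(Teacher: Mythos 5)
Your proposal matches the paper's proof in all essentials: the same case split on $d$ at $\log^\kappa n$, the same seeding of the broadcast via Lemmas \ref{claim:theta-f-n-distinct-random-walks}, \ref{claim:random-walks-end-in-uninformed-area} and \ref{claim:random-walks-do-not-hit-each-other}, the same use of the pseudo-tree structure, the $O(1/d)$ back-edge probability and the safe-area separation to let each terminating random walk grow an essentially collision-free broadcast tree of size $\Theta(\sqrt{\log n})$, and the same final Chernoff concentration over the $\Theta(|I_m(r)|)$ seeds (with negative correlation standing in for independence in the dense case). The only difference is in how the single-seed step is executed: you bound the expected total waste per tree by $o(1)$ and pass to expectations, whereas the paper bounds the number of redundant (``backward'') connections per tree by a constant with probability $1-\log^{-3}n$ and observes that losing a constant number of top branches still leaves $\Theta(\sqrt{\log n})$ informed nodes --- a minor variation, not a different route.
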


\begin{proof}
Let $w$ denote an arbitrary but fixed random walk and let  $\kappa$ denote a
constant. We distinguish the following two cases
to show that the probability that a node $u_i$ opens a connection to an already
informed node can be bounded for both, sparse and dense random graphs by $\log^{-2} n$.

\textbf{Case 1: }$d \leq \log^\kappa n$.
Each random walk operates in its own \textit{safe area} as described in
\refLemma{claim:random-walks-end-in-uninformed-area}. That means, we only
consider random walks that have a distance of at least $\log\log{n}$
between each other. Therefore, in a broadcast procedure of at most
$1/2\cdot\log\log{n}$ steps no interaction between the corresponding broadcast
trees can occur.
Let $u_i$ be the $i$-th node with respect to a level-order traversal of the
message distribution tree of nodes informed by an arbitrary but fixed random walk. Let furthermore $X_i$
denote an indicator random variable for the connection opened by $u_i$ defined as 
\begin{equation*} 
	X_i = \begin{cases}
		1 & \text{if $u_i$ opens a back connection} \\
		0 & \text{otherwise.}
	\end{cases}
\end{equation*}

The claim follows from the \textit{pseudo-tree} structure of the local
subgraph, since every node has at most a constant number of edges directed
backwards and furthermore we only regard random walks in a safe area, \ie{}, random walks with a
distance of $\log\log{n}$ steps between each other. Therefore, the probability probability that the node $u_i$ opens a connection to an already
informed node can be bounded by $\BigO{1/d} \leq \log^{-2} n$.

We denote the random variable for the number of nodes that open backward
connections as $X$ and observe that $X \leq \sqrt{\log{n}}$ since the number of steps is $1/2\log\log{n}$. Using above indicator random variable we set $X =
\sum X_i$ with expected value $\Expected{X} \leq \log^{-3/2}n$.
Let $c \geq 3$ denote a constant. Since we can assume that $X$ has a binomial distribution we can
bound the probability that more than a constant number of $c$ nodes open backward
connections directly by $\Probability{X \geq c} \leq \log^{-3} n$ as follows.
\begin{align*}
&\Probability{X > c} \\
&= \sum_{i=c+1}^{\sqrt{\log{n}}} \binom{\sqrt{\log{n}}}{i} \left( \frac{1}{\log^2n} \right)^{i}  \left(1-\frac{1}{\log^2n} \right)^{\sqrt{\log{n}}-i} \\
&\leq \sum_{i=c+1}^{\sqrt{\log{n}}} \left(\frac{\sqrt{\log{n}}\cdot e}{i}\right)^i \left( \frac{1}{\log^2n} \right)^{i}  \left(1-\frac{1}{\log^2n} \right)^{\sqrt{\log{n}}-i} \\
&\leq \sum_{i=c+1}^{\sqrt{\log{n}}} \left(\frac{e}{i\log^{3/2}n}\right)^i \left(1-\frac{1}{\log^2n} \right)^{\sqrt{\log{n}}-i} \\
&\leq \left(\sqrt{\log n} - c - 1\right) \left(\frac{e}{\left(c+1\right)\log^{3/2}n}\right)^{c+1} \\
& \leq \frac{1}{\log^c{n}} \leq \log^{-3}n\enspace .
\end{align*}
In the worst case these $c$ nodes are the $c$ topmost nodes of the message
distribution tree and the corresponding branches of this tree are lost.
However, for a constant $c$ the resulting informed set is still in
$\BigTheta{\sqrt{\log{n}}}$.

\textbf{Case 2: } $d > \log^\kappa n$.
We consider the number of connection stubs that are available at an arbitrary
but fixed node $v$ and observe that the probability that $v$ opens an already
used stub can be bounded by
\begin{equation*}
\Probability{v \text{ opens a used stub to } u} \leq O(\log^{-\kappa + 2} n) \enspace ,
\end{equation*}
\ie{}, the total number of connections opened over the number of available
stubs. Furthermore, we bound the probability that the target stub belongs to a
node in the informed set as
\begin{equation*}
\Probability{u \text{ is informed}} \leq |I_m(t)| \cdot d / \left(n \left( d - \log^2{n} \right)\right) \enspace .
\end{equation*}

Therefore, the probability that either a previously used stub is opened or that
the target has already been informed can be bounded for sufficiently large $n$
by $\log^{-3} n$. We apply union bounds
and conclude that each random walk end informs a set of size
$\sqrt{\log{n}}$ after $1/2\log\log{n}$ steps with probability
$1-\log^{-2} n$.

\textbf{Both cases: }
We apply Chernoff bounds on the number of random walks that do not manage
to build up a sufficiently large informed set using broadcasting. In the first
case all random walks are clearly uncorrelated, since they live within their
own safe area. For the second case, we analyze the random walks one after
another as individual trials in our experiment. Whenever a random walk fails to
spread its message, we completely remove the entire random walk for our
analysis. We therefore have probabilities that are negatively correlated which
allows us to apply Chernoff bounds.

Let $X_i'$ denote an indicator random variable for a random walk $w_i$ 
 defined as
\begin{equation*} 
X_i' = \begin{cases}
		1 & \text{if the random walk $w_i$ fails broadcasting} \\
		0 & \text{otherwise.}
	\end{cases}
\end{equation*}
Let furthermore $X' = \sum_{i=1}^{W}X_i'$ denote the random variable for the
number of random walks that fail during the broadcasting steps with expected
value $\Expected{X'} \leq W/\log^2 n$ where $W$ is the total number of random walks. We show that $X'$ is
concentrated around the expected value using Chernoff bounds.
\begin{equation*}
\Probability{X' \geq \left(1+\frac{1}{\log n}\right)\Expected{X'}} \leq e^{-\frac{\Expected{X'}}{3\log^2 n}} \leq n^{-\LittleOmega{1}}
\end{equation*}

Since this result holds \whp, we have a set of informed nodes of size
$|I_m(r+1)| = \BigTheta{|I_m(r)| \cdot \sqrt{\log n}}$ and thus the claim holds.
\end{proof}

From \refLemma{claim:broadcasting-works} we obtain that the set of informed
vertices grows in each round by a factor of at least $\BigTheta{\sqrt{\log{n}}}$ as long
as the number of informed vertices is in $\BigO{n\cdot 2^{-\log n/\log\log
n}}$, \whp. %
Assume that the exact factor for the growth in each round is $a\sqrt{\log{n}}$ where $a$ denotes a constant. Then, the number of informed nodes that can be reached in Phase~\RNum{2} is at most
\begin{align*}
 & \left(a\sqrt{\log{n}}\right)^{4 \log n/\log \log n}\\
&= \left(a^2\right)^{2\log n/\log \log n} \cdot \left(\left(\sqrt{\log{n}}\right)^{2 \log n/\log \log n} \right)^2\\
& = \left(a^2 \cdot \sqrt{\log{n}}\right)^{2 \log n/\log \log n} \cdot \left(\sqrt{\log{n}}\right)^{2 \log n/\log \log n}\\
& \gg n \cdot 2^{-\log{n}/\log\log{n}} \enspace .
\end{align*}%
We apply a union bound over all messages and conclude we reach the bound on the number of informed vertices for Phase~\RNum{2}
after at most $4 \log n/\log \log n $ rounds \whp.

\Phase{3}{Broadcast}
In the last phase we use a simple push-pull broadcasting procedure to inform
the remaining uninformed nodes. Once $\Omega(n/2^{\log n/\log \log n})$ nodes
are informed after the second phase, within $O(\log n/\log \log n)$ additional steps at least $n/2$
nodes become informed, \whp. Furthermore, after additional $O(\log n/\log\log
n)$ steps, all nodes are informed \whp \cite{Els06}.

\begin{lemma} \label{claim:broadcast-1}
After applying push-pull for $O(\log n/\log \log n)$ steps, at most $n/\log{n}$
uninformed vertices remain for every message $m$, \whp. This procedure has a
runtime complexity in $\BigO{\log{n}/\log\log{n}}$ and an overall message complexity
in $\BigO{n\log{n}/\log\log{n}}$.
\end{lemma}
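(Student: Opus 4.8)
The plan is to track the informed fraction $|I_m(t)|/n$ and to reduce Phase~\RNum{3} to the standard push--pull analysis on random graphs of \cite{Els06}, checking that the complete-graph-style bounds of \cite{BCEG10} carry over. The single enabling fact is that, since the whole algorithm opens at most $\BigO{\log^2 n/\log\log n} = \LittleO{d}$ channels per node, by the principle of deferred decisions a channel opened in this phase is paired with an almost uniformly random vertex; hence a push reaches an uninformed node, and a pull reaches an informed node, with a probability equal to the corresponding global fraction up to a $1\pm\LittleO{1}$ factor. The complexity claims are immediate from the algorithm: the loop executes $8\log n/\log\log n$ steps, and in each step every node opens exactly one channel and transmits one combined packet, so the phase runs in $\BigO{\log n/\log\log n}$ time and uses $\BigO{n\log n/\log\log n}$ transmissions overall, irrespective of how many messages a packet carries.

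First I would treat the push-dominated regime. By \refLemma{claim:broadcasting-works} and the discussion following it, Phase~\RNum{2} leaves $\BigOmega{n\cdot 2^{-\log n/\log\log n}} = \BigOmega{n^{1-1/\log\log n}}$ nodes informed of a fixed message $m$, a polynomially large set. While $|I_m(t)| \leq n/2$, each informed node's push hits a currently uninformed vertex with probability at least $1/2-\LittleO{1}$; accounting for collisions among pushes, the expected number of newly informed vertices is $\BigTheta{|I_m(t)|}$, and since the informed set is polynomially large a Chernoff bound gives $|I_m(t+1)| \geq c\,|I_m(t)|$ for a constant $c>1$ with probability $1-n^{-\LittleOmega{1}}$. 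Iterating this constant-factor growth from $n^{1-1/\log\log n}$ up to $n/2$ requires $\BigO{\log n/\log\log n}$ steps, \whp.

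Next I would treat the pull-dominated regime. Once $|I_m(t)| \geq n/2$, every uninformed node that opens a channel reaches an informed partner with probability at least $1/2-\LittleO{1}$, so, writing $h_t = |H_m(t)|/n$, the uninformed set obeys $\Expected{|H_m(t+1)|} \leq |H_m(t)|\,(h_t+\LittleO{1})$. As long as the uninformed set is polynomially large a Chernoff bound keeps $|H_m(t+1)|$ concentrated, so it shrinks by a factor bounded away from $1$ in every step (and in fact super-exponentially while $h_t$ is bounded below), driving $|H_m(t)|$ below $n/\log n$ within a further $\BigO{\log\log n}$ steps --- comfortably inside the time budget.

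The main obstacle is the near-uniform-partner property underlying all of the above: one must show that the informed and uninformed sets stay well spread over the random graph, so that the per-vertex bounds on a partner's state hold simultaneously for \emph{every} vertex and not merely on average. This is precisely where the expansion of the random graph and the mixing established in Phase~\RNum{2} enter, and it is also the reason the clean argument is carried only down to $n/\log n$ uninformed nodes: for the last $n/\log n$ stragglers the uniform-partner estimate becomes too weak, and a separate argument (invoking \cite{Els06}) is required to reach full coverage. Granting this property, I finish with a union bound: each of the $\BigO{\log n/\log\log n}$ steps fails for a fixed message with probability $n^{-\LittleOmega{1}}$, and a union bound over all $n$ messages yields the statement \whp.
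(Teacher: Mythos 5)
Your proposal is correct and follows essentially the same route as the paper: the paper's proof simply invokes Lemma~4 of \cite{Els06} for the constant-factor growth from $\Omega(n/2^{\log n/\log\log n})$ informed nodes up to $n/2$ in $O(\log n/\log\log n)$ steps, and Lemma~5 of \cite{Els06} for the subsequent shrinking of the uninformed set to $n/\log n$ in $O(\log\log n)$ further steps, followed by the same union bound over messages. You unpack the content of those two cited lemmas (the two regimes and the quadratic decay of $|H_m(t)|$) and correctly identify the well-spread/near-uniform-partner property as the point where the random-graph structure is needed, which is precisely what the cited lemmas supply.
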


\begin{proof}
Lemma 4 from \cite{Els06} states that once the number of nodes possessing some
message $m$ is $\Omega(n/2^{\log n/\log \log n})$, then within additional
$O(\log n/\log \log n)$ steps the number of nodes informed of $m$ exceeds
$n/2$.\footnote{The two lemmas are stated w.r.t.~Erd\H{o}s-R\'enyi graphs. The same proofs, however, lead to the same statements for the configuration model, too.}
We observe that the number of informed nodes is within the bounds
required in Lemma 4 from \cite{Els06} for each message. We conclude that the
set of informed vertices underlies an exponential growth \whp. Therefore,
$|I_m(t)| \geq n/2$ after additional $\BigO{\log n/\log\log n}$ steps, using
$\BigO{n \log n/\log\log n}$ messages.
Furthermore, we apply Lemma 5 from \cite{Els06}, which states that after
additional $\BigO{\log\log{n}}$ steps it holds that for the uninformed set
$|H(t)| \leq n/\log{n}$ \whp.\footnotemark[\value{footnote}]
Since both, Lemma 4 and Lemma 5 from \cite{Els06} hold \whp $1 -
\LittleO{n^{-2}}$ we use union bound over all messages and conclude that these
results hold for all messages \whp.
\end{proof}

\begin{lemma} \label{claim:broadcast-2}
After $\BigO{\log{n}/\log\log{n}}$ steps, every remaining uninformed node is
informed of message $m$ \whp.
\end{lemma}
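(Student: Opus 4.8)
The plan is to run the push-pull process started in \refLemma{claim:broadcast-1} a little longer and to show that the pull direction alone empties the uninformed set within the remaining budget of $\BigO{\log n/\log\log n}$ steps. By \refLemma{claim:broadcast-1} we may start from $|H_m(t)| \leq n/\log n$ for every message $m$, \whp. The governing mechanism is that, once the informed set is an overwhelming majority, an uninformed node $u$ opens a channel to a uniformly random neighbor and becomes informed unless that neighbor is itself uninformed. A random neighbor of $u$ lies in $H_m(t)$ with probability essentially $|H_m(t)|/n$ (the correction coming from already-used stubs is negligible over the few steps of this phase), so $\Expected{|H_m(t+1)|} \leq |H_m(t)|^2/n \cdot \left(1+\LittleO{1}\right)$. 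This quadratic shrinkage is the heart of the argument.

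First I would iterate this bound. Starting from $|H_m| \leq n/\log n$ the expectation obeys $\Expected{|H_m(t)|} \leq n\cdot\log^{-2^t} n$ up to lower-order factors, so the uninformed set drops below $\sqrt{n}$ after only $\BigO{\log\log n}$ steps, comfortably inside the allotted $\BigO{\log n/\log\log n}$. While $\Expected{|H_m(t+1)|} = \BigOmega{\log n}$, that is while $|H_m(t)| = \BigOmega{\sqrt{n\log n}}$, the next count concentrates: conditioned on the graph and on the current uninformed set, the pull outcomes of distinct uninformed nodes are independent, so a Chernoff bound gives $|H_m(t+1)| \leq 2|H_m(t)|^2/n$ with probability $1-n^{-\BigOmega{1}}$. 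Chaining these bounds keeps the recursion valid down to $|H_m(t)| = \BigO{\log n}$.

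The delicate part is the tail, once $|H_m(t)|$ has fallen to $\BigO{\log n}$, where a Chernoff bound on the next count is no longer meaningful. Here I would switch to a structural argument: each remaining uninformed node still has $d \geq \log^{2+\epsilon}n$ essentially random neighbors, and the expected number of edges inside an uninformed set of size $h$ is about $h^2 d/(2n)$, which is $\LittleO{1}$ for $h = \BigO{\log n}$ in the sparse regime. Hence \whp no surviving uninformed node has an uninformed neighbor, and a single further pull step informs them all simultaneously; the dense regime is even easier, since a random neighbor is then uninformed with probability $\BigO{\log n/n}$ and a union bound over the $\BigO{\log n}$ survivors suffices. The main obstacle in making this rigorous is that $H_m(t)$ is not a fixed set but is determined by the entire broadcasting history and is therefore correlated with the graph; controlling this dependence is exactly the content of the cleanup lemma of \cite{Els06} already used in \refLemma{claim:broadcast-1}, which holds for both the Erd\H{o}s-R\'enyi and configuration models with failure probability $1-\LittleO{n^{-2}}$. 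I would invoke it to certify that the last $n/\log n$ uninformed nodes are eliminated in $\BigO{\log n/\log\log n}$ steps, and then take a union bound over all $n$ messages to obtain the statement \whp.
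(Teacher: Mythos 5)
Your proposal is correct in outline but takes a genuinely different route from the paper. You run the classic set-level analysis: the quadratic pull recursion $\Expected{|H_m(t+1)|} \leq |H_m(t)|^2/n\cdot\left(1+\LittleO{1}\right)$, Chernoff concentration while the uninformed set is large, and a structural endgame once $|H_m(t)| = \BigO{\log n}$ --- essentially the Karp et al.\ / \cite{Els06} Lemma 5--6 scheme that the paper says it ``adapts''. The paper's actual adapted proof is different and simpler: it never tracks the size of $H_m(t)$ at all. Instead it fixes a single uninformed node $v$ and argues, via the deferred-decision/free-stub bookkeeping, that in each step $v$ independently either wastes a stub (probability $\BigO{\log n/(d\log\log n)}$) or pulls from a partner that is still uninformed (probability $\BigO{1/\log n}$, using only the static bound $|H_m(t)| \leq n/\log n$ from \refLemma{claim:broadcast-1}); hence $v$ survives $4\log n/(c\log\log n)$ steps with probability at most $\left(1/\log n\right)^{4\log n/\log\log n} = n^{-4}$, and a union bound over nodes and messages finishes. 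The paper's route buys freedom from exactly the correlation issues you flag: it needs no concentration of $|H_m(t)|$ and no claim about the geometry of the surviving set. Your route terminates in fewer steps, but its weakest link is the one-shot endgame: the claim that \whp no two surviving uninformed nodes are adjacent treats $H_m(t)$ as a uniformly random set of size $\BigO{\log n}$, which it is not (it is selected by the process and hence correlated with the edge set), and the \cite{Els06} cleanup lemmas you invoke do not certify that particular geometric statement. The gap is easily repaired, and the repair essentially collapses into the paper's argument: each survivor has at most $|H_m(t)| = \BigO{\log n}$ uninformed neighbors out of $d \geq \log^{2+\epsilon}n$, so it fails a pull step with probability $\BigO{\log^{-1-\epsilon}n}$ by a purely deterministic count, and $\BigO{\log n/\log\log n}$ further steps drive the per-node failure probability to $n^{-\Omega(1)}$ without any structural claim.
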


The proof of \refLemma{claim:broadcast-2} is similar to Lemma 5 and Lemma 6
from \cite{Els06}. Our adapted version is as follows.

\begin{proof}
After performing the mixing steps during the random walk phase, we can assume
that each message is distributed uniformly at random nodes. From
\refLemma{claim:broadcast-1} we deduce that each node opens at most a number of
connections in $\BigO{\log{n}/\log\log{n}}$ after the last mixing phase,
whereas each node has at least $\log^{2+\epsilon}n$ communication stubs.
Additionally, we consider in the following phase $\BigO{\log{n}/\log\log{n}}$
\texttt{pull} steps. Each node can open up to $\BigO{\log{n}/\log\log{n}}$
additional connections during this phase and incoming connections from
uninformed nodes can be bounded by the same expression following a
balls-into-bins argument. We denote the number of opened stubs as $S$ with $S =
\BigO{\log{n}}$ and conclude that we still have at least $\log^{2+\epsilon}{n}
- S = \BigOmega{\log^{2+\epsilon}n}$ \textit{free} connection stubs available
which are not correlated to the message distribution process of message $m$ in
any way.

In each step, a node opens a connection to a randomly chosen neighbor and
therefore chooses a \textit{wasted} communication stub with probability at most
$ a\log{n}/(d \cdot \log\log{n})$ where $a$ is a constant. 

If a free stub is chosen, the corresponding communication partner is informed
of message $m$ with probability at least $|I_m(t)|\cdot(d-S)/(n\cdot d)$.
Therefore, any uninformed node $v$ \emph{remains possibly uninformed}, \ie{},
either uses an already wasted communication stub or connects to an uninformed
partner, with the following probability. 
\begin{align*}
p' &= \Probability{v \text{ remains possibly uninformed}}\\
& = 1 - \Probability{v \text{ is definitely informed}} \\
& \leq 1 - \Probability{v \text{ chooses a free stub to } u} \cdot \Probability{u \in I_m(t)} \\
& \leq 1 - \left( 1 - \frac{c}{\log{n}\cdot\log\log{n}} \right) \cdot \left(\left(1 - \frac{|H_m(t)|}{n} \right)\frac{d-S}{d} \right) \\
\intertext{We apply $H_m(t) \leq n/\log{n}$ and obtain} 
p' & \leq 1 - \left( 1 - \frac{c}{\log{n}\cdot\log\log{n}} \right) \cdot \left(\left(1 - 1/\log{n}\right)\frac{d-S}{d} \right) \\
& \leq \frac{c}{\log{n}\cdot\log\log{n}} + \left(1 - \frac{c}{\log{n}\cdot\log\log{n}}\right) \frac{d-S}{\log{n}\cdot d} \\
& < \log^{-c}{n}
\end{align*}
for a suitable constant $c$. Therefore, the probability that an arbitrary node
remains uninformed after $4\log{n}/(c\cdot\log\log{n})$ steps can be bounded
by
\begin{equation*} 
\Probability{v \text{ remains uninformed}} \leq \left(\frac{1}{\log{n}}\right)^{\frac{4\log{n}}{\log\log{n}}} = \frac{1}{n^4} \enspace . \qedhere
\end{equation*}
\end{proof}
\begin{lemma} \label{claim:broadcast-3}
After the broadcast phase, every node is informed of every message \whp.
\end{lemma}
\begin{proof}
We use union bound on the results of \refLemma{claim:broadcast-2} over all $n$
messages and over all $n$ nodes. Thus after the broadcast phase each node is
informed of every message with probability at least $1 - n^{-2}$.
\end{proof}

\begin{proof}[Proof of Theorem \ref{thm:theorem1}]
Theorem \ref{thm:theorem1} follows from the proofs of the correctness of the individual phases,
\refLemma{claim:distributionA-4} for the distribution phase,
Lemmas \ref{claim:random-walks-end-in-uninformed-area},
\ref{claim:random-walks-do-not-hit-each-other}, and
\ref{claim:broadcasting-works} for the random walks phase, and
\refLemma{claim:broadcast-3} for the broadcast phase.
\end{proof}

\section{Memory Model} \label{sect:memory}

In this section we consider the $G(n,p)$ graph, in which an edge between two
nodes exists with probability $p$, independently, and assume that the nodes
have a constant size memory. That is, the nodes can store up to four different
links they called on in the past, and they are also able to avoid these links
as well as to reuse them in a certain time step. More formally, we assume that
each node $v \in V$ has a list $l_v$ of length four. The entry $l_v[i]$
contains a link address which is connected on the other end to a fixed node
$u$. Whenever node $v$ calls on $l_v[i]$ in a step, it opens a communication
channel to $u$. From now on, we will not distinguish between the address stored
in $l_v[i]$ and the node $u$ associated with this address. As assumed in the
previous sections, such a channel can be used for bi-directional communication
in that step. Furthermore, $v$ is also able to avoid the addresses stored in
$l_v$, by calling on a neighbor chosen uniformly at random from $N(v) \setminus
\cup_{i=0}^3 \{ l_v[i]\}$, where $N(v)$ denotes the set of neighbors of $v$.
This additional operation is denoted \texttt{open-avoid} in
\autoref{alg:algorithm-3} and \autoref{alg:leader-election}. Note that the approach of
avoiding a few previously contacted neighbors was also considered in the
analysis of the communication overhead produced by randomized broadcasting
\cite{BEF08,ES08} and in the analysis of the running time of push-pull protocols
in the preferential attachment model \cite{DFF11}. Clearly, the list $l_v$ may
also be empty, or contain less than $4$ addresses.

The algorithm we develop is similar to the one addressed in \cite{BCEG10} for
complete graphs. However, there are two main differences. While in
\cite{BCEG10} the protocol just uses the fact that in the random phone call
model the nodes of a complete graph do not contact the same neighbor twice
\whp, this cannot be assumed here. Furthermore, to obtain a communication
infrastructure for gathering information at a so-called leader, we use some
specific structural property of random graphs which was not necessary in
complete graphs. There, we built an infrastructure by using communication paths
in an efficient broadcasting network obtained by performing broadcasting once.
Here, we need to analyze the structure of random graphs in relation with the
behavior of our algorithm.

\begin{algorithm}
\SetAlgoVlined
\DontPrintSemicolon
\SetKwFunction{Push}{push}
\SetKwFunction{Pull}{pull}
\SetKwFunction{Open}{open}
\SetKwFunction{OpenAvoid}{open-avoid}
\SetKwFunction{Close}{close}
\SetKw{KwLet}{Let}
\SetKw{KwRun}{Run}
\SetKwFor{At}{at each}{do in parallel}{end}
\SetKwFor{ForEach}{for each}{do}{end}
\SetKwFor{With}{with probability}{do}{end}
\SetKwFor{ForRoundNode}{for round}{do at every node $v$ in parallel}{end}
\SetKwFor{For}{for}{do}{end}
\SetKwFor{ForRound}{for round}{do}{end}
Assume a leader is given.\\
\textbf{Phase \RNum{1}}\\
\For{$t = 0$ \KwTo $3$}{
	The leader performs an \OpenAvoid and then then a \Push{$m_v(0)$} operation. In each step, 
the leader stores in $l_v[t]$ the address of the node contacted in this step.
}
\For{$t = 4$ \KwTo $4 \log_4 n + 4\rho\log \log n$}{
	Every node $v$ that received $m_v(0)$  in step $t$ {\em for the first time} (with $t=4j+k$ and $k\in \{0,1,2,3\}$) is active in  step $4(j+1), 4(j+1)+1$, $4(j+1)+2$, and $4(j+1)+3$.\;
	Every active node $v$ performs an \OpenAvoid and then a \Push{$m_v(0)$} operation. $v$ 
stores in $l_v[t \mbox{ mod } 4]$ the address of the node contacted in the current step. \;
	Every active node $v$ also stores the time steps $4(j+1)$, $4(j+1)+1$, $4(j+1)+2$ and $4(j+1)+3$ together with the neighbors it used for the \Push operations in the list $l_v$.\;
}
\For{$t=4\log_4 n + 4\rho\log \log n+1$ \KwTo $4\log_4 n + 8\rho \log \log n$}{
Every node $v$ that knows $m_v(0)$ performs \Pull{$m_v(0)$} operation.\;
Every node $v$ that does not know $m_v(0)$ performs an \OpenAvoid and receives eventually ($m_v(0)$). 
The address of the contacted node is stored in $l_v[t \mbox{ mod } 4]$.\;
Every node $v$ that receives $m_v(0)$ for the first time in step $t$ remembers the chosen
neighbor together with $t$ in the list $l_v[0]$.\;
}
\AlgoPhase{2}
$t'\gets4\log_4 n + 8\rho \log \log n$\;
\For{$t=1$ \KwTo $ \rho \log \log n$}{
Every node $v$ which received the message in step $t'-t+1$ (for the first time) opens a channel 
to the corresponding neighbor in
$l_v[0]$ and performs a \Push operation with all original messages it has.\;
}
$t'\gets4\log_4 n + 8\rho \log \log n$\;
\For{$t=1$ \KwTo $4\log_4 n + 8 \rho \log \log n$}{
Every node $v$ which stores a neighbor with time step $t'-t+1$ in its list $l_v$ opens
a channel to that neighbor in  $l_v$ and receives the message from that neighbor. The node at the
other side performs a \Pull operation with all original messages it has.\;
}
\AlgoPhase{3}
The leader broadcasts all original messages using the algorithm described in Phase I for message $m_v(0)$.\;
\vspace{0.5\baselineskip}
\caption{Gossiping algorithm. After each step, the nodes close all channels opened in that 
step.}\label{alg:algorithm-3}
\end{algorithm}

\subsection{Leader Election}
\label{sect:leader-election}

\begin{algorithm}
\SetAlgoVlined

\SetKwFunction{Push}{push}
\SetKwFunction{Pull}{pull}
\SetKwFunction{Open}{open}
\SetKwFunction{OpenAvoid}{open-avoid}
\SetKwFunction{Close}{close}
\SetKw{KwLet}{Let}
\SetKwFor{AtNode}{at each node}{do in parallel}{end}
\SetKwFor{ForEach}{for each}{do}{end}
\SetKwFor{With}{with probability}{do}{end}
\SetKwFor{For}{for}{do}{end}
\SetKwFor{ForRound}{for round}{do}{end}

\AtNode{$v$}{
	\With{$\log^2{n} / n$}{
		$v$ becomes active\;
		\OpenAvoid{};
		\Push{$I\!D_v$}\;
	}
}
\For{$t = 1$ \KwTo $\log{n} + \rho\log\log n$}{
	\AtNode{$v$}{
		\If{$v$ has incoming messages $m$}{
			$v$ becomes active\;
		}
		\KwLet $i_v(t)$ be the smallest identifier that $v$ received so far\;
		\If{$v$ is active}{
			\OpenAvoid{};
			\Push{$i_v$}\;
		}
	}
}
\For{$t = 1$ \KwTo $\rho\log\log n$}{
	\AtNode{$v$}{
		\OpenAvoid{};
		$i_v \gets \min\{i_v, \Pull{}\}$\;
	}
}
\AtNode{$v$}{
	\If{$I\!D_v = i_v$}{
		$v$ becomes the leader\;
	}
}

\caption{Leader Election Algorithm. After each step, the nodes close all channels opened in
that step. \label{alg:leader-election}}
\end{algorithm}

In our main algorithm, we assume that a single node is aware of its role as a leader.
The other nodes, however, do not necessarily have to know the ID of this node. They just have 
to be aware of the fact that they are not leaders.
In order to find a leader we may apply the following leader election algorithm described in
\autoref{alg:leader-election} (cf.~\cite{BCEG10}). Each node flips a coin, and
with probability $\log^2 n/n$ it becomes a \emph{possible} leader. We assume
that every node $v$ has a unique ID denoted by $I\!D_v$. Each possible leader
starts a broadcast, by sending its ID to some nodes chosen uniformly at random
from the set of its neighbors, except the ones called in the previous three
steps. Once a node receives some ID, it becomes active, and starts propagating
the smallest ID it received so far. This push phase is performed for $\log n +
\rho \log \log n$ steps, where $\rho > 64$ is some large constant. In the last $\rho
\log \log n$ steps, the IDs of the possible leaders are spread by pull
transmissions. The possible leader with the smallest ID will become the leader.

\begin{lemma} \label{lem_leader}
At the end of \autoref{alg:leader-election}, all nodes are aware of the leader,
\whp.
\end{lemma}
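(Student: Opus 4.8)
The plan is to reduce the whole procedure to a single \texttt{push}-\texttt{pull} broadcast of the globally smallest possible-leader identifier. First I would bound the number of possible leaders: each node becomes one independently with probability $\log^2 n/n$, so the expected number is $\log^2 n$ and a Chernoff bound gives $\BigTheta{\log^2 n}$ of them \whp; in particular at least one exists, so the smallest such identifier $i^*$, held by a unique node $v^*$, is well defined. The key structural observation is that an active node always pushes the smallest identifier it has received so far, and that only possible leaders ever inject their own identifier. Consequently the set of nodes that have received $i^*$ is monotonically increasing, every node in it pushes exactly $i^*$, and the diffusion of $i^*$ is \emph{identical} to an ordinary single-source \texttt{push} broadcast starting at $v^*$ -- the other circulating identifiers are irrelevant to it. Since every node $v\neq v^*$ never receives its own identifier while every node eventually receives $i^*$, the final selection step makes $v^*$ the unique leader and leaves every other node with $i_v=i^*$; it therefore suffices to prove that the push phase followed by the pull phase informs all nodes of $i^*$, \whp.

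For the push phase I would analyze the $\log n+\rho\log\log n$ steps in the two standard regimes. In the exponential-growth regime the set informed of $i^*$ grows by a constant factor per step; since the degree satisfies $d\geq\log^{2+\epsilon}n=\omega(\log n)$, the push process on the random graph mimics the one on the complete graph, and by the result of Fountoulakis et al.~\cite{FHP10} a constant fraction of the nodes is reached after $\log n+\BigO{\log\log n}$ steps, \whp. The \texttt{open-avoid} operation together with the constant-size memory can only accelerate this, as avoiding recently contacted neighbors suppresses redundant transmissions; this is exactly the effect exploited in \cite{ES08,BEF08}, on which I would rely. Once a constant fraction is informed the uninformed fraction $\phi$ shrinks by a constant factor per step, so $\BigO{\log\log n}$ further \texttt{push} steps bring it down to $\phi\leq 1/\log n$. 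As $\rho$ is a sufficiently large constant, the $\log n+\rho\log\log n$ allotted push steps suffice to reach this checkpoint, \whp.

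In the pull phase every still-uninformed node contacts a uniformly random, recently-unused neighbor and learns $i^*$ whenever that neighbor already knows it. Because the informed set is by now spread almost uniformly over a random graph of degree $\omega(\log n)$, a given uninformed node stays uninformed for one step with probability $\BigO{\phi}$, so the uninformed fraction obeys the doubly-exponential recursion $\phi_{t+1}=\BigO{\phi_t^{2}}$. Starting from $\phi\leq 1/\log n$, after $k$ pull steps $\phi\leq(1/\log n)^{2^{k}}$, and $k=\BigO{\log\log n}<\rho\log\log n$ steps push $\phi$ below $n^{-2}$; this is precisely the cleanup argument of Lemmas~5 and~6 of \cite{Els06}, which I would invoke after checking that its hypotheses hold in the $G(n,p)$ setting. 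A union bound over all $n$ nodes then shows that every node ends with $i_v=i^*$ and is thus aware of the leader, \whp.

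The main obstacle I expect is the rigorous transfer of the complete-graph broadcast analysis of \cite{BCEG10} to random graphs at the two regime boundaries, in the presence of the \texttt{open-avoid}/memory mechanism. Concretely, for degrees as small as $\log^{2+\epsilon}n$ one must verify that $\BigOmega{d}$ \emph{free} stubs remain available throughout the $\BigO{\log n}$ steps, that the informed set stays sufficiently well spread for the per-step informing probabilities to match those of the complete graph (so that both the constant-factor and the doubly-exponential decays hold), and -- most delicately -- that the lower-order additive term in the push running time of \cite{FHP10} is genuinely $\BigO{\log\log n}$ rather than merely $\LittleO{\log n}$, so that the tight push budget $\log n+\rho\log\log n$ is not exceeded. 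Finally one must argue that the constant-size avoidance list introduces no harmful correlations. These are the points where the argument genuinely differs from \cite{BCEG10}, and where I would lean on the random-graph broadcast machinery of \cite{Els06,FHP10,ES08}.
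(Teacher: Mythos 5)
Your proposal is correct and follows essentially the same route as the paper: both reduce the lemma to the broadcast of a single message (the smallest possible-leader ID) on $G(n,p)$ under the \texttt{open-avoid} mechanism and then invoke the known random-graph push/pull broadcast lemmas (the paper cites Lemma 2.2 and Lemmas 2.7--2.8 of \cite{ES08}; you cite \cite{FHP10}, \cite{ES08}, and the cleanup lemmas of \cite{Els06}, which play the same role). The only difference is bookkeeping: you make explicit the observation that the minimum ID spreads exactly as a single-source broadcast, whereas the paper tracks both the set of nodes holding \emph{some} ID and the set knowing the leader's ID and applies the cleanup lemmas twice.
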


\begin{proof}
Let us denote by $I(t)$ the set of nodes at time $t$, which have received some
ID by this time step. Lemma 2.2 of \cite{ES08} states that a message is
distributed by a modified push-pull algorithm, in which each node is allowed to
avoid the $3$ neighbors chosen in the previous $3$ steps, is distributed to
$n-n/\sqrt[4]{n}$ nodes in $\log n + \rho \log \log n$ steps\footnote{We
adapted the running time from the lemma mentioned before to our algorithm.}.
This implies that by this time $I(t) \geq n - n/\sqrt[4]{d}$, and the number of
message transmissions is at most $O(n \log \log n)$, \whp. Furthermore,
$n/\log^{O(1)} n$ nodes know the leader. According to Lemma 2.7.~and 2.8.~from
\cite{ES08}, after additional $O(\log \log n)$ steps, the message is
distributed to all nodes, \whp. This implies that after this number of
additional steps $I(t) =n$, \whp, and $\Omega(n/\log^2 n)$ nodes know the
leader, \whp. Applying Lemmas 2.7.~and 2.8.~from \cite{ES08} again, we obtain
the lemma.
\end{proof}

Now we consider the robustness of the leader election algorithm. We show that by applying our
algorithm, one can tolerate up to $n^{\epsilon'}$ random
node failures, \whp, where $\epsilon' < 1/4$ is a small constant. That is, during the 
execution of the algorithm, $n^{\epsilon'}$ nodes, chosen uniformly and independently at random,
may fail at any time. The node failures are non-malicious, i.e., a failed node does not communicate 
at all. The theorem below
is stated for $p = \log^5 n/n$. However, with an extended analysis, the theorem
can be generalized to any $p > \log^{2+\epsilon} n/n$. 

\begin{lemma} \label{lem_leader2}
In the failure model described above, at the end of \autoref{alg:leader-election} the leader is aware of its role,
and all other nodes know that they are not the leader
\whp.
\end{lemma}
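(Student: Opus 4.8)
The plan is to reduce the robust leader election to a single robust broadcasting statement. First I would observe that, since each active node forwards only the \emph{minimum} identifier it has received and only the possible leaders ever inject their own identifier, the set of identifiers in circulation is exactly the set of possible-leader identifiers, whose minimum is the identifier of the possible leader $v^*$ with globally smallest $I\!D$. Consequently, if $I\!D_{v^*}$ reaches every surviving node then each survivor $u$ ends the execution with $i_u = I\!D_{v^*}$, so the condition $i_u = I\!D_u$ (which triggers $u$ to declare itself leader) holds for exactly one survivor, namely $u = v^*$, because identifiers are unique. This gives both conclusions of the lemma simultaneously. Thus it suffices to establish, \whp, that (i) $v^*$ survives, and (ii) the minimum identifier is broadcast to all surviving nodes under the failure model.

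For (i), a Chernoff bound shows that $\BigTheta{\log^2 n}$ nodes become possible leaders \whp, so $v^*$ is well defined. Since the failing set is a uniformly random subset of size $n^{\epsilon'}$ chosen independently of the identifiers, the fixed node $v^*$ fails with probability at most $n^{\epsilon'}/n = n^{\epsilon'-1} \le n^{-3/4} = n^{-\BigOmega{1}}$, so $v^*$ survives \whp. As $v^*$ is alive throughout and performs an \texttt{open-avoid} followed by a \texttt{push} in each of its $\BigTheta{\log n}$ active steps, contacting $\BigTheta{\log n}$ distinct neighbors out of its $\BigTheta{\log^5 n}$ available ones, its identifier is injected and starts to spread \whp.

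The main work, and the main obstacle, is (ii): adapting the fault-free broadcast analysis behind \refLemma{lem_leader} (Lemmas 2.2, 2.7 and 2.8 of \cite{ES08}) to a setting in which a random $n^{\epsilon'}$-subset of nodes may stop relaying at arbitrary times. The quantitative fact I would exploit is that a sparse random failing set removes only a vanishing fraction of the informed set at every scale. Let $I(t)$ denote the informed and still-alive set at step $t$. As long as $|I(t)| \le n^{3/4}$, the probability that any informed node lies in the failing set is at most $|I(t)| \cdot n^{\epsilon'-1} \le n^{3/4+\epsilon'-1} = n^{-\BigOmega{1}}$ because $\epsilon' < 1/4$; hence \whp no informed node fails and the exponential growth proceeds exactly as in the fault-free case. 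For the remaining range $|I(t)| > n^{3/4}$ the failing set accounts for at most an $n^{\epsilon'}/|I(t)| = \LittleO{1}$ fraction of the informed nodes, which does not affect the growth rate. A union bound over the $\BigO{\log n}$ push steps keeps both estimates valid, so after the push phase at most $n/\sqrt[4]{n} + n^{\epsilon'} = \LittleO{n}$ nodes are uninformed and alive, the second term being negligible precisely because $\epsilon' < 1/4$.

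Finally I would handle the pull mop-up exactly as in \refLemma{lem_leader}: a $1-\LittleO{1}$ fraction of all nodes are informed and alive, and each still-uninformed node has $\BigTheta{\log^5 n}$ neighbors that, by the randomness of $G(n,p)$, inherit this fraction, so an \texttt{open-avoid}/\texttt{pull} attempt succeeds with probability $1-\LittleO{1}$ and the doubly-exponential shrinking of the uninformed set in Lemmas 2.7 and 2.8 of \cite{ES08} carries over once the $n^{\epsilon'}$ failures are subtracted. The delicate points I expect to require care are the timing of failures --- a node may relay for several steps before dying, so I would charge each failure only to the steps after it occurs --- and verifying that the concentration arguments of \cite{ES08} remain valid after deleting a random sparse set of vertices from $G(n,p)$; both are manageable because $n^{\epsilon'}$ is polynomially smaller than every informed-set size that matters.
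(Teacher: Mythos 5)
Your proposal is correct in outline but takes a genuinely different route from the paper's. The paper never tries to show that the minimum identifier reaches \emph{all} surviving nodes: it observes that only the $\BigO{\log^2 n}$ \emph{possible leaders} need to learn the minimum ID (every other node trivially knows it is not the leader), bounds the total ``damage'' of the failures by a descendant-counting argument --- each node first informed after step $(1-\epsilon')\log n - \rho\log\log n$ can have at most $2^{\epsilon'\log n + 2\rho\log\log n}=n^{\epsilon'}\log^{\BigO{1}}n$ descendants, so at most $n^{2\epsilon'}\log^{\BigO{1}}n$ nodes in total can lose the message to a failure --- and then union-bounds the probability that some fixed candidate lands in the uninformed-or-damaged set over all candidates. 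This buys a short proof that ignores the pull phase entirely, at the price of a weaker success probability ($1-\LittleO{\log^{-2}n}$ from the union bound over candidates). Your version proves the stronger statement that every survivor is informed, which requires you to also push the ES08 pull/endgame lemmas through the failure model; in exchange your early-phase threshold $|I(t)|\le n^{3/4}$ gives a genuinely polynomially small failure probability where the paper's threshold only gives $1-\LittleO{\log^{-2}n}$. One step you should make explicit: the claim that an $\LittleO{1}$ fraction of failed informed nodes ``does not affect the growth rate'' is correct here, but only because in \autoref{alg:leader-election} every active node retransmits in \emph{every} step, so the per-step absolute loss is at most $n^{\epsilon'}\ll n^{3/4}\le |I(t)|$ and the recursion $|I(t+1)|\ge 2\left(|I(t)|-n^{\epsilon'}\right)(1-\LittleO{1})$ closes; in a tree-like process where each node transmits only once (as in \autoref{alg:algorithm-3}), a single failure suppresses an entire subtree and you would need the paper's descendant bound instead. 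Spelling out that recursion (and the observation that it relies on repeated retransmission) would close the only real gap in your sketch.
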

\begin{proof}
Here we only consider the node, which decided to become a possible leader, and 
has the smallest ID among such nodes. The algorithm is the same as the sequential 
version of the broadcast algorithm given in \cite{ES08}. We know that within the 
first $(1-\epsilon') \log n - \rho \log \log n$ steps, the number of informed nodes (i.e., 
the number of nodes receiving the ID of the node we consider) is
$n^{1-\epsilon'}/\log^{2+\Omega(1)} n$, \whp. Since 
$n^{\epsilon'}/n$ nodes may fail in total, independently, Chernoff bounds imply that 
all nodes informed within the first $(1-\epsilon') \log n - \rho \log \log n$ steps
are healthy, \whp. We also know that after $\log n + \rho \log \log n$ push steps, the 
number of informed nodes is $n-n/\log^{4+\Omega(1)} n$, \whp \cite{ES08}. On the other side, 
if we only consider push transmissions, the number of nodes which become informed by a message 
originating from a node informed after step $(1-\epsilon') \log n - \rho \log \log n$ is at most 
$2^{\epsilon' + 2\rho \log \log n} = n^{\epsilon'} \log^{O(1)} n$. This is due to the fact that the number of 
informed nodes can at most double in each step. Thus, the total number of nodes, which received the message 
from a failed node in the first $\log n + \rho \log \log n$ push steps, if this node would not fail, 
is at most $n^{2\epsilon'} \log^{O(1)} n$. The probability that 
one of the possible leaders is not among the nodes informed in the first $\log n + \rho \log \log n$ push steps,
or is not informed due to a node failure, is $o(\log^{-4} n)$. The union bound over $O(\log^2 n)$ 
possible leaders implies the lemma.
\end{proof}

\subsection{Gossiping Algorithm and its Analysis}

The pseudocode can be found in \autoref{alg:algorithm-3}. 
We assume that at the beginning a random node acts as a leader. For an efficient and 
robust leader election algorithm see \autoref{alg:leader-election}.
Once a leader is
given, the goal is to gather all the messages at this node. First, we build an
infrastructure as follows (Phase I). The leader emits a message by contacting
four different nodes (one after the other), and sending them these messages.
These nodes contact four different neighbors each, and send them the message.
If we group four steps to one so-called long-step, then in long-step $i$, each
of the nodes which received the message in the long-step before \emph{for the
first time} chooses four distinct neighbors, and sends them the message.
Furthermore, each node stores the addresses of the chosen nodes. This is
performed for $\log_4 n + \rho \log \log n$ long-steps, where 
$\rho >64$ is some large constant. For the next $\rho \log
\log n$ long-steps, all nodes, which have not received the message of the leader so far,
choose $4$ different neighbors in each of these long-steps, and open
communication channels to these nodes (i.e., communication channels are opened
to all these different neighbors within one long-step, where each of these neighbors
is called in exactly one step). If some node has the message of the leader
in some step, then it sends this message through the incident communication
channel(s) opened in that step. We call these last $\rho \log \log n$
long-steps \emph{pull long-steps}. 

In Phase II the infrastructure built in Phase I is used to send the message of
each node to the leader. This is done by using the path, on which the the
leader's message went to some node, to send the message of that node back to
the leader. In the third phase the messages gathered by the leader are sent to
all nodes the same way the leader's message was distributed in Phase I. Then,
the following lemmas hold.

\begin{lemma} \label{lem_memory_I}
After $\log_4 n + \rho \log \log n$ long-steps at least $n/2$ nodes have the
message of the leader, \whp. \seeProof{appendix:omitted-proofs-2}
\end{lemma}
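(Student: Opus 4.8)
The plan is to analyze Phase~I as a push process with fanout four and to track the number of informed nodes long-step by long-step, in the spirit of the randomized broadcasting analyses in \cite{Els06,ES08}. Write $I_j$ for the number of nodes that know $m_v(0)$ after long-step $j$, and let $A_j$ be the set of nodes that become informed \emph{for the first time} in long-step $j$; by the rule of \autoref{alg:algorithm-3}, exactly the nodes of $A_j$ are active in long-step $j+1$, and each of them contacts four \emph{distinct} neighbors via \texttt{open-avoid} and pushes the leader's message. Since $p = \log^5 n / n$, every node has degree $\BigTheta{\log^5 n}$ \whp, and within the $\log_4 n + \rho\log\log n$ long-steps considered here each node is active (and hence pushes) in at most one long-step, contacting only four neighbors. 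Thus every node keeps $\BigTheta{d}$ \emph{free} neighbors throughout, and by deferred decisions (cf.\ \autoref{sect:model}) we may expose the incident edges only when a neighbor is first contacted.

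First I would bound the probability that a single push is \emph{wasted}, i.e.\ reaches an already informed node. The key point is that for $p = \log^5 n/n$ the graph is locally tree-like, so for an active node $v$ all but its parent in the informed structure and at most $\BigO{1}$ cycle edges are \emph{fresh} random neighbors, each informed with probability $(1\pm\LittleO{1})\,I_j/n$; the parent and cycle edges contribute only $\BigO{1/d}=\BigO{\log^{-5}n}$ to the waste probability. Hence, while $I_j \le n/2$, each of the four pushes of a node informs a \emph{new} node with probability at least $1 - I_j/n - \LittleO{1/\log n}$, and the four pushes are, up to this negligible correction, independent.

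This yields the one-step estimate $\Expected{|A_{j+1}|} \ge 4\bigl(1 - I_j/n - \LittleO{1}\bigr)\,|A_j|$. In the exponential regime one has $|A_j| = \BigTheta{I_j}$, so the growth factor $I_{j+1}/I_j$ is bounded below by a constant strictly larger than one as long as $I_j \le n/2$ (it stays above $2$ even at $I_j = n/2$). I would split the counting into three segments. \emph{Startup:} while $I_j \le \log^2 n$ a birthday-type union bound shows that \whp none of the $\BigO{\log^2 n}$ pushes collide, so the fanout is exactly four and $\log^2 n$ is reached in $\BigO{\log\log n}$ long-steps. \emph{Bulk:} once $|A_j|\ge\log^2 n$, Chernoff bounds give $|A_{j+1}| \ge (1-\LittleO{1})\Expected{|A_{j+1}|}$ with probability $1 - n^{-\BigOmega{1}}$; while $I_j \le n/\log^2 n$ the factor is $4-\LittleO{1}$, so $n/\log^2 n$ is reached in $\log_4 n - \BigTheta{\log\log n}$ long-steps. \emph{Approach to $n/2$:} between $n/\log^2 n$ and $n/2$ the factor is still bounded below by a fixed constant $>2$, so $\BigO{\log\log n}$ further long-steps suffice. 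Summing, the total is at most $\log_4 n + \rho\log\log n$ with ample slack for any $\rho>64$; a union bound over the $\BigO{\log n}$ long-steps gives the \whp claim.

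The main obstacle is the waste estimate of the second paragraph: one must justify that the fraction of informed neighbors of an active node is $I_j/n + \LittleO{1/\log n}$ rather than being inflated by local correlations, since the informed set grows precisely along the edges of the graph. I would resolve this exactly as in \cite{Els06,ES08} (the source of Lemmas~2.2 and~2.7--2.8 invoked elsewhere in this section): exposing each edge only when its stub is first used, and conditioning on the event that the $\BigO{\log\log n}$-neighborhood of each node contains at most $\BigO{1}$ cycles, which holds \whp for $d=\BigTheta{\log^5 n}$. Conditioned on this bounded-cycle event, each fresh push lands on a node that is informed with probability $(1\pm\LittleO{1})\,I_j/n$, independently of the other pushes in the long-step, which is precisely what the growth recursion above requires.
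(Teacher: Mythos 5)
Your overall strategy coincides with the paper's: track the set of newly informed nodes (your $A_j$, the paper's $I^+(t)$) long-step by long-step, bound the per-push waste probability by $I_j/n+\BigO{\log n/d}$ via deferred decisions, incoming-channel counting, and local tree-likeness, and split the analysis into a startup segment, a Chernoff-driven bulk segment up to $n/\log^2 n$ with growth factor $4-\LittleO{1}$, and an $\BigO{\log\log n}$-long-step endgame in which the growth factor stays above $2$ (the paper runs this up to $|I(t)|\le n/3$ and then uses $|I^+(t)|\ge |I(t)|$ for one final doubling to pass $n/2$). The bulk and endgame segments match the paper's Eq.~(\ref{equI}) and the subsequent Chernoff applications essentially verbatim.

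The one genuine gap is in your startup segment. You claim that while $I_j\le\log^2 n$ a birthday-type union bound shows that \whp \emph{none} of the $\BigO{\log^2 n}$ pushes collide, so the fanout is exactly four. This does not hold at the $1-n^{-\Omega(1)}$ level: a freshly informed node does \emph{not} have its parent in its avoid-list $l_v$ (the list only records nodes it has itself contacted), so each of its pushes returns to the parent with probability $\BigTheta{1/d}$, which is only inverse-polylogarithmic; over $\BigTheta{\log^2 n}$ startup pushes the probability of at least one collision is $\BigTheta{\log^2 n/d}$, polylogarithmically rather than polynomially small. The conclusion survives, but via a weaker, essentially deterministic guarantee: conditioned on the $\BigO{\log\log n}$-neighborhood of the leader being a tree with at most a constant number of extra edges, each active node can lose at most its parent edge plus a bounded total number of cycle edges, so the fanout is at least $3$, which is exactly how the paper obtains $3^{\rho\log\log n-1}$ informed nodes after $\rho\log\log n$ long-steps. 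Relatedly, your Chernoff threshold $|A_j|\ge\log^2 n$ is a bit too low for the multiplicative deviation $\BigO{1/\log^{1+\epsilon/2}n}$ to hold with probability $1-n^{-\Omega(1)}$; the tree-growth startup delivers $\log^{\Omega(\rho)}n$ informed nodes, which provides the needed slack. Finally, you specialize to $p=\log^5 n/n$, whereas the paper's argument for this lemma is carried out for general $d\ge\log^{2+\epsilon}n$; this is a harmless but unnecessary restriction.
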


\begin{proof}
Since during the whole process every node only chooses four neighbors, simple
balls-into-bins arguments imply that the total number of incoming communication
channels opened to some node $u$ is $O(\log n)$, with probability at least
$1-n^{-4}$ \cite{RS98}. 

Let $v$ be the leader, and let its message be $m_v(0)$. We know that as long as
$d = 2^{o(\sqrt{\log n})}$, the tree spanned by the vertices at distance at
most $\rho \log \log n$ from $v$ is a tree, or there are at most $4$ edges
which violate the tree property \cite{BES14}. Thus, after $\rho \log \log n$
steps, at least $3^{\rho \log \log n -1}$ vertices have $m_v(0)$ \whp. If $d =
2^{\Omega(\sqrt{\log n})}$ simple probabilistic arguments imply that at least
$3^{\rho \log \log n -1}$ vertices have $m_v(0)$ \whp. 

Let now $I^+(t)$ be the set of nodes, which receive $m_v(0)$ in long-step $t$
(for the first time). Each of these nodes chooses an edge, which has already
been used (as incoming edge), with probability $O(\log n/d) \leq
1/\log^{1+\epsilon/2} n$. Let $|I(t)| \leq n/\log^2 n$ and $I^+ (t) = \{ v_1,
\dots , v_{|I^+(t)|}\}$. Given that some $v_i$ has at least $pn(1-o(1))$
neighbors in $G$, and at most $|I(t)| + 4|I^+(t)|)p(1 +o(1)) + 5\log n$
neighbors in $I(t) \cup \{v_1, \dots , v_{i-1}\}$, the edge chosen by $v_i$ in
a step of the long-step $t+1$ is connected to a node, which is in $I(t)$ or it
has been contacted by some node $v_1, \dots , v_{i-1}$ in long-step $t+1$, with
probability at most 
\begin{equation} \label{equI}
p_I \leq \frac{(|I(t)| + 4|I^+(t)|)p(1 +o(1)) + 5\log n}{pn(1 - o(1))}
\end{equation} 
independently (cf.~\cite{Els06}). Thus, we apply Chernoff bounds, and obtain
that the number of newly informed nodes is 
$$|I^+(t+1) \geq 4|I^+(t)|\left(1-\frac{2}{\log^{1+\epsilon/2} n}\right)\enspace,$$
with probability $1-n^{-3}$. Therefore, after $\log_4 n - O(\log \log n)$
steps, the number of informed nodes is larger than $n/\log^2 n$.

Now we show that within $\rho \log \log n$ steps, the number of uninformed
nodes becomes less than $n/2$. As long as $|I(t)| \leq n/3$, applying equation
(\ref{equI}) together with standard Chernoff bounds as in the previous case, we
obtain that 
$$|I^+(t+1) \geq 4|I^+(t)| - |I^+(t)|\frac{5|I^+(t)(1+o(1))}{n} > 2|I^+(t)|\enspace,$$
with probability $1-n^{-3}$. Once $|I(t)|$ becomes lager than $n/3$, it still
holds that $|I^+(t)| \geq |I(t)|$ (see above). Thus, in the next step the total
number of informed nodes exceeds $n/2$, \whp.
\end{proof}

The approach we use here is similar to the one used in the proof of Lemma
2.2.~in \cite{ES08}; the only difference is that in \cite{ES08} the nodes
transmitted the message in all steps, while here each node only transmits the
message to $4$ different neighbors chosen uniformly at random. Note that each
node only opens a channel four times during these $\log_4 n + \rho \log \log n$
long-steps, which implies a message complexity of $O(n)$.

\begin{lemma} \label{lem_memory_II}
After $\rho \log \log n$ pull long-steps, all nodes have the message of the
leader \whp. \seeProof{appendix:omitted-proofs-2}
\end{lemma}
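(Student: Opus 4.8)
The plan is to show that, starting from the at least $n/2$ informed nodes guaranteed by \refLemma{lem_memory_I}, the pull long-steps shrink the uninformed set doubly-exponentially, so that $\BigO{\log\log n}$ long-steps — comfortably inside the $\rho\log\log n$ budget — suffice to inform everyone. Write $H(t)$ for the set of nodes still lacking the leader's message $m_v(0)$ at the start of pull long-step $t$ and put $h_t = |H(t)|/n$; \refLemma{lem_memory_I} lets us begin with $h_0 \leq 1/2$.

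First I would carry out the standard \emph{free-stub} reduction. In a pull long-step each uninformed node issues four \texttt{open-avoid} operations, contacting four distinct neighbors while avoiding the at most four addresses stored in $l_v$. Since $d \geq \log^{2+\epsilon} n$, while a balls-into-bins argument \cite{RS98} caps the total number of stubs ever used at any node by $\BigO{\log n} = \LittleO{d}$, every node still has $\BigTheta{d}$ free neighbors and the four contacted nodes behave like independent uniform samples from $V$, up to a $(1+\LittleO{1})$ factor. Conditioning on $I(t) = V\setminus H(t)$, a fixed uninformed vertex therefore fails to pull the message — all four contacted neighbors lie in $H(t)$ — with probability at most $h_t^{4}(1+\LittleO{1})$, just as in the push-pull endgame of \cite{ES08,Els06}. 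Consequently $\Expected{|H(t+1)|} \leq |H(t)|\, h_t^{4}(1+\LittleO{1}) = n\,h_t^{5}(1+\LittleO{1})$.

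The heart of the proof is the resulting doubly-exponential recursion $h_{t+1} \leq 2h_t^{5}$. While $\Expected{|H(t+1)|} \geq C\log n$, equivalently $|H(t)| \geq n^{4/5+\LittleO{1}}$, the per-vertex failure events are negatively correlated (uncover the uninformed vertices one at a time, as in \cite{Els06}), so Chernoff bounds give $|H(t+1)| \leq 2\Expected{|H(t+1)|}$ with probability $1-n^{-\BigOmega{1}}$. Iterating $h_{t+1}\leq 2h_t^{5}$ from $h_0\leq 1/2$ makes $-\log h_t$ grow like $\BigOmega{4^{t}}$, so after $\BigO{\log\log n}$ long-steps $|H(t)| \leq n^{4/5+\LittleO{1}}$. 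One further long-step then has $\Expected{|H(t+1)|} \leq \log n$, and the Chernoff upper tail $\Probability{X \geq \log^2 n} \leq (e\Expected{X}/\log^2 n)^{\log^2 n} = n^{-\LittleOmega{1}}$ forces $|H(t+1)| \leq \log^2 n$ \whp. A final long-step finishes the job: each of the at most $\log^2 n$ survivors stays uninformed with probability at most $(\log^2 n/n)^{4}$, so a union bound over them fails with probability at most $\log^{10} n / n^{4} = n^{-\BigOmega{1}}$, and all nodes hold $m_v(0)$ \whp.

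I expect the main obstacle to be the second and third steps rather than the iteration itself: making the bound $\Probability{v\text{ stays uninformed}} \leq h_t^{4}(1+\LittleO{1})$ fully rigorous requires controlling the mild dependence introduced by \texttt{open-avoid} and by already-used stubs, and — more delicately — justifying the cross-vertex negative correlation needed to apply Chernoff to $|H(t+1)|$. The second subtlety is the narrow transition window in which $\Expected{|H(t+1)|}$ falls from polynomial to $\BigO{\log n}$, where the multiplicative Chernoff bound is too weak; there I would rely on the Poisson-type upper tail above, which remains $n^{-\LittleOmega{1}}$ throughout this low-expectation regime and hands the analysis cleanly to the endgame union bound. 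The remaining bookkeeping — counting long-steps and verifying the $\rho\log\log n$ budget — is routine and leaves ample slack.
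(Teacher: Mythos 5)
There is a genuine gap, and it sits exactly where you flagged your ``main obstacle'': the bound $\Probability{v \text{ stays uninformed}} \leq h_t^{4}(1+\LittleO{1})$ is not merely delicate to make rigorous --- it is false once $|H(t)|$ drops below roughly $n\log n/d$. The four contacts made by an uninformed node $v$ are uniform samples from $N(v)$, a set of only $d=\mathrm{polylog}(n)$ vertices, not from $V$; the probability that a single contact lands in $H(t)$ is $|N(v)\cap H(t)|/|N(v)|$, and while this quantity has mean about $h_t$, it does not concentrate once $p|H(t)| = d\,h_t \ll \log n$. \Whp over the graph there are uninformed nodes with $\BigOmega{\log n/\log\log n}$ uninformed neighbors even when $d h_t = \LittleO{1}$, and $H(t)$ is positively correlated with such clustering, since having uninformed neighbors is precisely what keeps a node uninformed. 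The correct per-contact bound, which is what the paper imports via Lemma 1 of \cite{Els06}, is $\BigO{(p|H(t)|+\log n)/d} = \BigO{h_t + \log n/d}$, with the additive fluctuation term. Consequently your recursion $h_{t+1}\le 2h_t^{5}$ stalls at $h_t \approx \log n/d$, i.e.\ at $|H(t)|\approx n\log n/d$ (for $p=\log^{5}n/n$ this is $n/\log^{4}n$), and both your transition step (``$\Expected{|H(t+1)|}\le\log n$ once $|H(t)|\le n^{4/5+\LittleO{1}}$'') and your final union bound (each of the $\log^{2}n$ survivors fails with probability $(\log^{2}n/n)^{4}$) overstate the success probability by a polynomial factor: the true single-long-step failure probability of a surviving uninformed node is only polylogarithmically small, so your scheme would need $\BigTheta{\log n/\log\log n}$ further long-steps, far exceeding the $\rho\log\log n$ budget.

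The paper's proof is organized around precisely this obstruction: it runs the shrinking recursion $|H(t+1)|\le \BigO{|H(t)|\left(|H(t)|/n+\log n/d\right)}$ only while $|H(t)| > n/\sqrt[4]{d}$, and then hands the endgame to Lemmas 2.7 and 2.8 of \cite{ES08}, which use the \texttt{open-avoid} memory and finer structural properties of $G(n,p)$ to get from $n/\sqrt[4]{d}$ down to zero in $\BigO{\log\log n}$ additional steps. Your proposal has no substitute for that second stage, which is where the real work of the lemma lies; the doubly-exponential iteration you describe is the easy part and is common to both arguments.
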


\begin{proof}
First we show that within $\rho \log \log n/2$ steps, the number of uninformed
nodes decreases below $n/\sqrt[4]{d}$. The arguments are based on the proof of
Lemma 2.2. from \cite{ES08}. Let us consider some time step $t$ among these
$\rho \log \log n/2$ steps. Given that all nodes have some degree $\Omega(d)$,
a node chooses an incident edge not used so far (neither as outgoing nor as
incoming edge) with probability $1-O(\log n/d)$. According to Lemma 1 of
\cite{Els06}, this edge is connected to a node in $H(t)$ with probability at
most $O(p|H(t)|+ \log n)/d)$, independently. Applying Chernoff bounds, we
obtain that as long as $|H(t)| > n/\sqrt[4]{d}$, we have $$|H(t+1)| \leq
O\left( |H(t)| \cdot \left( \frac{|H(t)|}{n} + \frac{\log n}{d}\right) \right)
\enspace, $$ \whp. Thus, after $\rho \log \log n/2$ steps, the number of
uninformed nodes decreases below $n/\sqrt[4]{d}$, \whp (cf.~\cite{KSSV00}).
Applying now Lemmas 2.7. and 2.8. from \cite{ES08} (for the statement of these
lemmas see previous proofs), we obtain the lemma. Since only nodes of $H(t)$
open communication channels in a step, we obtain that the communication
complexity produced during these pull long-steps is $O(n)$, \whp.
\end{proof}

\begin{lemma} \label{lem_phase2}
After Phase II, the leader is aware of all messages in the network, \whp.
\seeProof{appendix:omitted-proofs-2} 
\end{lemma}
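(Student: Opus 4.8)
The plan is to treat Phase~II as a purely \emph{deterministic} gathering procedure running on the communication infrastructure built in Phase~I, and to reduce the statement to the fact---guaranteed by \refLemma{lem_memory_I} and \refLemma{lem_memory_II}---that this infrastructure is a spanning structure rooted at the leader. Conditioning on the high-probability event that both lemmas hold, every node $v$ other than the leader received $m_v(0)$ for the \emph{first time} in a uniquely determined long-step from a uniquely determined neighbor $\pi(v)$, its \emph{parent}; this neighbor, together with the corresponding time step, is exactly what was recorded in the list $l_v$ during Phase~I. The resulting parent pointers define a spanning arborescence $\mathcal{T}$ directed towards the leader, and since a node pushes only in the long-step \emph{after} it was informed, every parent is informed in a strictly earlier long-step than each of its children.

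First I would make the direction of message flow explicit in both loops of \autoref{alg:algorithm-3}. In the first loop a node $v$ informed during a pull long-step opens a channel to the neighbor stored in $l_v[0]$---namely $\pi(v)$---and \texttt{push}es all messages it currently holds; in the second loop a node $w$ opens a channel to each neighbor it had \texttt{push}ed to in Phase~I and \texttt{pull}s from it. In both cases the accumulated message travels from a child to its parent, and since the model combines all received messages (\autoref{sect:model}), a node's stored message only grows. The core of the proof is then the invariant that, at the moment a node $v$ transmits along the edge $(v,\pi(v))$, the message held by $v$ already equals $\bigcup_{u \in \mathcal{T}_v} m_u(0)$, where $\mathcal{T}_v$ is the subtree of $\mathcal{T}$ rooted at $v$.

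I would prove this invariant by induction following the order in which Phase~II visits the edges of $\mathcal{T}$. The key point is that Phase~II processes time steps in \emph{reverse} chronological order (via the index $t'-t+1$), first unwinding the $\rho\log\log n$ pull long-steps and then the $\log_4 n + \rho\log\log n$ push long-steps, for a total of $4\log_4 n + 8\rho\log\log n$ steps that exactly mirror Phase~I. Because a child is always informed in a later step than its parent, the edge from a child to its parent is always handled \emph{after} all edges from that child to its own children; hence when $v$ transmits to $\pi(v)$ it has already received, in earlier iterations, the full subtree message of each of its children, which establishes the inductive step. Applying the invariant to the edges incident to the leader (the nodes informed in the very first long-step) shows that the leader ends up with $\bigcup_{u \in V} m_u(0)$, i.e.\ all $n$ original messages.

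The step I expect to be the main obstacle is making the reverse-order induction airtight across the two structural irregularities of $\mathcal{T}$: the seam between the pull and the push long-steps, and the fact that a node may push to up to four children on four consecutive steps within one long-step. For the former I would verify that every pull-informed node has its parent either in the push tree or among pull-informed nodes of an \emph{earlier} pull long-step, so that running the pull loop entirely before the push loop respects the partial order ``parent before child''; for the latter I would note that the four time stamps a node recorded in $l_v$ all lie in strictly later steps than the step in which it was itself informed, so all four child-edges are unwound before the node's own parent-edge. Granting these orderings, the induction closes, Phase~II is deterministic, and the only probability loss is the failure probability inherited from \refLemma{lem_memory_I} and \refLemma{lem_memory_II}; a union bound over the $n$ messages completes the argument, and since each infrastructure edge is traversed exactly once the message complexity remains $\BigO{n}$.
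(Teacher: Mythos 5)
Your proposal is correct and follows essentially the same route as the paper: condition on the Phase~I infrastructure being a spanning tree rooted at the leader (via \refLemma{lem_memory_I} and \refLemma{lem_memory_II}), then argue by induction along the tree---ordered by the (long-)step at which each node first received the leader's message---that Phase~II's reverse-chronological unwinding forwards every node's original message up through its parent to the root, with the pull/push seam handled exactly as you describe. The paper phrases the induction per-message (following $m_w(0)$ up via the successor $w'$) rather than via your subtree-union invariant, but the two formulations are interchangeable.
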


\begin{proof}
Let $w$ be some node, and we show by induction that the leader receives
$m_w(0)$. Let $t$ be the long-step, in which $w$ receives $m_v(0)$. If $t =1$,
then $w$ is connected to $v$ in the communication tree rooted at $v$, and $v$
receives $m_w(0)$ in one of the last four steps of Phase II.

If $t > 1$, then let $w'$ denote the successor of $w$ in the communication tree
rooted at $v$. That is, $w$ received $m_v(0)$ from $w'$ in long-step $t$. This
implies that $w'$ either received $m_v(0)$ in pull long-step $t$ or $t-1$, or
it received $m_v(0)$ in a push long-step $t-1$. If however, $w'$ obtained
$m_v(0)$ in pull long-step $t$, then this happened before $w$ received the
message. In both cases $w'$ will forward $m_w(0)$ to $v$ together with
$m_{w'}(0)$, according to our induction hypothesis, and the lemma follows.
\end{proof}

\begin{lemma} \label{lem_phase3}
After Phase III, gossiping is completed \whp. 
\end{lemma}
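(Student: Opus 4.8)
The plan is to recognize that Phase~III is nothing more than the Phase~I broadcast procedure applied to a single combined packet, so that correctness reduces to the two lemmas already proved for Phase~I together with \refLemma{lem_phase2}. First I would record the starting point: by \refLemma{lem_phase2}, at the end of Phase~II the leader's message equals the union of all original messages, $m_v = \bigcup_{w \in V} m_w(0)$. In Phase~III the leader then runs exactly the procedure of Phase~I on this one packet. The key conceptual observation is that the spreading of a single packet through the network depends only on the graph and on the random neighbor choices, and not at all on the packet's contents; hence the analysis underlying \refLemma{lem_memory_I} and \refLemma{lem_memory_II} transfers verbatim. Consequently, after the $\log_4 n + \rho\log\log n$ push long-steps of Phase~III at least $n/2$ nodes hold the combined packet, and after the subsequent $\rho\log\log n$ pull long-steps every node holds it, \whp.

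The one piece of bookkeeping I would check is that Phase~III runs on a graph whose stubs have already been partially exposed during Phases~I and~II, so I must confirm that enough \emph{free} stubs survive for the estimates behind \refLemma{lem_memory_I} and \refLemma{lem_memory_II} to remain valid. This is immediate: across all three phases each node actively opens only $\BigO{\log n}$ channels, and by the same balls-into-bins argument used in \refLemma{lem_memory_I} it receives at most $\BigO{\log n}$ incoming channels, \whp. Since $d \geq \log^{2+\epsilon} n$ (indeed $d = \log^5 n$ in this model), this total is $\LittleO{d}$, so that $\BigTheta{d}$ free stubs remain and the probability of reusing a stub in any step of Phase~III is still $\BigO{\log n / d}$. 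The two Phase~I lemmas therefore hold unchanged when re-invoked for Phase~III.

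Finally I would assemble the conclusion by a union bound over the constantly many high-probability events involved, namely the correctness of Phase~II (\refLemma{lem_phase2}) and the two Phase~III broadcast statements inherited from \refLemma{lem_memory_I} and \refLemma{lem_memory_II}. Note that, unlike in the distribution analysis of \autoref{sect:main-result}, no union bound over the $n$ messages is required here, because Phase~III disseminates a single combined packet rather than $n$ separate ones. Once this packet reaches every node, each node possesses $\bigcup_{w \in V} m_w(0)$, i.e., every original message, and gossiping is complete, \whp.

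I do not expect a genuine obstacle in this lemma, since it is essentially a reassembly of earlier results: the spreading dynamics are content-oblivious, so Phase~III reuses the Phase~I analysis directly. The only care needed is the free-stub bookkeeping described above, which follows from the $\BigO{\log n}$ bound on the total number of channels any node opens across the whole execution, together with the assumption $d \geq \log^{2+\epsilon} n$.
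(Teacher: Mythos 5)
Your proposal is correct and matches the paper's approach: the paper simply notes that \autoref{lem_phase3} follows directly from \autoref{lem_memory_II} (with \autoref{lem_phase2} supplying that the leader holds all messages), exactly the content-oblivious reuse of the Phase~I broadcast analysis that you describe. Your additional free-stub bookkeeping is a reasonable sanity check but is not something the paper spells out.
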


The proof of \autoref{lem_phase3} follows directly from \autoref{lem_memory_II}. From the lemmas
above, we obtain the following theorem.

\begin{theorem} \label{theo_memory}
\Whp \autoref{alg:algorithm-3} completes gossiping in $O(\log n)$ time steps by
producing $O(n)$ message transmissions. If leader election has to be 
applied at the beginning, then the 
communication complexity is $O(n \log \log n)$.
\end{theorem}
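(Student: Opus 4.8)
The plan is to assemble the per-phase guarantees of Lemmas \ref{lem_memory_I}--\ref{lem_phase3} and to account for the running time and the number of transmissions phase by phase. First I would argue correctness under the assumption that a leader is given. \refLemma{lem_memory_I} shows that after the $\log_4 n + \rho\log\log n$ push long-steps of Phase~I at least $n/2$ nodes hold the leader's message $m_v(0)$, and \refLemma{lem_memory_II} shows that the subsequent $\rho\log\log n$ pull long-steps deliver $m_v(0)$ to \emph{all} nodes, \whp. In the process each node records in its list $l_v$ the neighbor from which it first received $m_v(0)$ together with the corresponding time step, so the recorded edges form a spanning communication structure rooted at the leader. \refLemma{lem_phase2} then shows that Phase~II, which traverses this structure in reverse order of the stored time stamps, routes every original message $m_w(0)$ back to the leader, \whp; and \refLemma{lem_phase3} (which reduces to \refLemma{lem_memory_II}) shows that rebroadcasting the aggregated messages in Phase~III informs every node of every message, \whp. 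A union bound over these $O(1)$ high-probability events yields that gossiping completes \whp.

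Next I would bound the running time. Each long-step comprises four steps. Phase~I runs for $\log_4 n + 2\rho\log\log n$ long-steps, i.e.\ $4\log_4 n + 8\rho\log\log n = 2\log n + O(\log\log n)$ steps; Phase~II replays the same schedule in reverse and hence also uses $O(\log n)$ steps; and Phase~III is a second invocation of the Phase~I broadcast, again $O(\log n)$ steps. Summing the three phases gives an overall running time of $O(\log n)$.

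For the message complexity I would use that each node opens only a constant number of channels attributable to the infrastructure. In the push long-steps of Phase~I every node is active in exactly four steps after first receiving $m_v(0)$ and thus opens at most four channels, while in the pull long-steps only the (shrinking) uninformed set opens channels; the proof of \refLemma{lem_memory_II} bounds the latter by $O(n)$, so Phase~I contributes $O(n)$ transmissions. Phase~III is a copy of Phase~I and likewise contributes $O(n)$. Since a leader is assumed, the three phases together give $O(n)$ transmissions. If leader election has to be run first, \refLemma{lem_leader} produces a leader \whp using $O(n\log\log n)$ transmissions, which dominates and gives the stated $O(n\log\log n)$ bound.

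The main obstacle I expect is the message-complexity accounting for Phase~II, which no single lemma provides directly: one must observe that the reverse traversal activates only edges already stored in the lists $l_v$---at most four tree edges per node plus the single pull-long-step edge recorded in $l_v[0]$---so that each node opens only $O(1)$ channels and the phase contributes $O(n)$ transmissions. The analogous $O(n)$ bound for the pull long-steps of Phase~I is already furnished inside the proof of \refLemma{lem_memory_II} via the balls-into-bins bound of \cite{RS98} on incoming channels. Once these bookkeeping facts are in place, the remainder is a routine union bound over the $O(1)$ per-phase failure events.
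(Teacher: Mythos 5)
Your proposal is correct and takes essentially the same route as the paper, which gives no explicit proof of this theorem but simply states that it follows from Lemmas \ref{lem_memory_I}--\ref{lem_phase3} (plus \refLemma{lem_leader} for the leader-election overhead); your phase-by-phase assembly, including the observation that each node opens only $O(1)$ channels per phase and that the pull long-steps are charged to the shrinking uninformed set, is exactly the intended bookkeeping, which the paper itself records in the remarks following \refLemma{lem_memory_I} and inside the proof of \refLemma{lem_memory_II}.
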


Now we consider the robustness of our algorithm. We show that by applying our
algorithm twice, independently, one can tolerate up to $n^{\epsilon'}$ random
node failures, \whp, where $\epsilon'<1/4$. That is, during the 
execution of the algorithm, $n^{\epsilon'}$ nodes, chosen uniformly and independently at random,
may fail at any time. The node failures are non-malicious, i.e., a failed node does not communicate 
at all. The theorem below
is stated for $p = \log^5 n/n$. However, with an extended analysis, the theorem
can be generalized to any $p > \log^{2+\epsilon} n/n$. As before, we assume 
that a random node acts as a 
leader. Since at most $n^{\epsilon'}$ random nodes fail in total, the leader fails during the 
execution of the algorithm with probability $n^{-\Omega(1)}$. 
Moreover, due to the 
robustness of the leader election algorithm from \autoref{sect:leader-election}, the result of
\autoref{theo_memory_II} 
also holds if leader election has to be applied to find a leader
at the beginning.

\begin{theorem} \label{theo_memory_II}
Consider a $G(n,p)$ graph with $p = \log^{5} n/n$. Assume that $f=
n^{\epsilon'}$ random nodes fail according to the failure model 
described above, where $\epsilon' < 1/4$. If we run
\autoref{alg:algorithm-3} two times, independently, then at the end $n-|f|(1+o(1))$ nodes know
all messages of each other, \whp. \seeProof{appendix:omitted-proofs-2}
\end{theorem}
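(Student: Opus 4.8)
The plan is to reduce the claim to a purely combinatorial statement about the two communication trees grown in Phase~I and then to control the correlation between them. First I would condition on the random graph $G$ being a typical $G(n,\log^5 n/n)$ instance --- in particular a good expander whose local neighborhoods up to distance $\rho\log\log n$ are pseudo-trees with only a constant number of extra edges \cite{BES14} --- and on the random set $F$ of $|F| = f = n^{\epsilon'}$ failed vertices. Since $f/n = n^{\epsilon'-1}$, the leader is healthy with probability $1-n^{-\Omega(1)}$, as already noted, and by \autoref{lem_leader2} an initial leader can be found robustly; hence I may fix a healthy leader $v$.

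Next I would observe that a single execution of \autoref{alg:algorithm-3} is governed by the rooted tree $T$ it grows from $v$: the message of a healthy node $u$ reaches $v$ in Phase~II, and $v$'s accumulated knowledge reaches $u$ in Phase~III, precisely when the path from $u$ to the root in $T$ contains no failed vertex. Writing $R(T)$ for the set of healthy vertices joined to $v$ by such an all-healthy path, the failure-adapted versions of Lemmas~\ref{lem_memory_I}--\ref{lem_phase3} go through for the healthy part of the graph, because the failures form only an $n^{\epsilon'-1}$ fraction and therefore perturb the factor-$4$ growth and the pull long-steps by a $1\pm\LittleO{1}$ amount; the restriction $\epsilon'<1/4$ is exactly the one required for the broadcast margins of \cite{ES08} used inside those lemmas, as in \autoref{lem_leader2}. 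Running the algorithm twice with independent randomness produces two trees $T_1,T_2$; running them so that the leader accumulates the messages gathered over both passes and then disseminates its full knowledge along both infrastructures, every vertex of $R(T_1)\cup R(T_2)$ ends up knowing every message originating in $R(T_1)\cup R(T_2)$. Consequently the only vertices that may fail to complete gossiping lie in $F$ together with the healthy vertices blocked in \emph{both} trees, i.e.\ the set $B = \bigl(V\setminus R(T_1)\bigr)\cap\bigl(V\setminus R(T_2)\bigr)\cap(V\setminus F)$. It therefore suffices to show $|B| = \LittleO{f}$ \whp.

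To bound $|B|$ I would argue vertex by vertex. Fix a healthy $w$ and let $p_i(w)$ be its path to the root in $T_i$; each has length at most $L = \log_4 n + \rho\log\log n = \BigO{\log n}$. Because the two passes use independent coins (conditioned on $G$ and $F$), the event that $p_1(w)$ meets $F$ and the event that $p_2(w)$ meets $F$ are independent, and since the ancestors along a single path are spread out over the expanding graph, each occurs with probability at most $Lf/n\cdot(1+\LittleO{1})$. The product then gives $\Probability{w\in B} \leq (Lf/n)^2$, whence $\Expected{|B|} \leq n\,(Lf/n)^2 = L^2 f^2/n = \BigO{\log^2 n}\cdot n^{2\epsilon'-1}$, which is $\LittleO{f}$ --- indeed $\LittleO{1}$ --- for $\epsilon'<1/4$; a Markov bound then yields $B=\emptyset$ \whp. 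Adding the $f$ failed vertices and the $\LittleO{f}$ doubly-blocked vertices gives the claimed $n - f(1+\LittleO{1})$ vertices that complete gossiping.

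The delicate point, and the step I expect to be the main obstacle, is hidden in the ``product'' above: the paths $p_1(w)$ and $p_2(w)$ share the root and may share further ancestors near it, and a single failed vertex common to both paths blocks $w$ through one event rather than two. A crude estimate of the expected number of shared ancestors is $\BigTheta{1}$ --- dominated by the few vertices at small depth, where only $4^j$ vertices exist at level $j$ --- which would contribute a term of order $f/n$ per vertex and hence $\BigTheta{f}$ to $\Expected{|B|}$, too large for the $(1+\LittleO{1})$ factor. The resolution is the \texttt{open-avoid} operation: carried across the two passes it forces the trees to use disjoint edges at the root and, propagated downward, to diverge in their top levels, so that the expected overlap of $p_1(w)$ and $p_2(w)$ in \emph{failable} (non-root) vertices is $\LittleO{1}$. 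Making this decorrelation precise --- using the pseudo-tree structure and expansion of $G(n,\log^5 n/n)$ to show that two independent avoid-trees collide in only $\LittleO{1}$ internal vertices in expectation --- is the technical heart of the argument; once established, the product bound holds and the theorem follows.
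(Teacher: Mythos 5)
Your reduction is the same as the paper's: the only vertices that can miss out are the failed ones plus the healthy vertices whose root-path is blocked by a failed ancestor in \emph{both} trees, and one must show this doubly-blocked set has size $\LittleO{f}$. The gap is in the product bound. Writing $q(G,F,w)=\Probability{p_1(w)\cap F\neq\emptyset \mid G,F}$, conditional independence of the two runs gives $\Probability{w\in B\mid G,F}=q(G,F,w)^2$, which is what you use; but your uniform estimate $q\leq Lf/n\cdot(1+\LittleO{1})$ is false for the vertices that sit close to a failed node. A vertex $w$ with a failed neighbor $u$ (there are $\BigTheta{f\log^5 n}$ such $w$ in expectation, since $d=\log^5 n$) has $q(G,F,w)=\BigOmega{1/\log^5 n}$, because its parent in a given tree is essentially a uniformly random neighbor and may well be $u$; this is vastly larger than $Lf/n=\BigO{\log n\cdot n^{\epsilon'-1}}$. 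Consequently $\Expected{|B|}$ is not $\LittleO{1}$ but of order $f/\mathrm{polylog}(n)$, and your conclusion that $B=\emptyset$ \whp is too strong (note also that your bound would prove the theorem for all $\epsilon'<1/2$, whereas the threshold $\epsilon'<1/4$ enters the paper's argument essentially). The correct statement $|B|=\LittleO{f}$ still holds, but it requires exactly the bookkeeping you skipped: the paper splits failed nodes into \emph{friendly} ones (which contact four fresh neighbors in $T_2$, handled by a first-moment bound that needs $\epsilon'<1/4$) and \emph{non-friendly} ones, whose common descendants in both trees are tracked level by level via the sets $NF_i$, each contacted-in-both-trees event costing a factor $\BigO{1/\log^5 n}$, yielding $\sum_i|NF_i|=\BigO{f/\log^3 n}$.

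A secondary point: the ``delicate point'' you single out — shared ancestors of $p_1(w)$ and $p_2(w)$ near the root — is not where the difficulty lies, and \texttt{open-avoid} is not the mechanism that resolves it. Since $\deg(v)=\log^5 n$ and each run picks only $4^j$ vertices at depth $j$ independently, two independent root-paths coincide in an expected $\LittleO{1}$ internal vertices simply because the degree is large; no cross-run avoidance is assumed or needed. The real correlation to control is the one described above: both blocking events being driven by the \emph{same} nearby failed node, each with probability $\BigTheta{1/d}$ rather than $\BigTheta{Lf/n}$. You would also need to justify treating the failure times as worst-case; the paper does this by assuming all failures occur between Phase I and Phase II.
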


\begin{proof}
To analyze the process, we assume that all the failed nodes fail directly after
Phase I and before Phase II. This ensures that they are recorded as
communication partners for a number of nodes, but these failed nodes are not
able to forward a number of messages in Phase II to the leader. Let us denote
the two trees, which are constructed in Phase I of the two runs of the
algorithm, by $T_1$ and $T_2$, respectively. First we show that with
probability $1-o(1)$ there is no path from a failed node to another failed node
in any of these trees. Let us first build one of the trees, say $T_1$.
Obviously, at distance at most $(1-\epsilon')\log_4 n - \rho \log \log n$ from
the root, there will be less than $n^{1-\epsilon'}/\log^2 n$ nodes. Thus, with
probability $(1-n^{\epsilon'}/n)^{n^{1-\epsilon'}/\log^2 n} = 1-o(1)$, no node
will fail among these $n^{1-\epsilon'}/\log^2 n$ many nodes. This implies that
all the descendants of a failed node will have a distance of at most $\epsilon'
\log_4 n + O(\log \log n)$ to this node. Then, the number of descendants of a
failed node is $n^{\epsilon'} \log^{O(\log \log n)} n$, given that the largest
degree is $\log^{O(1)} n$. As above, we obtain that none of the failed nodes is
a descendant of another failed node with probability $1-o(1)$. 

For simplicity we assume that each failed node participates in at least one
push long-step, i.e., it contacts $4$ neighbors and forwards the message of the
leader (of $T_1$ and $T_2$, respectively) to these neighbors. Now we consider
the following process. For each failed node $v$, we run the push-phase for
$\epsilon' \log_4 n + O(\log \log n)$ long-steps. The other nodes do not
participate in this push phase, unless they are descendants of such a failed
node during these $\epsilon' \log_4 n + O(\log \log n)$ long-steps. That is, if
a node is contacted in some long-step $i$, then it will contact $4$ neighbors
in long-step $i+1$; in long-step $1$ only the failed nodes are allowed to
contact $4$ neighbors. Then, we add to each node $w \neq v$ in the generated
tree rooted at $v$ all nodes being at distance at most $\rho \log \log n$ from
$w$. Clearly, the number of nodes in such a tree rooted at $v$ together with
all the nodes added to it is $n^{\epsilon'} \log^{O(\log \log n)} n$. This is
then repeated a second time. The nodes attached to $v$ in the first run are
called the descendants of $v$ in $T_1$ in the following. Accordingly, the
corresponding nodes in the second run are called the descendants of $v$ in
$T_2$.

We consider now two cases. In the first case, let $v$ be a failed node, and
assume that $v$ contacts four neighbors in $T_2$, which have not been contacted
by $v$ in $T_1$. Such a failed node is called friendly. Furthermore, let
$F(T_1)$ be the set of nodes which are either failed or descendants of a failed
node in $T_1$. As shown above, $|F(T_1)| = n^{\epsilon'} \cdot n^{\epsilon'}
\log^{O(\log \log n)} n$, with probability $1-o(1)$. Let $v_i$, $i = 1, 2,
\dots $ be the descendants of $v$ in $T_2$, and denote by $A_i$ the event that
$v_i \not\in F(T_1)$. Then, 
\begin{align*}
\Pr[{\overline{A_i}} ~|~ A_1 \dots A_{i-1}] &\leq \frac{\log^{5} n \cdot n^{2\epsilon'} \log^{O(\log \log n)} n}{n} \\
&< \frac{n^{2\epsilon'} \log^{O(\log \log n)} n}{n} \enspace .
\end{align*}
Since $v$ has at most $n^{\epsilon'} \log^{O(\log \log n)} n$ descendants, none
of them belongs to $F(T_1)$, with probability at most $n^{3\epsilon'}
\log^{O(\log \log n)} n /n$. Thus, the expected number of descendants of
friendly failed nodes in $F_1 \cap F_2$, is $o(1)$, as long as $\epsilon' <
1/4$. 

In the second case, we denote by $NF_1$ the set of nodes, which are direct
descendants of non-friendly failed nodes. That is, $NF_1$ are the nodes which
are contacted by non-friendly failed nodes in step $1$ of the process described
above. Since $p= \log^5 n/n$, a failed node is non-friendly with probability
$O(1/\log^5 n)$. Using standard Chernoff bounds, we have $|NF_1| = O(|f|/\log^5
n)$, \whp. Let now $NF_2$ denote the set of nodes which are either contacted by
the nodes $NF_1$ in a push long-step of the original process in $T_1$ as well
as in $T_2$, or contact a node in $NF_1$ in a pull long-step of the original
process in both, $T_1$ and $T_2$. Similarly, $NF_{i+1}$ denotes the set of
nodes which are either contacted by the nodes $NF_i$ in a push long-step of the
original process in $T_1$ as well as in $T_2$, or contact a node in $NF_i$ in a
pull long-step of the original process in both $T_1$ and $T_2$. We show that
$|NF_{i+1}| < |NF_i|$ \whp, and for any non-friendly failed node $v$ there is
no descendant of $v$ in $NF_i$ with probability $1-o(1)$. The second result
implies that $|NF_{\rho \log n}| =0$ \whp, if $\rho$ is large enough. The first
result implies then that $\sum_{i=1}^{\rho \log n} |NF_i| < O(|f| / \log^3 n)$,
\whp. 

To show the first result we compute the expected value $E[NF_{i+1}]$ given
$NF_i$. Clearly, a node contacted by a node of $NF_i$ in $T_1$ is contacted in
$T_2$ as well with probability $O(1/\log^5 n)$. Similarly, a node which
contacted a node of $NF_i$ in $T_1$ contacts the same node in $T_2$ with
probability $O(1/\log^5 n)$. Simple balls into bins arguments imply that the
number of nodes, which may contact the same node, is at most $O(\log n/\log
\log n)$ \cite{HR90}. Applying now the method of bounded differences, we have
$|NF_{i+1}| < |NF_i|$ \whp, as long as $NF_i$ is large enough.

The arguments above imply that if the number of descendants of a non-friendly
failed node in $NF_i$ is at least $\rho \log n$ for some $\rho$ large enough,
then the number of descendants in $NF_{i+1}$ does not increase, \whp.
Furthermore, as long as the number of these descendants $NF_i$ is $O(\log n)$,
then there will be no descendants in $NF_{i+1}$ with probability $1-o(1)$, and
the statement follows. Summarizing, $$\sum_{i=1}^{\infty} NF_i = O(|f|/\log^3
n)$$ \whp, which concludes the proof.
\end{proof}

\section{Empirical Analysis}

We implemented our algorithms using the \Cpp~programming language and ran
simulations for various graph sizes and node failure probabilities using four
64 core machines equipped with 512~GB to 1~TB memory running on Linux. The
underlying communication network was implemented as an Erd\H{o}s-R\'enyi random
graph with $p = \log^2{n}/n$. We measured the number of steps, the average
number of messages sent per node, and the robustness of our algorithms. The
main result from \autoref{sect:main-result} shows that it is possible to reduce
the number of messages sent per node by increasing the running time. This effect
can also be observed in \autoref{plot:comparison}, where the communication
overhead of three different methods is compared. The plot shows the average
number of messages sent per node using a simple push-pull-approach,
\autoref{alg:fast-gossiping}, and
\autoref{alg:algorithm-3}. In the simple push-pull-approach, every node opens 
in each step a communication channel to a randomly selected neighbor, and each 
node transmits all its messages through all open channels incident to it. This is 
done until all nodes receive all initial messages.


\begin{figure}
\centering
\includegraphics[scale=1.5]{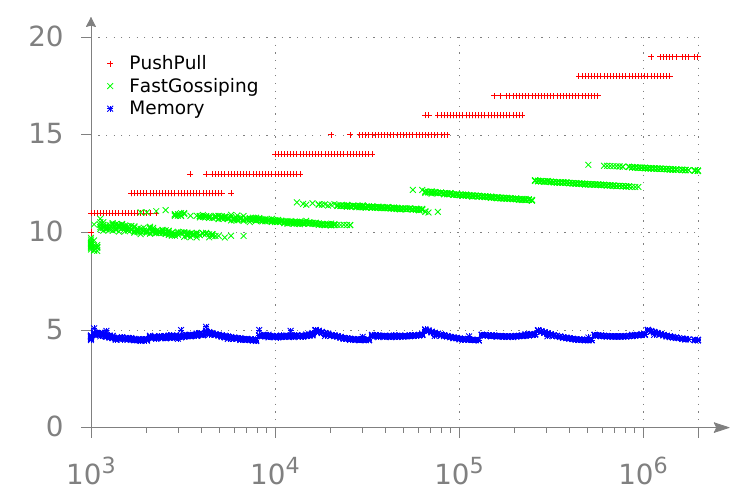}
\caption{Comparison of the communication overhead of the gossiping methods. The
$x$-axis shows the graph size, the $y$-axis the average number of messages sent
per node.}
\label{plot:comparison}
\end{figure}
\begin{figure}
\centering
\includegraphics[scale=1.5]{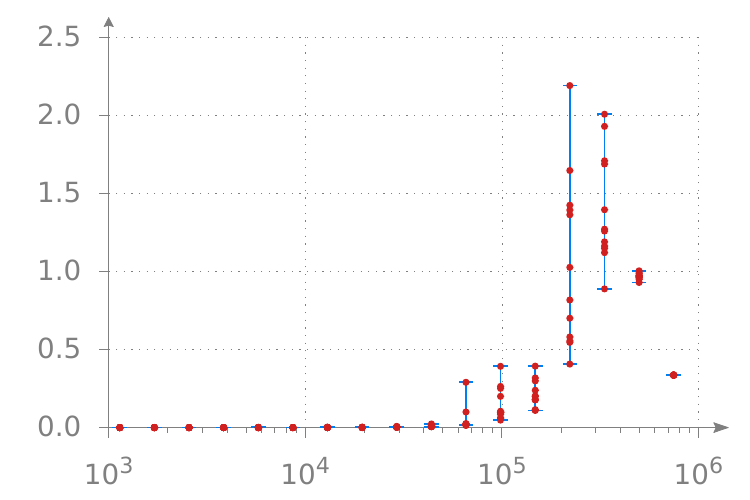}
\caption{Relative number of additional node failures in the memory model with a
graph size of 1,000,000. The $x$-axis shows the number of nodes marked \emph{failed} $F$,
the $y$-axis the ratio of additional uninformed nodes to $F$.}
\label{plot:robustness}
\end{figure}

\autoref{plot:comparison} shows an increasing gap between the message
complexity of \autoref{alg:fast-gossiping} and the simple push-pull
approach. Furthermore, the data shows that the number of messages sent per node
in \autoref{alg:algorithm-3} is bounded by $5$. According to the descriptions
of the algorithms, each phase runs for a certain number of steps. The
parameters were tuned as described in \autoref{tab:simulation-limits} to obtain
meaningful results.


\begin{table}
\centering
\begin{tabularx}{\columnwidth}{l|X|l}Phase&Limit&Value\\\hline
\multicolumn{3}{c}{ \autoref{alg:fast-gossiping} } \\
\hline I & number of steps & $\ceil{1.2\cdot\log\log{n}}$ \\
II & number of rounds & $\ceil{\log{n} / \log\log{n}}$ \\
II & random walk probability & $1.0 / \log{n}$ \\
II & number of random walk steps & $\ceil{\log{n} / \log\log{n} + 2}$ \\
II & number of broadcast steps & $\ceil{0.5\cdot\log\log{n}}$ \\
\hline \multicolumn{3}{c}{ \autoref{alg:algorithm-3} } \\
\hline I & first loop, number of steps \newline (rounded to a multiple of $4$) & $2.0 \cdot \log{n}$ \\
I & second loop, number of steps & $\floor{2.0 \cdot \log\log{n}}$ \\
II & number of steps & corresponds to Phase I \\
III & number of push steps & $\floor{\log{n}}$ \\
\end{tabularx}
\caption{The actual constants used in our simulation.}
\label{tab:simulation-limits}
\end{table}

The fact that the number of steps is a discrete value also explains the
discontinuities that can be observed in the plot. In the case of the simple
push-pull-approach, these jumps clearly happen whenever an additional
step is required to finish the procedure. Note, that since in this approach
each node communicates in every round, the number of messages per node
corresponds to the number of rounds.

In the case of
\autoref{alg:fast-gossiping}, we do not only observe these jumps, but also a
reduction of the number of messages per node between the jumps.
Let us consider such a set of test runs between two jumps. 
Within such an interval, the number of
random walk steps as well as broadcasting steps remain the same while $n$ increases. 
The number of random walks, however, is not fixed. Since each node starts a
random walk with a probability of $1/\log{n}$, the \emph{relative}
number of random walks decreases and thus also the average number of messages per node (see also \autoref{plot:comparison2} in
Appendix \ref{appendix:plots}).
This shows the impact of the random walk phase on the message
complexity.

The last phase of each algorithm was run until the entire graph was
informed, even though 
the nodes do not have this
type of global knowledge. 
From our
data we observe that the resulting number of steps is 
concentrated (i.e., for the same $n$ the number of steps to complete 
only differs by at most $1$ throughout all the simulations). Furthermore, no jumps of size 2 are
observed in the plot. Thus, overestimating the obtained running time 
by $1$ step would have
been sufficient to complete gossiping in all of our test runs.

To gain empirical insights into the behavior of the memory-based approach
described in \autoref{sect:memory} under the assumption of node failures, we
implemented nodes that are marked as failed. These nodes
simply do not store any incoming message and refuse to transmit messages to
other nodes.

The plot in \autoref{plot:robustness} shows the results of simulations on an
Erd\H{o}s-R\'enyi random graph consisting of 1,000,000 nodes with an expected
node degree of $\log^2{n} \approx 400$. Our simulation of \autoref{alg:algorithm-3} constructed
$3$ message distribution trees, independently. Afterwards we marked $F$ nodes
chosen uniformly at random as failed. The nodes were deactivated
before Phase II. The $x$-axis in
\autoref{plot:robustness} shows this number of nodes $F$. In the
simulation, we determined the number of initial messages that have been lost in addition to
the messages of the $F$ marked nodes. \autoref{plot:robustness} shows on the $y$-axis the
ratio of the lost messages of healthy nodes over $F$.
That is, zero indicates that no additional initial message was lost, whereas
2.0 indicates that for every failed node the 
initial messages of at least two additional healthy nodes were not present in any tree root after
Phase II.

Further plots showing additional graph sizes and various levels of detail can
be found in \autoref{appendix:plots}.



\bibliographystyle{abbrv}

\bibliography{gossiping}

\newpage
\appendix

\section{Additional Lemmas from \cite{Els06}}
\label{appendix:lemmas}

For some $u,v$ let $A_{u,v}$ denote the event that $u$ and $v$ are 
connected by an edge, 
and let $A_{u,v,l}$ denote the event that $u$ and $v$ share an edge \textbf{and} $u$ chooses $v$ in 
step $l$ (according to the random phone call model). 
In the next lemma, we deal with the distribution of the neighbors of a 
node $u$ in a graph $G(n,p)$,
after it has chosen $t$ neighbors, uniformly at random, in 
$t=O(\log n)$ consecutive steps. In particular, we show that the probability 
of $u$ being connected with some node $v$, not chosen within these $t$ steps, 
is not substantially modified after 
$O(\log n)$ steps.
%

\begin{lemmax}[1 of \cite{Els06}]
\label{auxlem}
Let $V=\{v_1 , \dots , v_n\}$ be a set of $n$ nodes and let 
every pair of nodes $v_i,v_j$ be connected with probability 
$p$, independently, where $p \geq \log^{\delta} n/n$ for some constant $\delta >
 2$.
If $t = O(\log n)$, $u,v \in V$, and \[ A(U_0,U_1,U_2) = 
\bigwedge\limits_{{0 < l \leq t} \atop {(v_i,v_j,l) \in U_0}}
A_{v_{i},v_j,l}
\bigwedge\limits_{(v_{i'},v_{j'}) \in U_1} 
A_{v_{i'},v_{j'}} 
\bigwedge\limits_{(v_{i''},v_{j''}) 
\in U_2}
\overline{A_{(v_{i''},v_{j''})}} \enspace ,\] 
for some $U_0 \subset V \times V \times \{0, \dots , t\}$ and $U_1,U_2 \subset V
 \times V$,
then it holds that 
$$\Pr\left[ (u,v) \in E~\left|~
A(U_0,U_1,U_2) \right. \right] = p(1 \pm O(t/d)) ,$$
for any $U_0, U_1,U_2$ satisfying the following properties: 
\begin{itemize}
\item $|U_0 \cap \{ (v_i,v_j,l) | v_j \in V\} | = 1$ for any $v_i \in V$ and $l 
\in \{ 0, \dots , 
t\}$,
\item $|U_1 \cap \{ (u,u') | u' \in V\} | = \Omega (d)$ and  
$|U_1 \cap \{ (v,v') | v' \in V\} | = \Omega (d)$, 
\item $(u,v) \not\in U_1 \cup U_2$, and $(u,v,i) \not\in U_0$ for any $i$.  
\end{itemize}
\end{lemmax}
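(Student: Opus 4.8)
The plan is to compute $\Probability{(u,v) \in E \mid A(U_0,U_1,U_2)}$ by Bayes' rule, reducing the statement to a likelihood‑ratio comparison of the conditioning event under the two hypotheses that the edge $(u,v)$ is present or absent. I model the randomness in the standard way: the graph is given by independent edge‑indicators $X_e \in \{0,1\}$ with $\Probability{X_e=1}=p$, and, conditioned on the graph, every node $v_i$ and every step $l\le t$ carries an independent choice $C_{i,l}$ uniform on $N(v_i)$. In this language the atoms of $A$ are a pure presence event $\{X_{(v_{i'},v_{j'})}=1\}$ for each pair in $U_1$, a pure absence event $\{X_{(v_{i''},v_{j''})}=0\}$ for each pair in $U_2$, and a combined event $\{X_{(v_i,v_j)}=1\}\cap\{C_{i,l}=v_j\}$ for each triple in $U_0$. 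Writing $X := X_{(u,v)}$, the identity
\[
\Probability{X=1 \mid A} = \frac{p\,R}{p\,R + (1-p)}, \qquad R := \frac{\Probability{A \mid X=1}}{\Probability{A \mid X=0}},
\]
shows that the whole lemma follows once I prove $R = 1 - \BigO{t/d}$.

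The core observation is that $A$ depends on $X$ only through the normalisations of the choice distributions of $u$ and $v$. Indeed, by the hypotheses $(u,v)\notin U_1\cup U_2$ and $(u,v,i)\notin U_0$ (and, symmetrically, $v$ is never recorded as choosing $u$), no edge‑event of $A$ constrains $X$, and no recorded choice of $u$ or $v$ equals the other endpoint. A choice $C_{i,l}$ with $i\notin\{u,v\}$ is uniform on $N(v_i)$, whose size is unaffected by the single edge $(u,v)$, so every such factor is identical under $X=0$ and $X=1$. The only quantities that move are $|N(u)|$ and $|N(v)|$, each changing by exactly one. To make this precise I condition on the configuration $G^-$ of all edges other than $(u,v)$, which is independent of $X$. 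For any $G^-$ compatible with the edge‑constraints of $A$ (and with the validity of the recorded choices), the edge‑satisfaction part of $A$ is identical for both values of $X$, and the choice probability factorises over nodes, so
\[
\frac{\Probability{A \mid G^-,\, X=1}}{\Probability{A \mid G^-,\, X=0}} = \left(\frac{N_u^-}{N_u^-+1}\right)^{|S_u|}\left(\frac{N_v^-}{N_v^-+1}\right)^{|S_v|},
\]
where $N_u^- = |N(u)\setminus\{v\}|$, $N_v^- = |N(v)\setminus\{u\}|$, and $S_u, S_v$ are the sets of steps at which $u$ respectively $v$ have a recorded choice, so $|S_u|,|S_v| \le t+1 = \BigO{t}$.

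The hypothesis $|U_1 \cap \{(u,u') : u' \in V\}| = \BigOmega{d}$ forces $\BigOmega{d}$ specific neighbors of $u$ (none of them $v$) to be present in every compatible $G^-$, whence $N_u^- = \BigOmega{d}$ deterministically, and likewise $N_v^- = \BigOmega{d}$. Consequently each factor above satisfies $\left(1 - \tfrac{1}{N_u^-+1}\right)^{|S_u|} = 1 - \BigO{t/d}$, using $(1-\BigO{1/d})^{\BigO{t}} = 1 - \BigO{t/d}$ together with $t/d = \LittleO{1}$ (valid since $t = \BigO{\log n}$ and $d \ge \log^{\delta} n$ with $\delta > 2$). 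Since $\Probability{A\mid G^-,X=1}$ equals this factor times $\Probability{A\mid G^-,X=0}$, the ratio $R$ is a weighted average over all compatible $G^-$ of a quantity lying pointwise in $[1-\BigO{t/d},\,1]$, so $R = 1 - \BigO{t/d}$. Substituting into the Bayes identity and expanding gives $\Probability{X=1 \mid A} = pR/(1 - p(1-R)) = p(1 \pm \BigO{t/d})$, as claimed. The step I expect to require the most care is the clean separation of the dependence on $X$: one must verify that conditioning on $G^-$ renders the edge‑satisfaction indicator of $A$ independent of $X$ and leaves exactly the two choice‑normalisation factors — in particular checking that no recorded choice of $u$ or $v$ is the other endpoint, so that the validity of the recorded choices does not itself couple to $X$.
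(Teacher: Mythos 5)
This statement is Lemma~1 of \cite{Els06}; the present paper merely restates it in the appendix for reference and contains no proof of it, so there is no in-paper argument to compare yours against, and I judge your proposal on its own terms. Your likelihood-ratio argument is correct and complete. Conditioning on the configuration $G^-$ of all edges other than $(u,v)$ is exactly the right move: the hypotheses that $(u,v)\notin U_1\cup U_2$ and that no triple with endpoint pair $\{u,v\}$ lies in $U_0$ make the edge-satisfaction part of $A(U_0,U_1,U_2)$ a function of $G^-$ alone; the choice events factorise into independent uniform picks whose normalisations depend on $X_{(u,v)}$ only through $|N(u)|$ and $|N(v)|$; and the $U_1$-conditions force $N_u^-,N_v^-=\Omega(d)$ on every compatible configuration, so the pointwise ratio lies in $[1-O(t/d),\,1]$ (using $t=O(\log n)=o(d)$, valid since $\delta>2$). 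Averaging over compatible $G^-$ and substituting into the Bayes identity then gives $p(1\pm O(t/d))$ as required. One point you were right to flag explicitly: as literally written the lemma only excludes triples $(u,v,i)$ from $U_0$, yet the conclusion fails if some $(v,u,i)\in U_0$, since that event by itself forces the edge to be present; the symmetric reading you adopt is therefore not a convenience but a necessity for the statement to be true, and is clearly what \cite{Els06} intends.
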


\section{Additional Lemmas from \cite{ES08}}

\begin{lemmax}[2.2 from \cite{ES08}, adapted version]\label{kssvlemma}
Let \autoref{alg:algorithm-3} be executed on the 
graph $G(n,p)$ 
of size $n$, where 
$p > \log^{2+\Omega(2)} n / n$
and $\rho$ is a properly chosen (large) constant. 
If $t=\log n + \frac{\rho}{2} \log \log n$, then $|H(t)| \leq n/\sqrt[4]{d}$
 and the 
number of 
transmissions after $t$ time steps is bounded by $O(n \log \log n)$. 
Additionally, if $t=\log n + \frac{3 \rho}{8} \log \log n$, we have $|H(t)| 
\geq n/\sqrt[4]{d}$. 
\end{lemmax}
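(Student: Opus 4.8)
The plan is to track the informed set $I(t)$ (equivalently the uninformed set $H(t) = V \setminus I(t)$) through two regimes — an exponential \emph{push-growth} phase and a \emph{saturation} phase — and to control the distribution of every freshly chosen edge by invoking \autoref{auxlem} under the principle of deferred decisions. Throughout, by the time we reach a step $t = \BigO{\log n}$ each node has revealed at most $\BigO{\log n}$ of its incident edges, so \autoref{auxlem} guarantees that the partner chosen along a new stub is connected as in an independent $G(n,p)$ draw up to a multiplicative $(1 \pm \BigO{t/d})$ factor; since $d \geq \log^{2+\epsilon}n$ this distortion stays $\LittleO{1}$ over the whole run, which is what licenses the step-by-step concentration bounds below.

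First I would analyze the push-growth regime, covering the initial steps during which $|I(t)| \leq n/\log^2 n$. When an informed node pushes along a fresh stub the partner is uninformed with probability $1 - \BigO{p\,|I(t)|/d + \log n/d}$: the term $p\,|I(t)|/d$ counts informed targets (via \autoref{auxlem}) and the additive $\log n/d$ bounds the fraction of already-used stubs by a balls-into-bins estimate \cite{RS98}. A Chernoff bound applied step by step then yields $|I(t+1)| \geq 2|I(t)|(1 - \LittleO{1})$ \whp, so that $|I(t)|$ climbs past $n/\log^2 n$ within $\log n - \BigO{\log\log n}$ steps.

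Next I would treat the saturation regime via $|H(t)|$. Once $|I(t)|$ approaches and then exceeds $n/2$, an uninformed node that opens a fresh channel (or is pushed to) is informed unless its partner is also uninformed, an event of probability at most $|H(t)|/n \cdot (1+\BigO{t/d}) + \BigO{\log n/d}$, so \autoref{auxlem} and Chernoff give the recursion
\[
|H(t+1)| \leq \BigO{|H(t)|\left(\frac{|H(t)|}{n} + \frac{\log n}{d}\right)} \enspace .
\]
Because the target threshold $n/\sqrt[4]{d}$ lies above the crossover $n\log n/d$, the quadratic term dominates and $H$ contracts super-geometrically, so the endgame from $|I| = n/\log^2 n$ up to $|H| \leq n/\sqrt[4]{d}$ takes $\BigTheta{\log\log n}$ steps (the per-step push factor only decays toward a smaller constant $>1$, and the pull contraction is faster). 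This is the source of the additive $\BigTheta{\log\log n}$ in the stated running times; choosing $\rho$ large lets the hidden constants fit between $\tfrac{3\rho}{8}\log\log n$ and $\tfrac{\rho}{2}\log\log n$. To separate the two times I would track the recursion from both sides: a matching \emph{lower} bound on $|H(t+1)|$ (the informed set at most doubles per push step, and $H$ shrinks by at most the quadratic rate) shows $|H|$ has not yet crossed $n/\sqrt[4]{d}$ at $\log n + \tfrac{3\rho}{8}\log\log n$, while the upper recursion shows it certainly has by $\log n + \tfrac{\rho}{2}\log\log n$. The transmission bound follows by summation: the exponential-growth steps cost $\BigO{n}$ by a geometric sum, while the $\BigTheta{\log\log n}$ endgame steps, during which $\BigTheta{n}$ already-informed nodes transmit each step, contribute $\BigO{n\log\log n}$ and dominate.

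The main obstacle is the conditioning. Because all steps reuse one underlying random graph, the events ``$v$ is informed at step $t$'' are correlated with the partially exposed edge set, so each per-step Chernoff argument is only valid once we certify that a fresh edge still behaves like an independent $G(n,p)$ edge up to the $(1 \pm \BigO{t/d})$ factor — exactly the content of \autoref{auxlem} — and once we check that this error, accumulated over $t = \BigO{\log n}$ steps with $d \geq \log^{2+\epsilon}n$, remains $\LittleO{1}$. Getting the constants $\tfrac{3\rho}{8}$ and $\tfrac{\rho}{2}$ to straddle the threshold $n/\sqrt[4]{d}$ is then a matter of bookkeeping the multiplicative slack in the two-sided contraction rate for $\rho$ chosen sufficiently large.
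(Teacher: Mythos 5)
First, a framing note: the paper never proves this statement — it is imported verbatim (in ``adapted'' form) from \cite{ES08} and only quoted in the appendix, so the comparison here is against the argument known from \cite{ES08} and against the paper's closest in-house analogues, the proofs of \autoref{lem_memory_I} and \autoref{lem_memory_II}. Your setup (deferred decisions, \autoref{auxlem} to certify that each freshly exposed stub behaves like an independent $G(n,p)$ edge up to a $(1\pm\BigO{t/d})$ factor, balls-into-bins for used stubs, and the split into a growth phase and a saturation phase) is exactly the right skeleton and matches what the paper does in \autoref{lem_memory_I}/\autoref{lem_memory_II}.

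The genuine gap is in the saturation phase, and it breaks precisely the second half of the lemma. You drive the endgame with the push-pull recursion $|H(t+1)| \leq \BigO{|H(t)|\left(|H(t)|/n + \log n/d\right)}$ and assert that this quadratic contraction is what makes the endgame last $\BigTheta{\log\log n}$ steps, with the matching lower bound on $|H|$ coming from ``$H$ shrinks by at most the quadratic rate.'' That cannot be right: doubly-exponential contraction starting from $|H|=\BigTheta{n}$ reaches $n/\sqrt[4]{d}=n\cdot 2^{-\frac{1}{4}\log d}$ after only $\BigO{\log\log d}=\BigO{\log\log\log n}$ steps (one needs $2^{k}$ to exceed $\tfrac14\log d$), so under your own recursion the threshold would be crossed long before $\log n + \tfrac{3\rho}{8}\log\log n$, and the claim $|H(t)|\geq n/\sqrt[4]{d}$ at that time would be false; symmetrically, the ``at most quadratic'' lower bound is vacuous after $\LittleOmega{\log\log\log n}$ steps. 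The actual mechanism is that in the relevant window the algorithm performs \emph{push transmissions only} — the pull (long-)steps of \autoref{alg:algorithm-3} and of \autoref{alg:leader-election} begin only after this point, which is also why the paper pairs \autoref{kssvlemma} with Lemmas 2.7/2.8 for the pull endgame. With push only, once $|I(t)|=\BigTheta{n}$ an uninformed node survives a step with probability roughly $e^{-|I(t)|/n}\geq e^{-1}-\LittleO{1}$, so $|H(t+1)|=\BigTheta{|H(t)|}$ with the upper constant bounded away from $1$ and the lower constant bounded away from $0$. It is this \emph{geometric} decay that produces the additive $\BigTheta{\log d}=\BigTheta{\log\log n}$ term and lets one place the crossing of $n/\sqrt[4]{d}$ between $\tfrac{3\rho}{8}\log\log n$ and $\tfrac{\rho}{2}\log\log n$ for a properly chosen $\rho$; your correct observation that $|I|$ at most doubles per step must be paired with this per-step constant-factor lower bound on $|H(t+1)|/|H(t)|$, not with the quadratic rate. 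A secondary, more routine gap: for the first $\BigO{\log\log n}$ steps $|I(t)|=\BigO{1}$ and a per-step Chernoff bound gives no \whp statement; the paper bootstraps through the locally tree-like structure (at most a constant number of violating edges, hence at least $3^{\rho\log\log n-1}$ informed nodes) before switching to concentration. Your transmission accounting ($\BigO{n}$ from the geometric sum in the growth phase plus $\BigO{n\log\log n}$ from the saturation steps) is consistent with the stated bound.
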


\begin{lemmax}[2.7 from \cite{ES08}, adapted version]
\label{lem_5} Let $|H(t)| \in [\log^q n,n/\sqrt[4]{d}]$ be the number of
uninformed nodes in $G(n,p)$ at some time $t=O(\log n)$, where $q$ is a large 
constant, and let \autoref{alg:algorithm-3} be executed on this graph. 
Then, $|H(t
+3 \rho \log \log n/8)| \leq 
\log^q n$, w.h.p., provided that $\rho$ is large enough.
\end{lemmax}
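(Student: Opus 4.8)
The plan is to treat this as the final, pull-dominated stage of broadcasting and to show that the uninformed set $H(t)$ contracts \emph{quadratically}, so that the whole interval $[\log^q n,\, n/\sqrt[4]{d}]$ is crossed in only \BigO{\log\log n} steps. The guiding observation is that once at most $n/\sqrt[4]{d}$ nodes are uninformed, an uninformed node that opens a channel via \texttt{open-avoid} and pulls reaches an \emph{informed} neighbor, and hence becomes informed, with probability $1-\BigO{|H(t)|/n}$; thus in expectation $|H(t+1)|\approx |H(t)|^2/n$, the super-exponential decay that finishes broadcasting quickly.

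I would first fix the deferred-decision bookkeeping. Up to this point every node has opened only \BigO{\log n} channels (the number of \emph{incoming} channels at any node is \BigO{\log n} by the standard balls-into-bins bound \cite{RS98}), so each node still has \BigOmega{d} unexposed stubs; moreover \texttt{open-avoid} forbids the four most recent contacts, so over the \BigO{\log\log n} steps of this phase the probability that any node ever re-selects an earlier contact is \BigO{(\log\log n)^2/d}=\LittleO{1}. Conditioned on this, Lemma~1 of \cite{Els06} applies: given the history of exposed edges, a fresh pull lands on an essentially uniform random vertex, its endpoint being adjacent with probability $p\bigl(1\pm\BigO{\log n/d}\bigr)$. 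Consequently an uninformed node stays uninformed in one step with probability $(|H(t)|/n)(1\pm\LittleO{1})$, and since distinct nodes pull along disjoint fresh stubs these events are nearly independent.

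A binomial-tail (Chernoff) bound then gives $|H(t+1)|\le (1+\LittleO{1})\,|H(t)|^2/n$ in each step, \whp. Iterating and writing $y_t=\log\bigl(n/|H(t)|\bigr)$, the quantity $y_t$ essentially doubles per step, starting near $\tfrac14\log d$ and having to reach $\log n-q\log\log n$; this requires $\log_2\BigTheta{\log n/\log d}=\BigO{\log\log n}$ steps, which is where the bound $3\rho\log\log n/8$ comes from once $\rho$ is taken large enough. A union bound over these \BigO{\log\log n} steps then yields the claim.

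The main obstacle is to keep the per-step near-uniformity and independence honest as $H(t)$ shrinks. The informed set is correlated with the already-exposed part of the graph, and one must argue through deferred decisions that conditioning on earlier rounds does not bias the current pull away from uniform; this is precisely what Lemma~1 of \cite{Els06} supplies, but only under the invariant that every node retains \BigOmega{d} fresh stubs, which has to be maintained throughout the phase. A second delicate point is concentration near the lower endpoint: when $|H(t)|$ is as small as $\log^q n$ the expected number of nodes that fail to get informed is only polylogarithmic, yet because $q$ is a large constant the binomial tail is still $n^{-\LittleOmega{1}}$, so the \whp\ statement survives the union bound --- and this is exactly why the interval is cut off at $\log^q n$ rather than pushed down to \BigO{1}. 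A final technical nuisance is ensuring that the additive \BigO{\log n/d} slack from Lemma~1 of \cite{Els06} stays dominated by the $|H(t)|/n$ term over the whole range, i.e.\ that the contraction remains genuinely quadratic rather than degrading into slow geometric decay.
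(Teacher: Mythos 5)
First, a point of reference: the paper itself does not prove this statement --- it is quoted (in adapted form) from \cite{ES08} in the appendix, so there is no in-paper proof to compare against. Your proposal reconstructs the natural argument: quadratic contraction of $H(t)$ under pull, so that $\log\bigl(n/|H(t)|\bigr)$ doubles each step and the interval $[\log^q n,\,n/\sqrt[4]{d}]$ is crossed in $\BigO{\log\log n}$ steps. That is the right skeleton, and your handling of the concentration issue near the lower endpoint is sensible.

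There is, however, a genuine gap, and it sits exactly where you flag a ``final technical nuisance'': the additive $\BigO{\log n/d}$ slack is \emph{not} dominated by $|H(t)|/n$ over the whole range, and for the graph densities admitted here this is fatal to the argument as written. The uniform per-node bound obtainable from Lemma~1 of \cite{Els06} (and which this paper itself uses in the proof of \autoref{lem_memory_II}) is $\Pr[v \text{ stays uninformed}] \leq \BigO{|H(t)|/n + \log n/d}$; the two terms cross at $|H(t)| \approx n\log n/d$. For $d = \log^{2+\epsilon} n$ (or even $d = \log^5 n$, the value used in \autoref{theo_memory_II}) this crossover occurs at $|H(t)| = n/\log^{\BigO{1}} n$, i.e.\ essentially immediately after entering the interval $[\log^q n,\, n/\sqrt[4]{d}]$. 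Below the crossover your recursion degrades to geometric decay with ratio $\BigO{\log n/d} = \log^{-\BigO{1}} n$ per step, and driving $|H(t)|$ from $n/\log^{\BigO{1}} n$ down to $\log^q n$ at that rate costs $\BigTheta{\log n/\log\log n}$ steps --- far more than the claimed $3\rho\log\log n/8$. Nor can the additive term be removed by a worst-case union bound over candidate uninformed sets: uniformly over all sets of size $s$ one can only guarantee $\BigO{ps^2 + s\log n}$ internal edges, which reproduces the same $s\log n/(pn)$ loss. Closing the gap requires a genuinely finer argument --- which is what Lemmas 2.7/2.8 of \cite{ES08} actually supply --- showing that the \emph{particular} set $H(t)$ generated by the process behaves like a typical random set, so that the number of uninformed neighbours of an uninformed node is $p|H(t)|(1+\LittleO{1})$ and the contraction stays quadratic all the way down; your proposal asserts this rather than establishing it.
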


\begin{lemmax}[2.8 from \cite{ES08}, adapted version]
\label{lem_6} Let $|H(t)| \leq \log^q n$ be the number of uninformed
nodes in $G(n,p)$ at time $t = O(\log n)$, and let
\autoref{alg:algorithm-3} be executed on this graph. 
Then 
within additional $\rho \log \log n/8$ steps all nodes in the 
graph will be informed, w.h.p., whenever $\rho$ is large enough. 
\end{lemmax}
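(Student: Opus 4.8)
The plan is to show that once $|H(t)| \leq \log^q n$, the uninformed nodes are so sparse, and so densely connected into the (overwhelmingly large) informed set, that essentially all of them are pulled in within a single pull long-step — comfortably inside the $\rho\log\log n/8$ allotted steps. Throughout I follow the argument of Lemma 2.8 in \cite{ES08}, adapting only the step counts to \autoref{alg:algorithm-3}. The graph has expected degree $d = pn \geq \log^{2+\epsilon} n$ (with $d = \log^5 n$ for the stated $p$), and since every node opens only $\BigO{\log\log n}$ outgoing channels during the pull long-steps and receives only $\BigO{\log n}$ incoming channels by a balls-into-bins argument \cite{RS98}, each node still has $\BigOmega{d}$ free communication stubs available at time $t$.

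First I would control the neighbourhood structure of the uninformed set. The key tool is \autoref{auxlem} (Lemma 1 of \cite{Els06}): conditioned on the whole history of the algorithm up to step $t = \BigO{\log n}$ — which in particular determines the set $H(t)$ — any not-yet-revealed pair $(v,w)$ is present with probability $p\,(1 \pm \BigO{t/d}) = p\,(1 \pm \LittleO{1})$, since $t/d = \BigO{\log n/\log^{2+\epsilon} n} = \LittleO{1}$. Summing this over the at most $|H(t)| \leq \log^q n$ uninformed nodes, the expected number of uninformed neighbours of a fixed $v \in H(t)$ is at most $p\,|H(t)|\,(1+\LittleO{1}) = \BigO{\log^{q+5} n / n}$, while the at most $\BigO{\log\log n}$ previously contacted uninformed neighbours of $v$ contribute only a lower-order correction. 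Consequently, when $v$ performs an \texttt{open-avoid} and opens a channel to a uniformly chosen free neighbour, that neighbour is uninformed with probability at most $\left(|H(t)|/n\right)(1+\LittleO{1})$, i.e. the fraction of free stubs leading back into $H(t)$.

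Next I would bound the per-node failure probability over one pull long-step. In a single long-step $v$ contacts four distinct free neighbours, and $v$ fails to be pulled into the informed set only if all four land in $H(t)$, an event of probability at most $\left(|H(t)|/n\right)^{4}(1+\LittleO{1}) = \BigO{\log^{4q} n / n^{4}}$ by the conditional near-independence supplied by \autoref{auxlem}. A union bound over the at most $\log^q n$ uninformed nodes then yields
\begin{equation*}
\Probability{\text{some node of } H(t) \text{ is still uninformed}} \leq \log^{q} n \cdot \BigO{\frac{\log^{4q} n}{n^{4}}} = \BigO{\frac{\log^{5q} n}{n^{4}}} = n^{-\BigOmega{1}} \enspace .
\end{equation*}
Hence, \whp, $H$ becomes empty after a single pull long-step, which is fewer than $\rho\log\log n/8$ steps once $\rho$ is a large enough constant and $n$ is large; this establishes the lemma, the generous step budget merely absorbing the lower-order corrections and the full degree range $d \geq \log^{2+\epsilon} n$.

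The main obstacle is the conditioning. Because the very event ``$v$ is still uninformed at time $t$'' is correlated with $v$'s neighbourhood — a trapped node is precisely one that repeatedly contacted uninformed partners — one cannot naively treat the edges incident to $H(t)$ as fresh $G(n,p)$ edges. The whole weight of the argument therefore rests on invoking \autoref{auxlem} correctly: I must verify that each uninformed node retains $\BigOmega{d}$ revealed present edges and $\BigOmega{d}$ fresh stubs, that the pairs inside $H(t)$ being counted are genuinely unrevealed, and that the $\BigO{\log n}$ already-used stubs together with the handful of previously revealed uninformed–uninformed edges never dominate the $\left(|H(t)|/n\right)^{4}$ failure estimate. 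Once this bookkeeping is in place the conclusion follows with room to spare.
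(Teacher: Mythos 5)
First, a framing remark: the paper does not prove this statement at all --- it is imported (in adapted form) from \cite{ES08} and used as a black box, so the only ``paper proof'' to compare against is the cited source, whose argument is built around iterating over $\Theta(\log\log n)$ pull steps. That alone should make you suspicious of your conclusion that a \emph{single} pull long-step empties $H(t)$ \whp: if that were true, the $\rho\log\log n/8$ budget and the insistence that $\rho$ be a large constant would be pointless.

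The genuine gap is precisely the item you defer to ``bookkeeping'' in your last paragraph, and it is fatal rather than lower-order. \autoref{auxlem} controls only \emph{unrevealed} pairs; it says nothing about edges the process has already exposed. An uninformed node $v$ has, by the very definition of being uninformed, contacted only uninformed nodes in all of its earlier pull steps (and may additionally have been contacted by other uninformed nodes), and by the time this lemma is invoked there have already been $\BigOmega{\rho\log\log n}$ such steps. These are revealed edges of $v$ leading into $H$, up to $\Theta(\log\log n)$ of them, of which \texttt{open-avoid} excludes only the last four; whether their endpoints are still uninformed at time $t$ is exactly the kind of event the lemma is trying to control, so assuming they have meanwhile become informed is circular. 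Quantitatively, the per-contact probability of hitting such a revealed uninformed neighbour can be as large as $\Theta(\log\log n/d)$, which with $d=\log^5 n$ dwarfs your fresh-edge term $|H(t)|/n \leq \log^q n/n$ by a factor of order $n/(d\cdot \mathrm{polylog}\,n)$ --- so the ``lower-order correction'' is in fact the dominant term. The honest per-long-step failure bound obtainable this way is only $\BigO{\left(\log\log n/d\right)^4} = \log^{-20+\LittleO{1}}n$, polylogarithmically small rather than $n^{-\BigOmega{1}}$; and since $q$ is an unspecified \emph{large} constant (it comes out of \autoref{lem_5}), the union bound over $\log^q n$ uninformed nodes need not even be $\LittleO{1}$, let alone \whp. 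Repairing this requires what the cited proof actually supplies: a structural argument over $\Theta(\log\log n)$ steps showing that the small clusters of mutually-acquainted uninformed nodes dissolve (e.g., tracking how many uninformed nodes retain revealed uninformed neighbours from one long-step to the next), not a single application of the fresh-edge estimate followed by a union bound.
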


\newpage
\section{Further Plots and Empirical Data}
\label{appendix:plots}
\subsection{Push-Pull-Algorithm}

The algorithm labeled \emph{push-pull} in \autoref{plot:comparison} is a simple
procedure, where in each step every node opens a connection, performs
\texttt{pushpull}, and closes the connection. For the sake of completeness of
this paper it is described in pseudocode in \autoref{alg:simple-push-pull}.

\begin{algorithm}
\SetAlgoVlined
\SetKwFunction{Push}{push}
\SetKwFunction{Pull}{pull}
\SetKwFunction{PushPull}{pushpull}
\SetKwFunction{Add}{add}
\SetKwFunction{Empty}{empty}
\SetKwFunction{Pop}{pop}
\SetKwFor{At}{at each}{do in parallel}{end}
\SetKwFor{ForEach}{for each}{do}{end}
\SetKwFor{With}{with probability}{do}{end}
\SetKwProg{AlgorithmPI}{Phase \RNum{1}}{}{end}
\SetKwProg{AlgorithmPII}{Phase \RNum{2}}{}{end}
\SetKwProg{AlgorithmPIII}{Phase \RNum{3}}{}{end}
\SetKw{False}{false}
\SetKw{True}{true}

\SetKwFor{AtNode}{at each node}{do in parallel}{end}
\SetKwFor{For}{for}{do}{end}
\SetKwFor{ForRound}{for round}{do}{end}
\For{$t = 1$ \KwTo $\BigO{\log{n}}$}{
	\At{node $v$}{
		\PushPull{$m_v$}\;
	}
}

\caption{A simple push-pull algorithm to perform randomized gossiping. The \texttt{pushpull} operation is preceeded and followed by opening and closing a channel, respectively.}
\label{alg:simple-push-pull}
\end{algorithm}

\subsection{Robustness Analysis}

The following plots shown in \autoref{plot:robust-100k-500k} were run on graphs
of size 100,000 and 500,000 nodes, respectively. They visualize the results of
the same type of simulation as presented in \autoref{plot:robustness}.

More detail can be obtained from the plots shown in
\autoref{plot:robust-dense-100k-500k}, where we ran our simulation with a
higher resolution. That is, we ran a series of at least $5$ tests per number of
failed nodes. The number of failed nodes was chosen from the set $\{0,
100, 200, 300, \dots \}$.  We used graphs of two different sizes in these 6
plots. The left column shows the results for a graph consisting of 100,000
nodes and the right column for 500,000 nodes. The $x$-axis shows the number of
failed nodes, the $y$-axis shows the percentage of runs in which more than
a certain number $T$ of additional nodes failed. This number is $T=0$ for the
top row, $T=10$ for the middle row and $T=100$ for the bottom row. For example,
this tells us that on a graph of size 100,000 more than 4000 nodes could fail
and still the number of additional uninformed nodes was less than 100 in all
test runs.

\begin{figure}[b]
\centering
\includegraphics[scale=1.5]{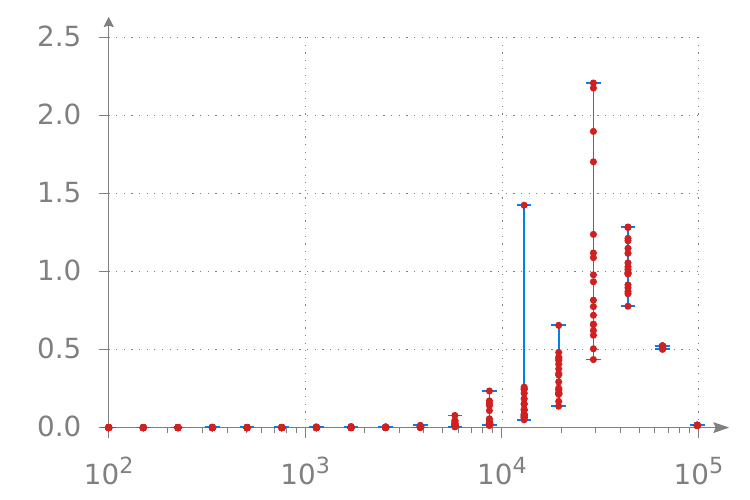}\\

\includegraphics[scale=1.5]{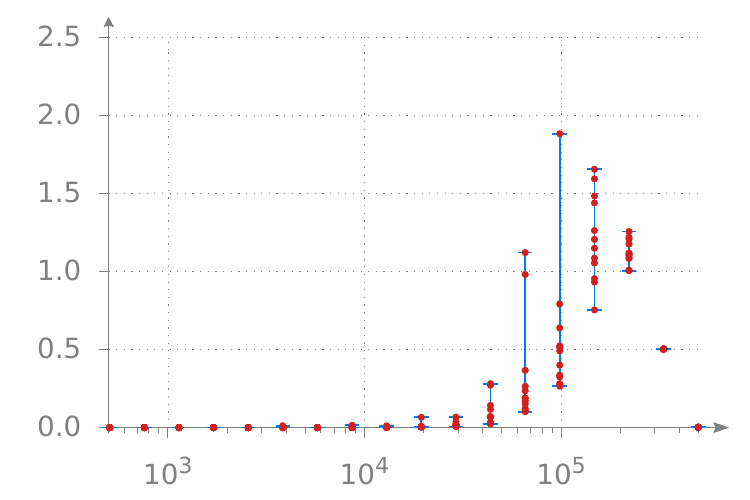}
\caption{Relative number of additional node failures in the memory model with
graphs of sizes 100,000 (top) and 500,000 (bottom). The $x$-axis shows the
number of failed nodes $F$, the $y$-axis the ratio of additional uninformed
nodes to $F$.}
\label{plot:robust-100k-500k}
\end{figure}

\begin{figure}[b]
\centering
\includegraphics[scale=1.5]{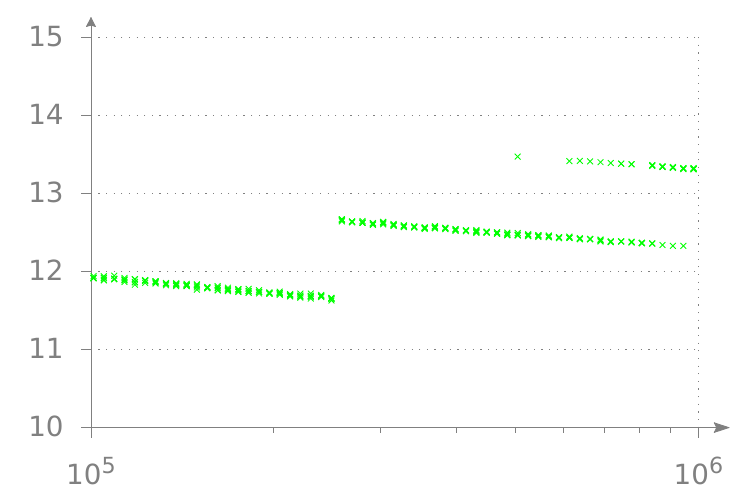}
\caption{A more detailed view of the data presented in \autoref{plot:comparison} for \autoref{alg:fast-gossiping}. The $x$-axis shows the graph size, the $y$-axis the number of messages sent per node.}
\label{plot:comparison2}
\end{figure}

\begin{figure*}[p]
\noindent\begin{minipage}[t]{0.475\textwidth}
\centering
\includegraphics[page=1]{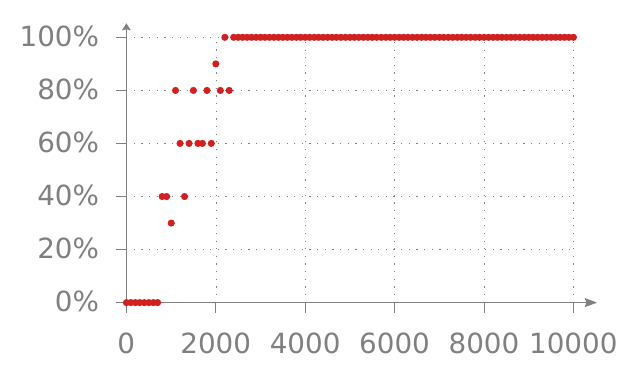}\\
\includegraphics[page=2]{plots/robust-dense-100k}\\
\includegraphics[page=3]{plots/robust-dense-100k}\\
\end{minipage}\hfill
\begin{minipage}[t]{0.475\textwidth}
\centering
\includegraphics[page=1]{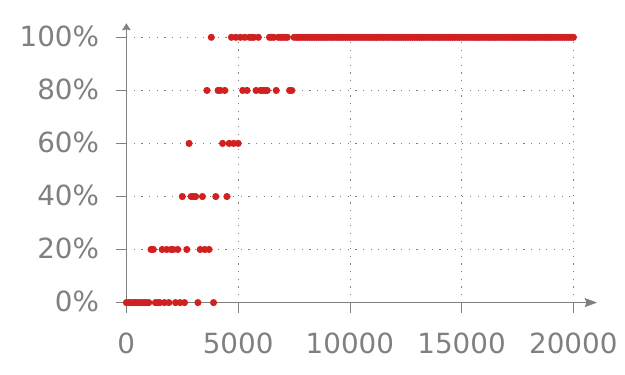}\\
\includegraphics[page=2]{plots/robust-dense-500k}\\
\includegraphics[page=3]{plots/robust-dense-500k}\\
\end{minipage}
\caption{Detailed plot showing the robustness of \autoref{alg:algorithm-3} on
graphs of sizes 100,000 (left column) and 500,000 (right column). The $x$-axis
shows the number of failed nodes, the $y$-axis shows the percentage of runs
in which more than $T$ additional numbers remained uninformed. In the top $T =
0$, in the middle $T = 10$ and in the bottom $T = 100$.}
\label{plot:robust-dense-100k-500k}
\vspace{\baselineskip}
\end{figure*}

\end{document}